\tikzstyle{vertex}=[circle,fill=black!25,minimum size=20pt,inner sep=0pt]
\tikzstyle{adversary}=[circle,fill=black!60,minimum size=20pt,inner sep=0pt]
\theoremstyle{plain}
\newtheorem{apxlemma}{Lemma}[section]
\crefname{apxlemma}{Lemma}{Lemmas}
\newtheorem{assumption}{Assumption}{\bfseries}{\upshape}
\crefname{assumption}{Assumption}{Assumptions}
\newcommand{\gro}{G_1} 
\newcommand{\grt}{G_2} 
\newcommand{\LINE}{L} 
\newcommand{\PLN}{\psi} 
\newcommand{\PL}[1]{\PLN(#1)} 
\newcommand{\HEAVY}{W_{\text{H}}} 
\newcommand{\LIGHT}{W_{\text{L}}} 
\newcommand{\opto}{O_1} 
\newcommand{\optt}{O_2} 
\newcommand{\algo}{M_1} 
\newcommand{\algt}{M_2} 
\newcommand{\core}{\overline{M}}
\newcommand{\g}[2][\opto]{\operatorname{gain}_{#1}(#2)} 
\newcommand{\child}[1]{\operatorname{ch}(#1)} 
\newcommand{\ancestor}[1]{\operatorname{anc}(#1)} 
\newcommand{\descendant}[1]{\operatorname{desc}(#1)} 
\newcommand{\algKnownSto}{Algorithm~1.1\xspace} 
\newcommand{\algKnownStt}{Algorithm~1.2\xspace} 
\newcommand{\algUnknownSto}{Algorithm~2.1\xspace} 
\newcommand{\algUnknownStt}{Algorithm~2.2\xspace}
\definecolor{TUMBlau}{RGB}{0,82,147} 
\definecolor{TUMRot}{RGB}{202,033,063}
\tikzset{cross/.style={cross out, draw, minimum size=2*(#1-\pgflinewidth), inner sep=0pt, outer sep=0pt}}
\tikzstyle{vertex}=[circle,fill=black!25,minimum size=20pt,inner sep=0pt]
\tikzstyle{adversary}=[cross,color=black!60,ultra thick, minimum size=20pt,inner sep=0pt]
\tikzset{extremely thick/.style={line width=1.5mm}}
\newcommand{\ulrike}[1]{\ifthenelse{\boolean{showNotes}} {\todo[color=yellow]{#1}}{}
}
\newcommand{\jannik}[1]{\ifthenelse {\boolean{showNotes}}{\todo[color=blue!30]{#1}}{}
} 
\newcommand{\jose}[1]{\ifthenelse {\boolean{showNotes}}{\todo[color=green!30]{#1}}{}
} 
\title{Maintaining Perfect Matchings at Low Cost} 
\titlerunning{Maintaining Perfect Matchings at Low Cost}
\author{Jannik Matuschke}{Research Center for Operations Management, KU Leuven, Belgium}{jannik.matuschke@kuleuven.be}{}{}
\author{Ulrike Schmidt-Kraepelin}{Institute of Software Engineering and Theoretical Computer Science, TU Berlin, Germany}{u.schmidt-kraepelin@tu-berlin.de}{}{}
\author{Jos\'e Verschae}{Institute of Engineering Sciences, Universidad de O'Higgins, Chile}{jose.verschae@uoh.cl}{}{}
\authorrunning{J.\,Q. Public and J.\,R. Public}
\authorrunning{J.\,Matuschke, U.\,Schmidt-Kraepelin and J.\,Verschae}
\keywords{matchings, robust optimization, approximation algorithms}
\begin{document}

\maketitle

\begin{abstract}
The min-cost matching problem suffers from being very sensitive to small changes of the input. Even in a simple setting, e.g., when the costs come from the metric on the line, adding two nodes to the input might change the optimal solution completely. On the other hand, one expects that small changes in the input should incur only small changes on the constructed solutions, measured as the number of modified edges. We introduce a two-stage model where we study the trade-off between quality and robustness of solutions. In the first stage we are given a set of nodes in a metric space and we must compute a perfect matching. In the second stage $2k$ new nodes appear and we must adapt the solution to a perfect matching for the new instance. 

We say that an algorithm is $(\alpha,\beta)$-robust if the solutions constructed in both stages are $\alpha$-approximate with respect to min-cost perfect matchings, and if the number of edges deleted from the first stage matching is at most~$\beta k$. Hence, $\alpha$ measures the quality of the algorithm and $\beta$ its robustness. In this setting we aim to balance both measures by deriving algorithms for constant $\alpha$ and $\beta$.
We show that there exists an algorithm that is $(3,1)$-robust for any metric if one knows the number $2k$ of arriving nodes in advance. For the case that $k$ is unknown the situation is significantly more involved. We study this setting under the metric on the line and devise a $(10,2)$-robust algorithm that constructs a solution with a recursive structure that carefully balances cost and redundancy. 
\end{abstract}

\section{Introduction}

Weighted matching is one of the founding problems in combinatorial optimization, playing an important role in the settling of the area. The work by Edmonds~\cite{1965:Edmonds} on this problem greatly influenced the role of polyhedral theory on algorithm design~\cite{2012:Pulleyblank}. On the other hand, the problem found applications in several domains~\cite{1983:Ball,1994:Bell,1992:Derigs,1990:Olafsson,1981:Reingold}. In particular routing problems are an important area of application, and its procedures often appeared as subroutines of other important algorithms, the most notable being Christofides' algorithm~\cite{1976:Christofides} for the \textit{traveling salesperson problem (TSP)}. \newpage

An important aspect of devising solution methods for optimization problems is studying the sensitivity of the solution towards small changes in the input. This sensitivity analysis has a long history and plays an important role in practice~\cite{1977:Geoffrion}. Min-cost matching is a problem that has particularly sensitive optimal solutions. Assume for example that nodes lie on the real line at points $\ell$ and $\ell+1-\varepsilon$ for some $0<\varepsilon<1$ and all $\ell\in\{1,\ldots,n\}$, see \cref{fig:motivationKKnown}. The min-cost matching, for costs equal the distance on the line, is simply the edges $\{\ell,\ell+1-\varepsilon\}$. However, even under a minor modification of the input, e.g., if two new nodes appear at points $1-\varepsilon$ and $n+1$, the optimal solution changes all of its edges, and furthermore the cost decreases by a $\Theta(1/\varepsilon)$ factor.
Rearranging many edges in an existing solution is often undesirable and may incur large costs, for example in an application context where the matching edges imply physical connections or binding commitments between nodes. A natural question in this context is whether we can avoid such a large number of rearrangements by constructing a \emph{robust} solution that is only slightly more expensive. In other words, we are interested in studying the trade-off between robustness and the cost of solutions.

\begin{figure}[t]
\centering
\begin{tikzpicture}
  \node[vertex,minimum size=4](o1) at (0,0) {};
  \node[vertex,minimum size=4, right = 1 of o1](o2){};
  \node[vertex,minimum size=4, right = 0.2 of o2](o3){};
  \node[vertex,minimum size=4, right = 1 of o3](o4){};
  \node[vertex,minimum size=4, right = .2 of o4](o5){};
  \node[vertex,minimum size=4, right = 1 of o5](o6){};
  \node[vertex,minimum size=4, right = .2 of o6](o7){};
  \node[vertex,minimum size=4, right = 1 of o7](o8){};  
  \node[vertex,minimum size=4, right = .2 of o8](o9){};
  \node[vertex,minimum size=4, right = 1 of o9](o10){};
  \node[vertex,minimum size=4, right = .2 of o10](o11){};
  \node[vertex,minimum size=4, right = 1 of o11](o12){};
          
        \node[adversary, left = .2 of o1, minimum size=4] (a1) {};  
        \node[adversary,   right = .2 of o12, minimum size=4] (a4) {};  
        
          \foreach \source/ \dest  in {a1/o1, o2/o3, o4/o5, o6/o7, o8/o9, o10/o11,o12/a4}
        \draw[TUMBlau, dotted, ultra thick] (\source) to node{} node[below] {\scriptsize $\varepsilon$} (\dest);
        
          \foreach \source/ \dest  in {o1/o2, o3/o4, o5/o6,o7/o8,o9/o10,o11/o12}
        \draw[TUMBlau, ultra thick] (\source) to node{} node[below] {\scriptsize $1-\varepsilon$} (\dest);
\end{tikzpicture}
%
\caption{Example instance on the line. Vertices in Stage $1$ are depicted by light grey dots and second stage arrivals are indicated as dark grey crosses. First and second state optimum are depicted by solid and dotted edges, respectively. The arrival of two new vertices leads to a significant drop in the cost of the optimum.}\label{fig:motivationKKnown}
\end{figure}
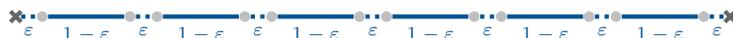

We consider a two-stage robust model with recourse. Assume we are given an underlying metric space $(\mathcal{X},c)$. The input for the \emph{first stage} is a complete graph $G_1$ whose node set $V(G_1)$ is a finite, even subset of $\mathcal{X}$. The cost of an edge $\{v,w\}$ is given by the corresponding cost $c(v,w)$ in the metric space\footnote{Graphs with arbitrary cost functions do not allow for $(O(1),O(1))$-robust matchings in general, e.g., consider a variant of the example in \cref{fig:motivationKKnown} in which all omitted edges have infinite cost.}. In a second stage we get an extended complete graph $G_2$ containing all nodes in $V(G_1)$ plus $2k$ additional nodes. As before, costs of edges in $G_2$ are given by the underlying metric. In the first stage we must create a perfect matching $M_1$ for $G_1$. In the second stage, after $G_2$ is revealed, we must adapt our solution by constructing a new perfect matching $M_2$ for $G_2$, called the \emph{second stage reply}. We say that a solution $M_1$ is two-stage $(\alpha,\beta)$-robust if for any instantiation of the second stage there exists a solution $M_2$ such that two conditions hold. First, the total cost of edges in $M_i$ must satisfy $c(M_i)\le \alpha\cdot c(O_i)$ for $i\in\{1,2\}$, where $O_i$ denotes a min-cost perfect matching in $G_i$. Second, it must hold that $|M_1\setminus M_2|\le \beta k$. An algorithm is two-stage $(\alpha,\beta)$-robust if, given $\gro$ and $c$, it returns a two-stage $(\alpha,\beta)$-robust matching and, given the set of new arrivals, a corresponding second stage reply. We refer to $\alpha$ as the \emph{competitive factor} and $\beta$ as the \emph{recourse factor} of the algorithm. Our main goal is to balance cost and recourse, and thus we aim to obtain algorithms where~$\alpha$ and~$\beta$ are constants.

Our model is closely related to an online model with recourse. Consider a graph whose nodes are revealed online two by two. Our objective is to maintain a perfect matching at all times. As above, irrevocable decisions do not allow for constant competitive factors. This suggests a model where in each iteration we are allowed to modify a constant number of edges. An online algorithm that maintains an $\alpha$-approximation at all time while deleting at most $\beta$ edges per iteration can be easily transformed into a two-stage $(\alpha,\beta)$-robust algorithm. Given an instance of the two-stage model, we choose an arbitrary order for the nodes available in the first stage and create $M_1$ by following the updates proposed by an online algorithm. Then, we repeat the procedure for the arrivals in Stage $2$. Thus, our two-stage model is also the first necessary step for understanding this more involved online model. Megow et al. \cite{2016:Megow} study a similar online model for minimum spanning trees and the TSP in metric graphs. After giving a $(1+\varepsilon)$-competitive algorithm with recourse factor $\frac{1}{\varepsilon} \log (\frac{1}{\varepsilon})$ for the former problem, they are able to derive a $(2+\varepsilon)$-competitive algorithm with constant recourse factor for the latter problem by combining their results with a modified version of the well-known double-tree algorithm. An algorithm for the online variant of our proposed model together with the aforementioned results, would give rise to an online variant of Christofide's algorithm which can yield an improved competitiveness factor for the considered online TSP.

\subparagraph*{Our Results and Techniques}

We distinguish two variants of the model. In the \emph{$k$-known} case we assume that in Stage~1 we already know the number of new nodes $2k$ that will arrive in Stage~2. For this case we present a simple two-stage $(3,1)$-robust algorithm.

\begin{theorem} \label{thm:k-known}
  Let $(\mathcal{X}, c)$ be a metric space, $V_1 \subseteq \mathcal{X}$ with $|V_1|$ even, and $\gro$ be the complete graph on $V_1$. For $k\in \mathbb{N}$ known in advance, there is a perfect matching $\algo$ in $\gro$ that is two-stage $(3, 1)$-robust for $2k$ arrivals. Such a matching and corresponding second stage reply can be computed in time $\text{poly}(|V_1|,k)$.
\end{theorem}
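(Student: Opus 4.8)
The plan is to construct $\algo$ by augmenting an optimal first-stage matching with a small ``reserve'' structure that anticipates the $2k$ arrivals. Concretely, let $\opto$ be a min-cost perfect matching in $\gro$. Since we know $k$ in advance, I would pick $k$ edges of $\opto$ — say the $k$ cheapest ones, or any $k$ of them — and designate their $2k$ endpoints as \emph{flexible} vertices; the matching $\algo$ itself will just be $\opto$. The cost bound $c(\algo) = c(\opto) \le 3\,c(\opto)$ is then immediate, so the first-stage competitive guarantee with $\alpha = 3$ is trivially met (in fact with factor $1$); the factor $3$ is there to give room in the \emph{second} stage.

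For the second stage, $2k$ new nodes arrive, yielding $\grt$ with optimal matching $\optt$. I would build $\algt$ as follows: start from $\algo$, delete the $k$ designated flexible edges (this is exactly $|\algo \setminus \algt| \le k = \beta k$ with $\beta = 1$, \emph{provided} we add no other deletions), freeing up $2k$ old vertices, and then perfectly match these $2k$ freed vertices together with the $2k$ new vertices using a min-cost perfect matching on those $4k$ nodes. The remaining edges of $\algo$ stay untouched. The key inequality to establish is $c(\algt) \le 3\,c(\optt)$. Here $c(\algt) = \big(c(\opto) - c(F)\big) + c(R)$, where $F$ is the set of deleted flexible edges and $R$ is the new $4k$-node min-cost matching. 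To bound $c(R)$, I would exhibit a feasible matching on those $4k$ nodes whose cost is controlled by $c(\optt)$ plus a bit of $c(\opto)$: in $\optt$ restricted to the $2k$ new nodes, each new node is matched either to another new node or to an old node; contracting alternating paths/using the triangle inequality, the $2k$ new nodes can be matched among themselves and to the $2k$ freed old endpoints at cost at most something like $c(\optt) + c(\opto)$ (each detour through an old-node-to-old-node segment of $\optt$ is replaced by a direct edge, and the freed endpoints are exactly the ones we get to choose). Combined with the standard fact $c(\opto) \le c(\optt) + (\text{cost of matching the $2k$ new nodes optimally among themselves}) \le 2\,c(\optt)$ type bounds — more precisely $c(\opto)\le c(\optt)$ is \emph{false} in general, but $c(\opto)$ is bounded by $c(\optt)$ plus the cost of an auxiliary matching on the new nodes, which is itself $\le c(\optt)$ via triangle inequality along $\optt$ — one assembles $c(\algt) \le c(\opto) + c(R) \le 3\,c(\optt)$.

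The main obstacle I anticipate is the interaction between the \emph{choice} of which $k$ edges to make flexible and the second-stage bound: a naive choice (e.g. the $k$ cheapest edges of $\opto$) need not leave the ``right'' vertices free, since the adversary places the new nodes after seeing $\algo$. The fix is that we do not need the freed vertices to be near the new ones — we only need the triangle-inequality accounting above, which works for \emph{any} $k$ freed $\opto$-edges, because $\optt$ itself provides a routing from new nodes back into the old point set, and any $2k$ old vertices we free can be charged against the $\opto$-edges incident to wherever $\optt$ sends the new nodes. So the real content is a careful alternating-structure argument on $\opto \triangle \optt$ showing that the freed vertices and new vertices admit a matching of cost $O(1)\cdot c(\optt)$; once that lemma is in hand, plugging in the numbers gives exactly $(3,1)$. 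Finally, polynomial running time is clear: computing $\opto$, $\optt$, and the auxiliary $4k$-node matching are all min-cost perfect matching computations, solvable in $\mathrm{poly}(|V_1|,k)$ time.
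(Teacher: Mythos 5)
Your proposal has a fatal gap in the first stage: taking $\algo := \opto$ and only designating $k$ edges as removable cannot be $(3,1)$-robust, and no choice of the $k$ flexible edges fixes this. The instance in \cref{fig:motivationKKnown} is a counterexample for $k=1$: there $c(\opto) = n(1-\varepsilon)$ is spread evenly over $n$ edges, and after two arrivals at the extremes $c(\optt) = (n+1)\varepsilon$. Whatever single edge you delete, the surviving part of $\opto$ still costs at least $(n-1)(1-\varepsilon)$, so $c(\algt) \geq (n-1)(1-\varepsilon) \gg 3\,c(\optt)$ as $\varepsilon \to 0$. This also refutes the accounting you lean on: the claim that $c(\opto)$ is at most $c(\optt)$ plus an auxiliary new-node matching that is itself $\leq c(\optt)$ is false (in the same example the cheapest matching of the two new nodes among themselves costs about $n$, not $O(c(\optt))$), so no bound of the form $c(\opto) \leq O(1)\cdot c(\optt)$ is available. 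The deeper point, which the paper makes explicitly when discussing \cref{fig:motivationKKnown}, is that the first-stage matching must be chosen differently from $\opto$: it has to concentrate all cost that could become ``excess'' after the arrivals onto exactly $k$ designated edges.

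That is what the paper's construction does and your proposal is missing. \algKnownSto picks $k$ edge-disjoint $\opto$-alternating paths $P_1,\dots,P_k$ maximizing $\sum_i \g[\opto]{P_i}$, sets $\core := \opto \Delta P_1 \Delta \dots \Delta P_k$ (a min-cost matching of cardinality $|V_1|-2k$), and returns $\algo := \core \cup \{e(P_i) : i \in [k]\}$, where $e(P_i)$ joins the endpoints of $P_i$. The key lemma, which plays the role your missing ``careful alternating-structure argument'' would have to play, is that for \emph{every} possible arrival set, $c(\core) \leq c(\optt)$: the $k$ paths of $\opto \Delta \optt$, with their first and last $\optt$-edges removed, are a feasible choice of alternating paths, so their total gain is at most that of the $P_i$'s. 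In the second stage one deletes exactly the $k$ edges $e(P_i)$ (recourse $1$) and repairs using $\optt \Delta \core$, giving $c(\algt) \leq 3\,c(\optt)$; your idea of re-matching the exposed and new vertices at minimum cost would also work at that point, but only because the first stage was built this way. Without replacing $\algo = \opto$ by this gain-maximizing surgery, the second-stage bound cannot be salvaged.
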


The example in \cref{fig:motivationKKnown} illustrates a worst case scenario for the strategy of choosing $\opto$ as the first stage matching for $k=1$.
The reason for this is that the nodes arriving in Stage~2 induce a path in $\opto \Delta \optt$ that incurs a significant drop in the optimal cost.
Our algorithm is designed towards preparing for such bad scenarios. 
To this end, we define the notion of \emph{gain} for a path $P$ with respect to a matching $X$ as follows:
$$\g[X]{P} := c(P \cap X) - c(P \setminus X).$$
In Stage 1, our algorithm chooses $k$ edge-disjoint $\opto$-alternating paths of maximum total gain with respect to $\opto$. 
For each such path $P$ we modify $\opto$ by removing $P \cap \opto$ and adding $(P\setminus \opto) \cup \{e(P)\}$, where $e(P)$ is the edge that connects the endpoints of $P$.
Our choice of paths of maximum gain implies that $P \cap \opto$ is larger than $P\setminus \opto$. 
Therefore we can bound the cost of the solution in the first stage against that of $\opto$ and also infer that most of its costs is concentrated on the edges $e(P)$. 
For the second stage we construct a solution for the new instance by removing the $k$ edges of the form $e(P)$ and adding new edges on top of the remaining solution. The algorithm is described in detail in \cref{sec:known}.

For the case where $k$ is unknown the situation is considerably more involved as a first stage solution must work for any number of arriving nodes simultaneously. In this setting we restrict our study to the real line and give an algorithm that is two-stage $(10,2)$-robust.

\begin{theorem}\label{thm:k-unknown}
  Let $\mathcal{X} = \mathbb{R}$ and $c = |\cdot|$, $V_1 \subseteq \mathcal{X}$ with $|V_1|$ even, and let $\gro$ be the complete graph on $V_1$.
  Then there is a perfect matching $\algo$ in $\gro$ that is two-stage $(10, 2)$-robust. Such a matching, as well as the second stage reply, can be computed in time $\text{poly}(|V_1|,k)$. 
\end{theorem}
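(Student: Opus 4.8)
The plan is to build the first-stage matching $\algo$ by a recursive procedure that pre-installs ``spare'' long edges at every scale, and then to show that these spare edges allow us to absorb, with only $O(1)$ deletions each, the cascades that any number of arrivals can trigger. On the line the optimum $\opto$ matches consecutive points, and an $\opto$-alternating path of large gain is exactly a stretch in which $\opto$ uses long edges that an adversary could ``shift off'' cheaply, as in \cref{fig:motivationKKnown}. I would construct a rooted tree $T$ as follows: at the root pick an $\opto$-alternating path $P^*$ maximizing $\g{P^*}$ (a maximum-weight-subpath computation), replace $P^*\cap\opto$ by the short alternating edges $P^*\setminus\opto$ together with one \emph{heavy} edge joining the two extreme endpoints of $P^*$, and then recurse independently on the two point-intervals lying to the left and to the right of $P^*$, stopping a branch once the optimum of its subinstance drops below a fixed geometric fraction of the parent's. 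The heavy edges so produced form a set $\HEAVY$ indexed by the nodes of $T$ (so that $\child{\cdot}$, $\ancestor{\cdot}$, $\descendant{\cdot}$ refer to $T$), the consecutive-pair matchings left at the leaves form $\LIGHT$, and we set $\algo := \HEAVY \cup \LIGHT$.

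For the first-stage cost I would argue that the only real overhead over $\opto$ comes from the heavy edges: on every shortcut path the light part $P^*\setminus\opto$ is no longer than $P^*\cap\opto$ because $\g{P^*}\ge 0$, and the heavy edge added at $P^*$ has length at most the diameter of $P^*$, hence at most $c(P^*\cap\opto)+c(P^*\setminus\opto)\le 2\,c(P^*\cap\opto)$. Summing heavy-edge costs over $T$ level by level, and using that the subinstance optimum shrinks by a constant factor from each level to the next, turns $c(\HEAVY)$ into a convergent geometric series in $c(\opto)$; together with $c(\LIGHT)=O(c(\opto))$ (each leaf matching is consecutive-pair up to a single boundary edge) this yields $c(\algo)\le 10\,c(\opto)$.

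The heart of the proof is the second-stage reply. Given the $2k$ arrivals $S$, let $\optt$ be an optimum of $\grt$; every point of $S$ has degree one in $\opto\,\Delta\,\optt$, so this symmetric difference is a union of $k$ alternating paths (each joining two points of $S$) together with alternating cycles, and $c(\opto)-c(\optt)=\sum_P \g{P}$ over these components. When some component $P$ has large positive gain, the target $10\,c(\optt)$ is far below $c(\algo)$ and we are forced to reroute aggressively along $P$ — which is exactly where the prepared heavy edges pay off: I would assign $P$, via a map $\PL{P}$, to the shallowest node of $T$ whose span contains it, delete the $O(1)$ heavy edges on the tree-path from $\PL{P}$ down to $P$'s location, and use the short edges they release to reroute along $P$ and splice in its two new endpoints, leaving the rest of $\algo$ untouched. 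The deletions are $O(1)$ per path, and an amortized accounting over the $2k$ endpoints in $S$ brings the total to at most $2k$; the cost increase is charged to $c(\optt)$, partly to the $\optt$-edges incident to $S$ and partly to the gains $\g{P}$, which the heavy edge at $\PL{P}$ was sized to dominate up to the geometric slack. I expect this last charging to be the main obstacle: the adversary's high-gain cascades need not coincide with any path chosen during the construction, so one must show that $T$ is simultaneously \emph{dense} enough — every potential high-gain alternating path is dominated, up to a constant, by a heavy edge at a comparable scale — and \emph{cheap} enough that the level-by-level geometric series still closes below the target constant, all while respecting the recourse budget $2k$. Balancing these competing requirements is what dictates the recursive stopping rule and pushes the competitive factor up to $10$; all steps involved are polynomial in $|V_1|$ and $k$.
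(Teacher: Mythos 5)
There is a genuine gap --- in fact two, one in each stage. Your high-level intuition (a tree of pre-installed long ``spare'' edges at every scale, a geometric decay across levels, and a charging of the second-stage cost against $\optt$) matches the paper's, but the construction and the key charging argument are not carried out and, as described, would not work.

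First, your Stage-1 recursion goes in the wrong direction: after augmenting along the max-gain path $P^*$ you recurse only on the intervals to the \emph{left and right} of $P^*$, never \emph{inside} it. The paper's algorithm instead keeps iterating on the updated matching $X \leftarrow X\,\Delta\,P$, so later paths nest strictly inside earlier ones; the resulting laminar family (\cref{lem:laminarity}) is what produces the tree, and the nested paths alternate between $\opto$-heavy and $\opto$-light (\cref{lem:heavy-light}), which is the actual source of the exponential decay (\cref{lem:exp-decay}). Your version obtains geometric decay only by fiat, via an unjustified stopping rule (``stop once the subinstance optimum drops below a fixed fraction''), and you also drop the heaviness restriction $c(P\cap X)\ge 2\,c(P\setminus X)$ on the selected paths, without which the cost of the added edges $e(P)$ does not telescope. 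Moreover, a tree built only from left/right recursion leaves no spare edges at the nested scales inside $P^*$, which is precisely where the adversary's cheap cascades live in instances like \cref{fig:badExample}.

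Second, and more fundamentally, the Stage-2 argument is not a proof but a statement of the difficulty: you yourself write that the charging of the adversary's high-gain paths against the prepared heavy edges is ``the main obstacle.'' This is exactly the content of the paper's \cref{sec:unknown-stage2}: one must assign each request to the minimal light node containing it, delete on average at most two edges per request (one light edge, one heavy child per request, one light grandchild per gap), and then prove via the greedy gain-maximality of the construction that the uncovered regions $A(R)$ are not $\opto$-heavy and the regions $B(\bar R)$ are not $\opto$-light (\cref{lem:heavy-intervals-budget,lem:light-intervals-budget}), so that their cost is dominated by $\optt$. Your proposed deletion rule --- remove ``the heavy edges on the tree-path from the shallowest containing node down to $P$'s location'' --- is not $O(1)$ per request, since such tree-paths can have unbounded length, and no mechanism is given for bounding the total recourse by $2k$. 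Without these pieces neither the factor $10$ nor the recourse bound $2$ is established.
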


The first stage solution $M$ is constructed iteratively, starting from the optimal solution. We will choose a path $P$ greedily such that it maximizes $\g[M]{P}$ among all alternating paths that are \emph{heavy}, i.e., the cost of $P \cap M$ is a factor 2 more expensive than the cost of $P\setminus M$. Then $M$ is modified by augmenting along $P$ and adding edge $e(P)$, which we fix to be in the final solution. We iterate until $M$ only consists of fixed edges. As we are on the line, each path $P$ corresponds to an interval and we can show that the constructed solution form a laminar family. Furthermore, our choice of heavy paths implies that their lengths satisfy an exponential decay property. This allows us to bound cost of the first stage solution. For the second stage, we observe that the symmetric difference $\opto \Delta \optt$ induces a set of intervals on the line.
For each such an interval, we remove on average at most two edges from the first stage matching and repair the solution with an optimal matching for the exposed vertices.
A careful choice of the removed edges, together with the greedy construction of the first stage solution, enables us to bound the cost of the resulting second stage solution within a constant factor of the optimum.
See \cref{sec:unknown-stage1,sec:unknown-stage2} for a detailed description of this case.
\subparagraph*{Related Work} Intense research has been done on several variants of the \textit{online bipartite matching} problem~\cite{1990:Karp,1993:Kalyanasundaram,1994:Khuller,2009:Chaudhuri,2017:Nayyar}. In this setting we are given a known set of \emph{servers} while a set of \emph{clients} arrive online. In the \textit{online bipartite metric matching problem} servers and clients correspond to points from a metric space. Upon arrival, each client must be matched to a server irrevocably, at cost equal to their distance. For general metric spaces, there is a tight bound of $(2n-1)$ on the competitiveness factor of deterministic online algorithms, where $n$ is the number of servers~\cite{1994:Khuller,1993:Kalyanasundaram}. Recently, Raghvendra presented a deterministic algorithm~\cite{2016:Raghvendra} with the same competitiveness factor, that in addition is $O(\log(n))$-competitive in the random arrival model. Also, its analysis can be 
parameterized for any metric space depending on the length of a TSP tour and its diameter~\cite{2017:Nayyar}. For the special case of the metric on the line, Raghvendra~\cite{2018:Raghvendra} recently refined the analysis of the competitive ratio to $O(\log(n))$. This gives a deterministic algorithm that matches the previously best known bound by Gupta and Lewi~\cite{2012:Gupta}, which was attained by a randomized algorithm. As the lower bound of $9.001$ \cite{2005:Fuchs} could not be improved for 20 years, the question whether there exists a constant competitive algorithm for the line remains open.

The \textit{online matching with recourse problem} considers an unweighted bipartite graph. Upon arrival, a client has to be matched to a server and can be reallocated later. The task is to minimize the number of reallocations under the condition that a maximum matching is always maintained. The problem was introduced by Grove, Kao and Krishnan~\cite{1995:Grove}. Chaudhuri et al.~\cite{2009:Chaudhuri} showed that for the random arrival model a simple greedy algorithm uses $O(n \log(n))$ reallocations with high probability and proved that this analysis is tight. Recently, Bernstein, Holm and Rotenberg \cite{2018:Bernstein} showed that the greedy algorithm needs $O(n \log^2 n)$ allocations in the adversarial model, leaving a small gap to the lower bound of $O(n \log n)$. Gupta, Kumar and Stein~\cite{2014:Gupta} consider a related problem where servers can be matched to more than one client, aiming to minimize the maximum number of clients that are assigned to a server. They achieve a constant competitive factor server while doing in total $O(n)$ reassignments. 

Online min-cost problems with reassignments have been studied in other contexts. For example in the \emph{online Steiner tree problem with recourse} a set of points on a metric space arrive online. We must maintain Steiner trees of low cost by performing at most a constant (amortized) number of edge changes per iteration. While the pure online setting with no reassignment only allows for $\Omega(\log(n))$ competitive factors, just one edge deletion per iteration is enough to obtain a constant competitive algorithm~\cite{2013:Gu}; see also~\cite{2014:Guptab,2016:Megow}.

The concept of \emph{recoverable robustness} is also related to our setting~\cite{2009:Liebchen}. In this context the perfect matching problem on unweighted graphs was considered by Dourado et. al.~\cite{2015:Dourado}. They seek to find perfect matchings which, after the failure of some edges, can be recovered to a perfect matching by making only a small number of modifications. They establish computational hardness results for the question whether a given graph admits a robust recoverable perfect matching.

\section{Known Number of Arrivals}\label{sec:known}

In this section, we consider the setting where $k$ (number of arrival pairs in Stage 2) is already known in Stage 1.
Let $\gro$ be the complete graph given in Stage~1 (with edge costs $c$ induced by an arbitrary metric) and let $\opto$ be a min-cost perfect matching in $\gro$.
Without loss of generality assume that $|\opto| > 2k$, as otherwise, we can remove all edges of $\algo$ in Stage 2.

\begin{description}
\item[\algKnownSto]{ works as follows: 
\begin{romanenumerate}
\item Let $P_1, \dots, P_k$ be edge-disjoint, $\opto$-alternating paths maximizing $\sum_{i=1}^{k}\g[\opto]{P_i}$. 
\item Set $\core := \opto \Delta P_1 \Delta \dots \Delta P_k$.
\item Return $\algo := \core \cup \{e(P_i) : i \in [k]\}.$
\end{romanenumerate}}
\end{description}

It is easy to see that each path $P_i$ starts and ends with an edge from $\opto$ and $\g[\opto]{P_i} \geq 0$. As a consequence, $\algo$ is a perfect matching and
\begin{align*}
\textstyle c(\core) = c(\opto) - \sum_{i=1}^{k} \g[\opto]{P_i} \leq c(\opto).
\end{align*}
Using $c(e(P_i)) \leq c(P_i)$ and $\bigcup_{i=1}^{k} P_i = \opto \Delta \core \subseteq \opto \cup \core$ we obtain
\begin{align*}
\textstyle c(\algo) \leq c(\core) + \sum_{i=1}^{k} c(P_i) \leq c(\core) + c(\core) + c(\opto) \leq 3 \cdot c(\opto).
\end{align*}

Now consider the arrival of $2k$ new vertices, resulting in the graph $\grt$ with min-cost matching $\optt$.
Note that $\optt \Delta \core$ is a $U$-join, where $U$ is the union of the endpoints of the paths $P_1, \dots, P_k$ and the $2k$ newly arrived vertices.

\begin{description}
\item[\algKnownStt] works as follows:
\begin{romanenumerate}
\item Let $P'_1, \dots, P'_{2k}$ be the $2k$ maximal paths from $\optt \Delta \core$.
\item Return $\algt := \core \cup \{e(P'_i) : i \in [2k]\}$.
\end{romanenumerate}
\end{description}

Note that $\opto \Delta \optt$ consists of $k$ alternating paths $R_1, \dots, R_k$, from which we remove the starting and ending $\optt$-edge.
Then these paths would have been a feasible choice for $P_1, \dots, P_k$, implying that the total gain of the $R_i$'s is at most that of the $P_i$'s. We conclude that
\begin{align*}
\textstyle c(\core) =  c(\opto) - \sum_{i=1}^{k} \g[\opto]{P_i} \leq c(\opto) - \sum_{i=1}^{k} \g[\opto]{R_i} \leq c(\optt).
\end{align*}
Applying $\bigcup_{i=1}^{2k} P'_i \subseteq \optt \Delta \core \subseteq \optt \cup \core$, we obtain
\begin{align*}
\textstyle c(\algt) \leq c(\core) + \sum_{i=1}^{2k} c(P'_i) \leq c(\core) + c(\core) + c(\optt) \leq 3 \cdot c(\optt).
\end{align*}

As $|\algo \setminus \algt| \leq |\algo \setminus \core| = k$, we conclude that $\algo$ is indeed two-stage $(3,1)$-robust. 

We remark that $\core$ is always a min-cost matching of cardinality $\vert V(G_1) \vert -2k$ in $G_1$. Thus, alternatively to \algKnownSto, we can compute a min-cost matching of cardinality $\vert V(G_1) \vert-2k$ and match uncovered nodes at minimum cost. Finding $\core$ directly as well as finding gain-maximizing paths as described in \algKnownSto can be done efficiently by solving a min-cost $T$-join problem in an extension of $\gro$. This concludes our proof of \cref{thm:k-known}.


\section{Unknown Number of Arrivals -- Stage 1}\label{sec:unknown-stage1}

In this section, we consider the case that the underlying metric corresponds to the real line. This implies that there is a Hamiltonian path $\LINE$ in $\gro$ such that $c(v, w) = c(\LINE[v, w])$ for all $v, w \in V(\gro)$, where $L[v, w]$ is the subpath of $L$ between nodes $v$ and $w$.
We will refer to $\LINE$ as \emph{the line} and call the subpaths of $\LINE$ \emph{intervals}. 
The restriction to the metric on the line results in a uniquely defined min-cost perfect matching $\opto$ with a special structure. All proofs omitted due to space constraints can be found in the appendix.   

\begin{restatable}{lemma}{restateLemOptOntheLine}
\label{lem:structure-opt-ontheline}
$\opto$ is the unique perfect matching contained in $L$. 
\end{restatable}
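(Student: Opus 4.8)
The plan is to prove that $\opto$, the unique min-cost perfect matching under the line metric, is exactly the set of edges of $L$ that join vertices which are consecutive along $L$ — equivalently, the only perfect matching that is a subgraph of $L$. First I would observe that since $|V(\gro)|$ is even, the vertices of $L$ in order are $v_1, v_2, \dots, v_{2n}$, and $L$ contains exactly one perfect matching as a subgraph, namely $\{v_1 v_2, v_3 v_4, \dots, v_{2n-1} v_{2n}\}$: any perfect matching inside a path must pick the first edge $v_1 v_2$ (as $v_1$ has degree one in $L$), then recurse on the subpath $v_3 \dots v_{2n}$. So "the unique perfect matching contained in $L$" is well-defined; call it $N$. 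It remains to show $\opto = N$, i.e. that $N$ is the unique min-cost perfect matching of $\gro$.

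Next I would show $N$ is optimal via an exchange argument. Take any perfect matching $M$ of $\gro$ and suppose $M \neq N$. Consider the leftmost vertex $v_1$; it is matched in $M$ to some $v_j$ with $j \geq 2$, and in $N$ to $v_2$. If $j > 2$, then $v_2$ is matched in $M$ to some $v_\ell$ with $\ell \neq 1$; the four points $v_1 < v_2 < \min(v_j, v_\ell) \le \max(v_j,v_\ell)$ lie on the line, and replacing the two edges $v_1 v_j$ and $v_2 v_\ell$ by $v_1 v_2$ and $v_j v_\ell$ does not increase the cost, because on the line $c(v_1,v_2) + c(v_j,v_\ell) \le c(v_1,v_j) + c(v_2,v_\ell)$ whenever $v_1,v_2$ are the two leftmost of the four points (this is the standard "uncrossing" inequality for the line metric: $|a-b| + |c-d| \le |a-c| + |b-d|$ when $a \le b \le c,d$ — and more carefully one checks all orderings of $v_j,v_\ell$). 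Iterating, we may assume $v_1 v_2 \in M$, remove both vertices, and recurse on $v_3,\dots,v_{2n}$, concluding by induction that $c(N) \le c(M)$, so $N$ is a min-cost perfect matching.

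For uniqueness, I would sharpen the exchange inequality: whenever $v_1$ is matched in $M$ to $v_j$ with $j > 2$, the uncrossing step \emph{strictly} decreases the cost, since $v_2$ lies strictly between $v_1$ and $v_j$, so $c(v_1,v_j) = c(v_1,v_2) + c(v_2,v_j) > c(v_1,v_2) + c(v_j, v_\ell)$ unless $v_\ell$ sits between $v_2$ and $v_j$, in which case one re-examines; the cleanest route is: any min-cost perfect matching on the line is \emph{noncrossing} (two edges $v_a v_b$, $v_c v_d$ with $v_a < v_c < v_b < v_d$ can be uncrossed to $v_a v_c$, $v_b v_d$ with strictly smaller cost), and the only noncrossing perfect matching on an even number of collinear points is $N$ (induct: the leftmost point must be matched to its right neighbor, else the edge covering $v_1$ strictly encloses an odd number of points, contradicting that the remaining edges form a noncrossing perfect matching on the points strictly inside). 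Hence $\opto = N$ is the unique min-cost perfect matching, and it equals the unique perfect matching contained in $L$.

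The main obstacle is making the uncrossing/exchange step fully rigorous while covering all relative orderings of the endpoints involved — in particular distinguishing the non-strict inequality (needed to show optimality of $N$) from the strict inequality (needed for uniqueness), and handling the bookkeeping of the induction cleanly so that "remove $v_1, v_2$ and recurse" is justified. I expect the authors phrase this compactly using the noncrossing-matching characterization, which sidesteps the case analysis.
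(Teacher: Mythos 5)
Your reduction of the lemma to ``the unique perfect matching contained in $\LINE$ is the odd-edge matching $N$'' is fine, and so is your non-strict exchange argument showing $c(N)\le c(M)$ for every perfect matching $M$, hence that $N$ is \emph{a} min-cost matching. The gap is in the uniqueness step, and that step is essential: since $\opto$ is a priori just some min-cost perfect matching, the lemma requires that \emph{every} min-cost perfect matching coincides with $N$. For this you fall back on the claim that the only noncrossing perfect matching on an even number of collinear points is $N$, and that claim is false: noncrossing only forbids interleaved pairs $v_a<v_c<v_b<v_d$, so nested matchings such as $\{v_1v_4,\,v_2v_3\}$ on four points are noncrossing yet differ from $N$. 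Your inductive justification breaks at exactly this point: if $v_1$ is matched to $v_j$, the edge encloses $j-2$ points, which is even whenever $j$ is even, so there is no parity contradiction.

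The argument is repairable within your framework. Either (a) also ``unnest'': replacing $v_av_b,\,v_cv_d$ with $v_a<v_c<v_d<v_b$ by $v_av_c,\,v_dv_b$ strictly decreases the cost by $2(v_d-v_c)>0$, so in a min-cost matching the intervals of the edges are pairwise non-overlapping, and the only perfect matching with pairwise disjoint intervals is $N$; or (b) observe that your leftmost exchange is in fact \emph{always} strictly improving when $v_1$ is not matched to $v_2$: the change in cost is $|v_j-v_\ell|-(v_j-v_2)-(v_\ell-v_2)$, which equals $-2(v_\ell-v_2)<0$ if $v_\ell<v_j$ and $-2(v_j-v_2)<0$ if $v_\ell>v_j$; your hedge about $v_\ell$ lying between $v_2$ and $v_j$ arose from dropping the $c(v_2,v_\ell)$ term from the comparison. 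Either repair shows every min-cost matching contains $v_1v_2$, and your induction then closes. For comparison, the paper's proof avoids exchanges altogether: it projects the edges of $\opto$ onto subpaths of $\LINE$ and takes their symmetric difference, obtaining a $V(\gro)$-join $Q\subseteq\LINE$, which must be the odd-edge matching; since $c(Q)\le c(\opto)$ with strict inequality as soon as two projections overlap, minimality forces $Q=\opto$ in one stroke.
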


When the number of arrivals is not known in the first stage, the approach for constructing the first stage matching introduced in \cref{sec:known} does not suffice anymore. \cref{fig:badExample} illustrates a class of instances for which \algKnownSto cannot achieve $(O(1), O(1))$-robustness, no matter how we choose $k$. 
For a matching $M$, define $g(M) := \max_{e \in L} |\{\{v, w\} \in M : e \in L[v, w]\}|$. Informally speaking, $g(M)$ captures the maximal number of times a part of the line is traversed by edges in $M$. 
The example in \cref{fig:badExample} can be generalized to show that we cannot restrict ourselves to constructing matchings $\algo$ such that $g(\algo)$ is bounded by a constant.
\begin{figure}[t]
\centering
\begin{tikzpicture}[scale=.25]
    \foreach \pos/\name/\lab in { {(0,1)/v0/}, {(8,1)/v10/}, {(10,1)/v11/}, {(11,1)/v115/}, {(14,1)/v145/}, {(15,1)/v15/}, {(17,1)/v16/}, {(25,1)/v26/}, {(27,1)/v27/}, {(28,1)/v275/}, {(31,1)/v305/},{(32,1)/v31/},{(34,1)/v32/}, {(42,1)/v42/}} 
        \node[vertex,minimum size=5, label=below:{\lab}](\name) at \pos {};

    \foreach \pos/\name/\lab in { {(46,1)/v44/}, {(46.5,1)/v46/}}
        \node[vertex,minimum size=0, label=below:{\lab}](\name) at \pos {};

    \foreach \source/ \dest/\lab  in {v0/v10/$c_1$,v11/v115/$c_3$, v145/v15/$c_3$,v16/v26/$c_1$, v27/v275/,v305/v31/,v32/v42/$c_1$} 
     \path[draw, TUMBlau, ultra thick, below] (\source) edge node[] {\textcolor{black}{\small \lab}} (\dest);
     
    \foreach \source/ \dest/\lab  in {v10/v11/$c_2$,v15/v16/$c_2$}
     \path[draw=black, opacity=0, below] (\source) edge node[color=black, opacity=1] {\textcolor{black}{\small \lab}} (\dest);
     
     \path (v115) -- node{\tiny [\dots]} (v145);        
     \path (v275) -- node{\tiny [\dots]} (v305);  
     \path (v42) -- node{[\dots]} (v44);                      

\end{tikzpicture}
\caption{For fixed $\alpha, \beta \in O(1)$, we construct an instance for which \algKnownSto is not $(\alpha,\beta)$-robust for any $k \in O(1)$. $\gro$ is constructed such that $\opto$ contains $\beta + 1$ edges of size $c_1$ and $\beta^3 + \beta^2$ of size $c_3$ that are equally distributed between $c_1$-edges. The distance between any two consecutive edges in $\opto$ is $c_2$. Values $c_1,c_2,c_3$ are chosen depending on $\alpha$, guaranteeing $c_1 \gg \beta^3 c_2 \gg \beta^6 c_3$. \algKnownSto with $k \geq \beta + 1$ chooses $M_1 = O_1$. Now, assume there are two arrivals at the extremes of the line: while the optimal costs in the second stage decrease heavily to $c(O_2) \in \Theta(\beta^3 c_2)$, only $\beta$ deletions are allowed within $M_1$. As a result, there does not exist a feasible second stage reply. Now, consider \algKnownSto with $k \leq \beta$. Then, $M_1$ contains more than $\beta^2 + \beta$ edges of size $c_2$. If in the second stage $2(\beta + 1)$ nodes arrive right next to the endpoints of $c_1$-edges, the optimal costs drop to $ \Theta(\beta^3 c_3)$ while only $\beta^2 + \beta$ deletions are allowed. Again, no feasible second stage reply exists.}\label{fig:badExample}
\end{figure}
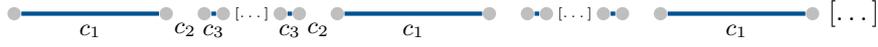

In view of the example in \cref{fig:badExample}, we adapt the approach from \cref{sec:known} as follows.
Instead of creating a fixed number of paths, our algorithm now iteratively and greedily selects a path $P$ of maximum gain with respect to a dynamically changing matching $X$ (initially $X = \opto$). 
In order to bound the total cost incurred by adding edges of the form $e(P)$, we only consider paths $P$ for which $X \cap P$ contributes a significant part to the total cost of $P$.

\begin{definition}
Let $X, P \subseteq E(\gro)$. 
\begin{enumerate}
\item We say that $P$ is \textit{$X$-heavy} if $c(P \cap X ) \geq 2 \cdot c(P \setminus X).$
\item We say that $P$ is \textit{$X$-light} if $c(P \cap X ) \leq \frac{1}{2} \cdot c(P \setminus X).$
\end{enumerate}
\end{definition}

\begin{description}
\item[\algUnknownSto]{ works as follows: 
\begin{romanenumerate}
\item Set $\algo := \emptyset$ and $X~:=~\opto$.
\item While $X \neq \emptyset$: Let $P$ be an $X$-heavy $X$-alternating path maximizing $\g[X]{P}$ and update $\algo \leftarrow \algo \cup \{e(P)\}$ and  $X \leftarrow X \Delta P$.
\item Return $\algo$.
\end{romanenumerate}
} 
\end{description}

Note that in each iteration, the path $P$ starts and ends with an edge from $X$ as it is gain-maximizing (if $P$ ended with an edge that is not in $X$, we could simply remove that edge and obtain a path of higher gain). Therefore it is easy to see that $X \cup \algo$ is always a perfect matching, and in each iteration the cardinality of $X$ decreases by $1$.

Now number the iterations of the while loop in \algUnknownSto from $1$ to $n$. Let $X^{(i)}$ be the state of $X$ at the beginning of iteration $i$. Let $P^{(i)}$ be the path chosen in iteration $i$ and let $e^{(i)} = e(P^{(i)})$ be the corresponding edge added to $M$. 
The central result in this section is \cref{lem:laminarity},  in which we show that the paths $P^{(i)}$ form a laminar family of intervals on the line.

Within the proof we will make use of observations stated in \cref{lem:heavy-union,lem:gain-maximizers}. For convenience, we define the projection $\PL{e} := \LINE[v, w]$ that maps an edge $e = \{v, w\} \in E(\gro)$ to the corresponding subpath $\LINE[v, w]$. $\;$\\

\begin{restatable}{lemma}{restateLemHeavyUnion}\label{lem:heavy-union}\label{lem:heavy-setminus}
  Let $X \subseteq E(\gro)$.
  \begin{enumerate}
  \item Let $A, B \subseteq E(\gro)$ be two $X$-heavy ($X$-light, respectively) sets with $A \cap B = \emptyset$. Then $A \cup B$ is $X$-heavy ($X$-light, respectively).
  \item Let $A, B \subseteq E(\gro)$ with $B \subseteq A$. If $A$ is $X$-heavy and $\g[X]{B} < 0$, then $A \setminus B$ is $X$-heavy. If $A$ is $X$-light and $\g[X]{B} > 0$, then $A \setminus B$ is $X$-light.
  \end{enumerate}
\end{restatable}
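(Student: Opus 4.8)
The plan is to reduce both parts to the additivity of the cost function $c$ over disjoint unions of edges, together with the two defining inequalities of $X$-heavy / $X$-light sets. The two elementary identities I would state first are the following. If $Y, Z \subseteq E(\gro)$ are disjoint, then $(Y\cup Z)\cap X$ is the disjoint union of $Y\cap X$ and $Z\cap X$, so $c((Y\cup Z)\cap X) = c(Y\cap X) + c(Z\cap X)$, and likewise $c((Y\cup Z)\setminus X) = c(Y\setminus X) + c(Z\setminus X)$. Specializing to $Y = A\setminus B$ and $Z = B$ in the case $B \subseteq A$, we get $c(A\cap X) = c((A\setminus B)\cap X) + c(B\cap X)$ and $c(A\setminus X) = c((A\setminus B)\setminus X) + c(B\setminus X)$. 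Everything below is bookkeeping with these identities.

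For Part~1 (heavy case), I would add the two heaviness inequalities $c(A\cap X) \geq 2\,c(A\setminus X)$ and $c(B\cap X) \geq 2\,c(B\setminus X)$ and rewrite both sides with the disjoint-union identity for $Y=A$, $Z=B$; this gives $c((A\cup B)\cap X) \geq 2\,c((A\cup B)\setminus X)$ immediately. The light case is verbatim the same with the factor $\tfrac12$ instead of $2$ and both inequalities reversed. For Part~2 (heavy case), start from $c((A\setminus B)\cap X) = c(A\cap X) - c(B\cap X)$, bound $c(A\cap X) \geq 2\,c(A\setminus X) = 2\,c((A\setminus B)\setminus X) + 2\,c(B\setminus X)$ using heaviness of $A$, and then observe that the leftover term is nonnegative:
\[
2\,c(B\setminus X) - c(B\cap X) = c(B\setminus X) + \bigl(c(B\setminus X) - c(B\cap X)\bigr) = c(B\setminus X) - \g[X]{B} \geq 0,
\]
since $c(B\setminus X)\geq 0$ and $\g[X]{B} < 0$. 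Combining the two bounds yields $c((A\setminus B)\cap X) \geq 2\,c((A\setminus B)\setminus X)$, i.e.\ $A\setminus B$ is $X$-heavy. The light case is dual: from $c((A\setminus B)\cap X) = c(A\cap X) - c(B\cap X) \leq \tfrac12 c(A\setminus X) - c(B\cap X)$, expand $c(A\setminus X)$ as above, and note that the leftover term $\tfrac12 c(B\setminus X) - c(B\cap X)$ is nonpositive because $\g[X]{B} > 0$ gives $c(B\cap X) > c(B\setminus X) \geq \tfrac12 c(B\setminus X)$; hence $A\setminus B$ is $X$-light.

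I do not expect a genuine obstacle here. The only two points that need care are: (i) invoking additivity of $c$ only over \emph{genuinely disjoint} edge sets, which is exactly where the hypotheses $A\cap B = \emptyset$ (Part~1) and $B \subseteq A$ (Part~2) are used — here $c$ can be any nonnegative edge weighting, so no metric property is needed; and (ii) keeping the direction of every inequality consistent when switching between the heavy and the light variant. Consequently the final write-up is just the two short (in)equality chains above, each stated once for the heavy case and once, dually, for the light case.
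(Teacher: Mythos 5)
Your proof is correct and follows essentially the same route as the paper's: both parts reduce to additivity of $c$ over the disjoint decompositions $A \cup B$ and $(A\setminus B) \cup B$ combined with the defining inequalities, the only cosmetic difference being that the paper phrases Part~2 as a proof by contradiction while you argue it directly. No gaps.
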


\begin{restatable}{lemma}{restateLemGainMaximizers}\label{lem:gain-maximizers}\label{lem:interval-to-path}
Let $X \subseteq \LINE$ be a matching. 
\begin{enumerate}
\item Let $P$ be an $X$-heavy $X$-alternating path maximizing $\g[X]{P}$. If $X$ covers all vertices in $V(\PL{P})$, then \mbox{$P=\PL{P}$}.
\item Let $I \subseteq L$ be an interval. Then there is an $X$-alternating path $P$ such that $c(P \cap X) = c(X \cap I)$ and $c(P \setminus X) = c(I \setminus X)$. 
\end{enumerate}
\end{restatable}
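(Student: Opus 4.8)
For the second part I would argue by explicit construction. Since $X \subseteq \LINE$ is a matching, each edge of $X$ joins two consecutive vertices of $\LINE$. Let $f_1, \dots, f_m$ be the edges of $X$ contained in $I$, listed from left to right, and let $p, q$ be the endpoints of $I$. If $m = 0$, take $P := \{p, q\}$; this edge is not in $X$ (if it is an edge of $\LINE$ it lies in $I$, which contains no edge of $X$, and otherwise it is not in $\LINE$ at all). If $m \ge 1$, let $P$ be the path whose vertex sequence runs $p$, then along $f_1$, then along $\LINE$ to the left endpoint of $f_2$, along $f_2$, and so on, finishing by going from the right endpoint of $f_m$ to $q$; the first (resp.\ last) step is omitted if $p$ (resp.\ $q$) already is an endpoint of $f_1$ (resp.\ $f_m$). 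Since the $f_i$ are pairwise disjoint consecutive pairs in left-to-right order, this vertex sequence is simple, so $P$ is a path; the $f_i$ lie in $X$, whereas every ``jump'' edge lies in a portion of $I$ containing no edge of $X$ and hence is not in $X$, so $P$ is $X$-alternating. By construction $P \cap X = \{f_1, \dots, f_m\} = X \cap I$ and the jump edges together cover exactly the part of $I$ not spanned by $X \cap I$, so $c(P\cap X) = c(X\cap I)$ and $c(P\setminus X) = c(I\setminus X)$.

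For the first part, write $I := \PL{P} = \LINE[v_i, v_j]$, an interval all of whose vertices are covered by $X$. The plan has four steps. \emph{(i)}~$P$ starts and ends with an edge of $X$: otherwise, deleting a terminal non-$X$ edge strictly increases the gain (all costs are positive, as $V_1$ is a set) while keeping the path $X$-heavy by \cref{lem:heavy-union}\,(2), contradicting maximality. \emph{(ii)}~Since $X$ is a matching of consecutive pairs covering $V(I)$, consecutive edges of $\LINE$ inside $I$ alternate between $X$ and $E(\gro)\setminus X$; hence the interval path $v_i \to v_{i+1} \to \dots \to v_j$ is an $X$-alternating path with gain exactly $c(X\cap I) - c(I\setminus X)$. \emph{(iii)}~The technical heart is the bound $\g[X]{Q} \le c\bigl(X\cap \PL{Q}\bigr) - c\bigl(\PL{Q}\setminus X\bigr)$ for every $X$-alternating path $Q$: for each edge-location of $\LINE$, at most one $X$-edge of $Q$ spans it, and the number of edges of $Q$ spanning it is odd if the location lies inside $\PL{Q}$ and even otherwise; summing the resulting per-location bounds on (spanning $X$-edges) minus (spanning non-$X$ edges) gives the claim. \emph{(iv)}~Combining \emph{(iii)} applied to $P$ with $c(P) \ge c(e(P)) = c(I)$ and the $X$-heaviness of $P$ shows, after a short computation, that $I$ itself is $X$-heavy; hence the interval path of \emph{(ii)} is a feasible competitor, so maximality forces equality in \emph{(iii)}. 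This equality pins down how $P$ meets the line: every edge of $X\cap I$ lies on $P$, every edge-location of $\LINE$ inside $I$ is crossed exactly once, and outside $I$ the numbers of spanning $X$- and non-$X$-edges agree. The last condition limits any excursion of $P$ beyond $I$ to a single vertex on either end; and such an excursion forces a detour $v_i \to v_{i-1} \to v_d$ at an end of $P$, which can be replaced by the edge $\{v_i, v_d\}$ and then trimmed (it is a terminal non-$X$ edge) to yield an $X$-heavy path of strictly larger gain --- a contradiction. Hence $V(P)\subseteq V(I)$, and since every edge of $X\cap I$ lies on $P$, any vertex of $I$ skipped by $P$ would have its $X$-edge, and thus itself, on $P$; so $P$ traverses $v_i, \dots, v_j$ in order, i.e.\ $P = \PL{P}$.

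The routine construction of the second part aside, I expect essentially all of the difficulty to sit in the per-location counting bound of step \emph{(iii)} and in the ensuing equality-case analysis --- the excursion-elimination and no-skipping arguments --- which is what converts the numerical identity $\g[X]{P} = c(X\cap I) - c(I\setminus X)$ into the structural identity $P = \PL{P}$.
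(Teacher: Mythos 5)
Your part~2 is essentially the paper's own construction (keep $I\cap X$ and shortcut each maximal subpath of $I\setminus X$ by the single edge joining its endpoints, whose cost on the line equals the length of that gap), so that half is fine, modulo the slightly misleading phrase ``along $\LINE$'' for what are really single jump edges.

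For part~1 your core idea --- compare $P$ with the interval $\PL{P}$, which is $X$-alternating because $X\subseteq \LINE$ covers all of $V(\PL{P})$ --- is the same as the paper's, but the paper finishes immediately: if $P\neq\PL{P}$, then $c(P\cap X)\le c(\PL{P}\cap X)$ (the $X$-edges of $P$ are distinct line edges inside $\PL{P}$) and $c(P\setminus X)>c(\PL{P}\setminus X)$, so $\PL{P}$ is $X$-heavy with strictly larger gain than $P$, contradicting maximality; no equality-case analysis is needed. Your longer route through the weak bound~\emph{(iii)} plus an equality analysis could be repaired, but as written it has concrete defects. The parity claim in~\emph{(iii)} is false as stated: the number of edges of $Q$ spanning a line edge $f$ is odd if and only if $f$ separates the two \emph{endpoints} of $Q$, not if and only if $f\in\PL{Q}$; locations of $\PL{Q}$ lying beyond the endpoints are crossed a positive \emph{even} number of times. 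The inequality in~\emph{(iii)} survives (it only needs ``at least one crossing of each location of $\PL{Q}$, at most one $X$-crossing anywhere''), but the correct parity statement is exactly what the equality case needs to exclude backtracking, and your version cannot deliver it. Relatedly, the ``excursion of $P$ beyond $I$'' discussion in~\emph{(iv)} is incoherent: since $I=\PL{P}$, every vertex of $P$ lies in $I$ by definition, and your proposed detour passes through $v_{i-1}\notin V(I)$; what actually has to be ruled out is that an endpoint of $P$ is not an extreme vertex of $\PL{P}$, and the corrected parity fact does this at once (any location between the extreme vertex and the nearer endpoint would be crossed at least twice, contradicting ``exactly once''). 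Finally, $c(e(P))=c(I)$ presupposes that the endpoints of $P$ are the extremes of $\PL{P}$ --- part of what is being proved --- although the inequality you actually need, $c(P)\ge c(I)$, holds directly because the projections of the edges of $P$ cover $I$.
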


  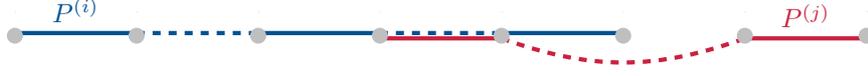
\begin{figure}[t]
\centering
      \begin{tikzpicture}[scale=.4]
      \foreach \pos/\name/\lab in { 
      {(0,1)/v0}, 
    {(4,1)/v1}, 
    {(8,1)/v2}, 
    {(12,1)/v3}, 
    {(16,1)/v4}, 
    {(20,1)/v5},
    {(24,1)/v6},
    {(28,1)/v7}}
         \node[vertex,minimum size=0, color=black!30](\name) at \pos {};
               \foreach \source/ \dest  in {{v0/v1}}
        \draw [TUMBlau, ultra thick,above] ($(\source) - (0,2em)$) to node{$P^{(i)}$} ($(\dest) - (0,2em)$); 
                       \foreach \source/ \dest  in { {v2/v3},{v4/v5}}
        \draw [TUMBlau, ultra thick,above] ($(\source) - (0,2em)$) to node{} ($(\dest) - (0,2em)$); 
        
                       \foreach \source/ \dest  in {{v1/v2},{v3/v4}}
        \draw [TUMBlau, ultra thick, dashed,above] ($(\source) - (0,2em)$) to node{} ($(\dest) - (0,2em)$); 
               \foreach \source/ \dest  in {{v3/v4}}
        \draw [TUMRot, ultra thick, below] ($(\source) - (0,2.5em)$) to node {} ($(\dest) - (0,2.5em)$); 
                       \foreach \source/ \dest  in {{v4/v6}}
        \draw [TUMRot, ultra thick, dashed, above, bend right=20] ($(\source) - (0,2.5em)$) to node {} ($(\dest) - (0,2.5em)$); 
                       \foreach \source/ \dest  in {{v6/v7}}
        \draw [TUMRot, ultra thick, above] ($(\source) - (0,2.5em)$) to node {$P^{(j)}$} ($(\dest) - (0,2.5em)$); 

          \foreach \nod  in {v0,v1,v2,v3,v4,v5,v6,v7}
        \path (\nod) + (0,-2.25em) node[vertex,minimum size=6] {}; 

    \end{tikzpicture}
\caption{A minimal example of a situation in which \cref{lem:laminarity} would be violated. There are two iterations $i<j$ with paths $P^{(i)}$ (depicted in blue) and $P^{(j)}$ (depicted in red) such that $X \cap P^{(j)}$ was not modified between iteration $i$ and iteration $j$. Then, extending the blue path $P^{(i)}$ with the rightmost edge yields an $X^{(i)}$-heavy path with higher gain than $P^{(i)}$, a contradiction.} \label{fig:laminarity-intuition}
\end{figure}

\begin{restatable}{lemma}{restateLemLaminarity}
\label{lem:laminarity}
  \begin{enumerate}
    \item $X^{(i)}, P^{(i)} \subseteq \LINE$ for all $i \in [n]$. \label{lem:laminarity-part1}
    \item For all $i, j \in [n]$ with $i < j$, either $P^{(i)} \cap P^{(j)} = \emptyset$ or $P^{(j)} \subset P^{(i)}$.\label{lem:laminarity:laminarity} \label{lem:laminarity-part2}
  \end{enumerate}
\end{restatable}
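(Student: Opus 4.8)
The plan is to prove both parts simultaneously by induction on the iteration index, tracking the invariant that $X^{(i)} \subseteq L$ (i.e.\ $X^{(i)}$ is a matching contained in the line). For part~\ref{lem:laminarity-part1}, the base case $X^{(1)} = \opto \subseteq L$ is exactly \cref{lem:structure-opt-ontheline}. For the inductive step, assuming $X^{(i)} \subseteq L$, I first need to see that $P^{(i)} \subseteq L$. Here is where \cref{lem:gain-maximizers} does the work: since $X^{(i)} \subseteq L$ is a matching, every edge of $X^{(i)}$ is an interval of $L$, and since $P^{(i)}$ is an $X^{(i)}$-alternating path starting and ending with an $X^{(i)}$-edge, its internal vertices are all covered by $X^{(i)}$; by part~1 of \cref{lem:gain-maximizers} a gain-maximizing $X^{(i)}$-heavy path $P^{(i)}$ must coincide with its own projection $\PL{P^{(i)}}$, hence $P^{(i)}$ is an interval of $L$. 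Then $X^{(i+1)} = X^{(i)} \Delta P^{(i)}$ is a symmetric difference of two subsets of $L$, so $X^{(i+1)} \subseteq L$ as well; one also checks it is still a matching because $P^{(i)}$ is $X^{(i)}$-alternating and the augmentation swaps matched/unmatched status along a subpath of $L$.

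For part~\ref{lem:laminarity-part2}, fix $i < j$ and suppose for contradiction that $P^{(i)}$ and $P^{(j)}$ are neither disjoint nor nested with $P^{(j)} \subset P^{(i)}$. Since both are intervals of $L$ (by part~1), the only remaining possibilities are that they properly overlap, or that $P^{(i)} \subseteq P^{(j)}$. The key structural fact I want to exploit — this is the intuition in \cref{fig:laminarity-intuition} — is that once $P^{(i)}$ is augmented away, the edges of $X$ inside the interval $P^{(i)}$ change, so if $P^{(j)}$ reaches into $P^{(i)}$ it must "see" a different alternating structure than was present at iteration $i$; but $P^{(j)}$ is chosen at a later time, so its gain was \emph{not available} as an extension of $P^{(i)}$ at time $i$. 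I would formalize this via an exchange argument: consider the interval $J = \PL{P^{(j)}}$ and look at the state $X^{(i)}$. If $P^{(j)}$ overlaps $P^{(i)}$ but is not contained in it, then $P^{(i)} \cup P^{(j)}$ (or an appropriate interval containing both) would have been available at iteration $i$, and I want to show it is $X^{(i)}$-heavy with gain strictly larger than $\g[X^{(i)}]{P^{(i)}}$, contradicting the greedy choice. To build that, I use \cref{lem:heavy-union} (disjoint unions of heavy sets are heavy) and \cref{lem:heavy-setminus} part~2 (removing a negative-gain piece from a heavy set keeps it heavy) to assemble the needed heavy interval out of $P^{(i)}$, the part of $P^{(j)}$ outside $P^{(i)}$, and possibly the connecting edges of $L$ between them; part~2 of \cref{lem:gain-maximizers} converts that interval back into an $X^{(i)}$-alternating path with the same cost profile, hence the same gain, so it is a legitimate competitor for $P^{(i)}$.

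The main obstacle I anticipate is the bookkeeping around \emph{which} edges of $X^{(j)}$ versus $X^{(i)}$ lie inside the overlap region, and ensuring the assembled competitor path is genuinely $X^{(i)}$-heavy and has strictly larger gain rather than merely weakly larger (to get a clean contradiction with the greedy tie-broken choice, I may need to argue that any path weakly dominating $P^{(i)}$ and strictly containing it, or overlapping it, can be trimmed to strictly dominate, or else just observe that the greedy choice could have picked the larger path). A subtle point is that between iterations $i$ and $j$ the matching $X$ changes not only on $P^{(i)}$ but potentially on other paths $P^{(i+1)}, \dots, P^{(j-1)}$; but by part~1 each of those is an interval of $L$, and by an inner induction (on $j - i$) those intermediate paths are themselves laminar with respect to $P^{(i)}$ — either disjoint from it or nested inside it — so none of them can "straddle" the boundary of $P^{(i)}$ and mess up the alternating structure that $P^{(j)}$ relies on near that boundary. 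I would isolate this as the crux: show that the portion of the line just outside $P^{(i)}$ retains its $X$-structure from iteration $i$ through iteration $j$, so that the competitor-path construction at time $i$ is valid. Once that is in place, the contradiction with gain-maximality closes the argument, and a short remaining case handles $P^{(i)} \subseteq P^{(j)}$ (which would make $P^{(j)}$ available at time $i$ with gain at least that of $P^{(i)}$, again contradicting greedy choice unless they are equal, which is impossible since $X$ strictly shrinks).
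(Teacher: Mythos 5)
The first genuine gap is in your part-1 induction step. You invoke part 1 of \cref{lem:gain-maximizers} after observing that the internal vertices of $P^{(i)}$ are covered by $X^{(i)}$, but the hypothesis of that lemma is stronger: $X^{(i)}$ must cover \emph{every} vertex of $V(\PL{P^{(i)}})$, i.e.\ every vertex of $\gro$ lying in the interval spanned by $P^{(i)}$, including vertices that the path $P^{(i)}$ jumps over. Exactly such vertices can be exposed in $X^{(i)}$: the endpoints of earlier paths $P^{(i')}$, $i'<i$, are matched by the fixed edges $e^{(i')}\in\algo$ and are no longer covered by $X^{(i)}$. So $X^{(i)}\subseteq\LINE$ alone does not yield $P^{(i)}=\PL{P^{(i)}}$; one must additionally rule out that an endpoint of an earlier path lies strictly inside $\PL{P^{(i)}}$, which is itself a laminarity-type statement about $P^{(i)}$ versus the earlier paths. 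This is precisely why the paper proves both parts simultaneously: its inner claim converts any failure of $P^{(j)}=\PL{P^{(j)}}$ into a ``violating pair'' (an exposed vertex inside the projection must be an endpoint of some earlier $P^{(i')}$), and the whole proof then refutes a carefully chosen (minimal) violating pair. Your plan of closing part 1 by a standalone induction and only then attacking part 2 does not go through as stated.

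For part 2, your sketch stops exactly at the step that carries the proof. The crux is not generic bookkeeping: with $I_1:=\PL{P^{(j)}}\cap P^{(i)}$ and $I_2:=\PL{P^{(j)}}\setminus P^{(i)}$, one uses that $I_1$ is a prefix of the gain-maximizing path $P^{(i)}$, so $\g[X^{(i)}]{I_1}\ge 0$ (\cref{lem:prefix}); since augmenting along $P^{(i)}$ flips the matching on $I_1$, and (after passing to a minimal violating pair, or your inner induction, so that no intermediate iteration touches $\PL{P^{(j)}}$) nothing else changes there, this gives $\g[X^{(j)}]{I_1}=-\g[X^{(i)}]{I_1}\le 0$. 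Peeling $I_1$ off the $X^{(j)}$-heavy interval $\PL{P^{(j)}}$ via \cref{lem:heavy-setminus} shows $I_2$ is $X^{(j)}$-heavy, hence $X^{(i)}$-heavy (it lies outside $P^{(i)}$, where $X$ is unchanged) with gain at least $\tfrac{1}{3}c(I_2)>0$; then $P^{(i)}\cup I_2$ is an $X^{(i)}$-heavy interval of strictly larger gain than $P^{(i)}$, and \cref{lem:interval-to-path} turns it into a path contradicting the greedy choice. You flag this as ``the main obstacle'' but never supply the prefix/sign-flip argument, so neither heaviness of the competitor nor the strict gain improvement is established. Relatedly, your dismissal of the case $P^{(i)}\subseteq P^{(j)}$ is incorrect as stated: $P^{(j)}$ is $X^{(j)}$-alternating, and on the flipped region its $X^{(i)}$-gain is the negative of its $X^{(j)}$-gain, so it was not ``available at time $i$ with gain at least that of $P^{(i)}$''; this case requires the same $I_1$/$I_2$ decomposition (with $I_1=P^{(i)}$), which the paper's argument handles uniformly.
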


\begin{proof}
We say a pair $(i, j)$ with $i < j$ is \emph{violating} if $\PL{P^{(i)}} \cap \PL{P^{(j)}} \neq \emptyset$ and $\PL{P^{(j)}} \setminus \PL{P^{(i)}} \neq \emptyset$. We will show that no violating pair exists. This proves the lemma as the following claim asserts.
  
  \begin{claim*}
    If $P^{(j)} \neq \PL{P^{(j)}}$, then there is a violating pair $(i', j')$ with $i' < j' \leq j$.
  \end{claim*}
  \begin{claimproof}
  Let $j'$ be minimal with $P^{(j')} \neq \PL{P^{(j')}}$.
  Note that minimality of $j'$ implies that $X^{(j')} = \opto \Delta P^{(1)} \Delta \dots \Delta P^{(j'-1)} \subseteq \LINE$. 
  Then \cref{lem:gain-maximizers} implies that there must be a vertex $v \in V(\PL{P^{(j')}})$ not covered by $X^{(j')}$. 
  Because $X^{(j')}$ covers exactly those vertices not covered by $\{e^{(i')} : i' < j'\}$, there must be an $i' < j'$ such that $v$ is an endpoint of $P^{(i')}$.
  The vertex $v$ cannot be an endpoint of $P^{(j')}$, because $v$ is exposed in $X^{(j')}$ and $P^{(j')}$ starts and ends with edges from $X^{(j')}$.
  This implies that $(i', j')$ is a violating pair. 
  \end{claimproof}
  
  Now let us assume there are no violating pairs. Then $P^{(i)} = \PL{P^{(i)}} \subseteq \LINE$ for all $i \in [n]$ by the claim, which also implies $X^{(i)} \subseteq \LINE$. This implies the lemma as, in this situation, the condition for violating pairs coincides with the condition in point \ref{lem:laminarity:laminarity} of the lemma.
  
  By contradiction assume there is a violating pair. Choose $j$ such that $j$ is minimal among all possible choices of violating pairs. 
  Then choose $i$ such that it is maximal for that $j$ among all violating pairs.
  
  Note that the claim implies that $P^{(i)}, X^{(i)}, X^{(j)} \subseteq \LINE$.
  Furthermore, our choice of $i$ and $j$ implies that $\PL{P^{(j')}} \cap \PL{P^{(j)}} = \emptyset$ for all $j'$ with $i < j' < j$, as otherwise $(i, j')$ or $(j', j)$ would be a violating pair.
  In particular, $P^{(j')} \cap P^{(j)} = \emptyset$ for all $j'$ with $i < j' < j$ and thus
  \begin{equation}
    X^{(j)} \cap P^{(j)} = X^{(i + 1)} \cap P^{(j)} = (X^{(i)} \Delta P^{(i)}) \cap P^{(j)}.\label{eq:X-inverted}
  \end{equation}

  Now consider $I_1 := \PL{P^{(j)}} \cap P^{(i)}$ and $I_2 := \PL{P^{(j)}} \setminus P^{(i)}$, both of which are non-empty since $(i, j)$ is a violating pair.
  Then \eqref{eq:X-inverted} implies $X^{(j)} \cap I_1 = I_1 \setminus X^{(i)}$ and $I_1 \setminus X^{(j)} = X^{(i)} \cap I_1$.
 We conclude that $\g[X^{(j)}]{I_1} = -\g[X^{(i)}]{I_1} \leq 0$, as $I_1$ is a prefix of the gain-maximizing path $P^{(i)}$ (see \cref{lem:prefix} in the appendix for a formal proof).
 
  Therefore $I_2 = \PL{P^{(j)}} \setminus I_1$ is $X^{(j)}$-heavy by \cref{lem:heavy-setminus}.
  But then $I_2$ is also $X^{(i)}$-heavy because \eqref{eq:X-inverted} implies $X^{(j)} \cap I_2 = X^{(i)} \cap I_2$.
  Hence $I' := P^{(i)} \cup I_2$ is $X^{(i)}$-heavy by \cref{lem:heavy-union} and further
  \begin{align*}
  \textstyle \g[X^{(i)}]{I'} = \g[X^{(i)}]{P^{(i)}} + \g[X^{(j)}]{I_2} > \g[X^{(i)}]{P^{(i)}},
  \end{align*}
  because $\g[X^{(j)}]{I_2} \geq \frac{1}{3} c(I_2)$.
  By \cref{lem:interval-to-path} there is an $X^{(i)}$-heavy, $X^{(i)}$-alternating path with higher gain than $P^{(i)}$, a contradiction. 
  \end{proof}
  
  \subparagraph*{Tree structure} \cref{lem:laminarity} induces a tree structure on the paths selected by \algUnknownSto.
  We define the directed tree $T=(W,A)$ as follows. We let $W := \{0, \dots n\}$ and define $P^{(0)} := L$.
   For $i, j \in W$ we add the arc $(i, j)$ to $A$ if $P^{(j)} \subset P^{(i)}$ and there is no $i' \in W$ with $P^{(j)} \subset P^{(i')} \subset P^{(i)}$.
  It is easy to see that $T$ is an out-tree with root $0$. We let $T[i]$ be the unique $0$-$i$-path in $T$.
  We define the set of children of $i \in W$ by $\child{i} := \{j \in W : (i, j) \in A\}$.
  Furthermore, let $\HEAVY := \{i \in W : |T[i]| \text{ is odd}\}$ and $\LIGHT := \{i \in W : |T[i]| \text{ is even}\}$ be the set of \emph{heavy} and \emph{light} nodes in the tree, respectively. These names are justified by the following lemma. See \cref{fig:tree} a)-b) for an illustration. 
  
  \begin{figure}[t!]
\begin{center}
\begin{minipage}[b]{0.45\linewidth}
  \begin{tikzpicture}[scale=.5]

    \foreach \x in {1,...,10}
       {\pgfmathtruncatemacro{\label}{\x +1}
       \node [vertex,minimum size=4]  (\x) at (\x,9) {};} 
       
        \foreach \source/ \dest  in {3/10}
        \draw[TUMBlau, bend left=50, ultra thick, above] (\source) to node{\tiny{$1$}} (\dest);
                \foreach \source/ \dest  in {4/7}
        \draw[TUMRot, dotted, bend left=50, ultra thick, above] (\source) to node{\tiny{$2$}} (\dest);
        
                        \foreach \source/ \dest  in {8/9}
        \draw[TUMRot,dotted, ultra thick, above] (\source) to node{\tiny{$5$}} (\dest);
        
                \foreach \source/ \dest  in {1/2}
        \draw[TUMBlau, ultra thick, above] (\source) to node{\tiny{$3$}} (\dest);
        
                        \foreach \source/ \dest  in {5/6}
        \draw[TUMBlau, ultra thick,above] (\source) to node{\tiny{$4$}} (\dest);
   \node (d1) at (0,10) {a)};
   
       \foreach \x in {1,...,10}
       {\pgfmathtruncatemacro{\label}{\x +1}
       \node [vertex,minimum size=4]  (\x) at (\x,5) {};} 
       
        \foreach \source/ \dest  in {3/10}
        \draw[TUMBlau, bend left=50, ultra thick] (\source) to node{} (\dest);
                \foreach \source/ \dest  in {4/7}
        \draw[TUMRot, dotted, bend left=50, ultra thick, above] (\source) to node{\tiny{$2$}} (\dest);
        
                        \foreach \source/ \dest  in {8/9}
        \draw[TUMRot,dotted, ultra thick, above] (\source) to node{\tiny{$5$}} (\dest);
        
                \foreach \source/ \dest  in {5/6, 1/2}
        \draw[TUMBlau, ultra thick] (\source) to node{} (\dest);
        
        \draw [black!30, thick,below] (1) -- ++(0,-1em) -- node[below] {\footnotesize{$R$}} ($(4) - (0,1em)$) -- (4); 
         \draw [black!30, thick,below] (5) -- ++(0,-1em) -- node[below] {$R'$} ($(6) - (0,1em)$) -- (6); 
          \draw [black!30, thick,below] (7) -- ++(0,-1em) -- node[below] {$R''$} ($(10) - (0,1em)$) -- (10); 
          
             \draw [black!70, dashed, thick,below] (4) -- ++(0,-4em) -- node[below] {$\bar{R}$} ($(7) - (0,4em)$) -- (7); 
        
   \node (d1) at (0,6) {c)};

  \end{tikzpicture}
  \end{minipage}
  \begin{minipage}[b]{0.54\linewidth}
   \begin{tikzpicture}[scale=.45]

\tikzset{my node/.style={node distance=0.1cm,shape=rectangle}}
    \node[draw,color=black,  rectangle, rounded corners=1ex,
      align=center, inner sep=1ex] (root) at (2,0){
      \begin{minipage}{0.38\textwidth}
      \centering
      \begin{tikzpicture}[scale=.3]
     \foreach \x in {1,...,10}
       {\pgfmathtruncatemacro{\label}{\x +1}
       \node [vertex,minimum size=4]  (\x) at (\x,1) {};} 
       
        \foreach \source/ \dest  in {1/2,3/4,5/6,7/8,9/10}
        \draw[black!70, ultra thick] (\source) to node{} (\dest);
    \end{tikzpicture}
\end{minipage}};
  \path (-3, -2.5)  node[draw,color=black,   rectangle, rounded corners=1ex,
      align=center, inner sep=1ex] (one) {
      \begin{minipage}{0.1\textwidth}
      \centering
      \begin{tikzpicture}[scale=.5]
       \foreach \x in {1,...,2}
       {\pgfmathtruncatemacro{\label}{\x +1}
       \node [vertex,minimum size=4]  (\x) at (\x,1) {};} 
       
        \foreach \source/ \dest  in {1/2}
        \draw[black!70, ultra thick] (\source) to node{} (\dest);
        
         \foreach \source/ \dest  in {1/2}
        \draw[TUMBlau, bend left = 50,  ultra thick] (\source) to node{} (\dest);

    \end{tikzpicture} 
    \end{minipage}
    };
        \path (5, -2.5) node[draw,color=black,    rectangle, rounded corners=2ex,
      align=center, inner sep=0.7ex] (two) {
      \begin{minipage}{0.28\textwidth}
      \centering
      \begin{tikzpicture}[scale=.3]
       \foreach \x in {1,...,8}
       {\pgfmathtruncatemacro{\label}{\x +1}
       \node [vertex,minimum size=4]  (\x) at (\x,1) {};} 
       
        \foreach \source/ \dest  in {1/2,3/4,5/6,7/8}
        \draw[black!70, ultra thick] (\source) to node{} (\dest);
                 \foreach \source/ \dest  in {1/8}
        \draw[TUMBlau, bend left=30,  ultra thick] (\source) to node{} (\dest);
        
    \end{tikzpicture} 
    \end{minipage}
    };
     \path (2, -5) node[draw,color=black,  rectangle, rounded corners=2ex,
      align=center, inner sep=1ex] (three) {
      \begin{minipage}{0.2\textwidth}
      \centering
      \begin{tikzpicture}[scale=.5]
       \foreach \x in {1,...,4}
       {\pgfmathtruncatemacro{\label}{\x +1}
       \node [vertex,minimum size=4]  (\x) at (\x,1) {};} 
       
        \foreach \source/ \dest  in {1/2,3/4}
        \draw[black!70, ultra thick] (\source) to node{} (\dest);    
                 \foreach \source/ \dest  in {1/4}
        \draw[TUMRot, dotted, bend left= 40, ultra thick] (\source) to node{} (\dest);
        \end{tikzpicture} 
    \end{minipage}
    };
     \path (8, -5) node[draw,color=black,   rectangle, rounded corners=2ex, align=center, inner sep=1ex] (four) {
           \begin{minipage}{0.1\textwidth}
           \centering
      \begin{tikzpicture}[scale=.5]
       \foreach \x in {1,...,2}
       {\pgfmathtruncatemacro{\label}{\x +1}
       \node [vertex,minimum size=4]  (\x) at (\x,1) {};} 
       
        \foreach \source/ \dest  in {1/2}
        \draw[black!70, ultra thick] (\source) to node{} (\dest);
                 \foreach \source/ \dest  in {1/2}
        \draw[TUMRot, dotted, bend left = 50, ultra thick] (\source) to node{} (\dest);
    \end{tikzpicture} 
    \end{minipage}
    };
    \path (-1, -7.5) node[draw,color=black, rectangle, rounded corners=2ex, align=center, inner sep=1ex] (five) {
           \begin{minipage}{0.1\textwidth}
           \centering
      \begin{tikzpicture}[scale=.5]
       \foreach \x in {1,...,2}
       {\pgfmathtruncatemacro{\label}{\x +1}
       \node [vertex,minimum size=4 ]  (\x) at (\x,1) {};} 
       
        \foreach \source/ \dest  in {1/2}
        \draw[black!70, ultra thick] (\source) to node{} (\dest);
                 \foreach \source/ \dest  in {1/2}
        \draw[TUMBlau, bend left = 50, ultra thick] (\source) to node{} (\dest);
    \end{tikzpicture} 
    \end{minipage}
    };    
    
   \node (d0) at (-5,1) {b)};
    
    \node[scale=0.9] (d1) at ($(root.north west) + (-.25,.25)$) {0};
   \node[scale=0.9]  (d2) at ($(one.north west) + (-.25,.25)$) {{3}};
   \node[scale=0.9]  (d3) at ($(two.north west) + (-.25,.25)$) {{1}};
     \node[scale=0.9]  (d4) at ($(three.north west) + (-.25,.25)$) {{2}};
   \node[scale=0.9]  (d5) at ($(four.north west) + (-.5,0)$) {{5}};
   \node[scale=0.9]  (d6) at ($(five.north west) + (-.25,.25)$) {{4}};
       \draw[ black, -> ] (root) -- (one);
    \draw[ black,-> ] (root) -- (two);
    \draw[ black , ->] (two) -- (three);
    \draw[ black , ->] (two) -- (four);
    \draw[ black , ->] (three) -- (five);
  \end{tikzpicture}
  \end{minipage}
  \vspace{-1cm}
  \end{center}
  \caption{a) Illustration of the matching created by an example execution of \algUnknownSto. Edges added to $M_1$ in an iteration from $\HEAVY$ are depicted by blue solid lines and edges created in an iteration from $\LIGHT$ are illustrated by red dotted lines. b) Illustration of the corresponding tree. For every tree-node $i \in W$, grey edges indicate $X^{(i)}$ and an arc illustrates the edge connecting the end nodes of $P^{(i)}$. c) Illustration of example assignment of requests to iterations (defined in \cref{sec:unknown-stage2}). Requests $R,R',R'' \in \mathcal{R}$ are assigned such that $R,R'' \in \mathcal{R}(0)$ and $R' \in \mathcal{R}(2)$. $\bar{R} \in \mathcal{\bar{R}}(0)$ is a gap between two requests associated with tree-node $0$.} 
  \label{fig:tree}
\vspace{-.5cm}
  \end{figure}
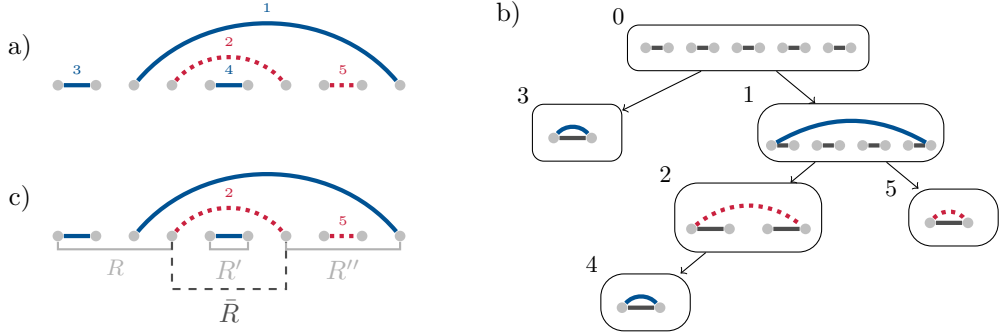
    
  \begin{restatable}{lemma}{restateLemHeavyLight}\label{lem:heavy-light}
    If $i \in \HEAVY$, then $P^{(i)} \cap X^{(i)} = P^{(i)} \cap \opto$ and, in particular, $P^{(i)}$ is $\opto$-heavy.
    If $i \in \LIGHT \setminus \{0\}$, then $P^{(i)} \cap X^{(i)} = P^{(i)} \setminus \opto$ and, in particular, $P^{(i)}$ is $\opto$-light.
  \end{restatable}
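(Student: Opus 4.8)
The plan is to pin down exactly how $X^{(i)}$ differs from $\opto$ on the edge set of the interval $P^{(i)}$, via a parity count that is controlled by the update rule of \algUnknownSto together with the laminarity of \cref{lem:laminarity}; once $P^{(i)} \cap X^{(i)}$ is identified as either $P^{(i)} \cap \opto$ or $P^{(i)} \setminus \opto$, the heaviness/lightness claim follows immediately from the fact that \algUnknownSto only ever selects $X^{(i)}$-heavy paths.

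Concretely, I would first fix $i \in [n]$ and use the update rule $X^{(i)} = \opto \,\Delta\, P^{(1)} \,\Delta\, \dots \,\Delta\, P^{(i-1)}$. Intersecting with $P^{(i)}$ and using that intersection distributes over $\Delta$ gives
\[
  X^{(i)} \cap P^{(i)} \;=\; (\opto \cap P^{(i)}) \;\Delta\; (P^{(1)} \cap P^{(i)}) \;\Delta\; \dots \;\Delta\; (P^{(i-1)} \cap P^{(i)}).
\]
By \cref{lem:laminarity}, for each $j < i$ the set $P^{(j)} \cap P^{(i)}$ is either empty or all of $P^{(i)}$, the latter exactly when $P^{(i)} \subset P^{(j)}$. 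Hence, writing $d := \lvert\{\, j \in [i-1] : P^{(i)} \subset P^{(j)} \,\}\rvert$, the right-hand side collapses to $\opto \cap P^{(i)}$ if $d$ is even and to $(\opto \cap P^{(i)}) \,\Delta\, P^{(i)} = P^{(i)} \setminus \opto$ if $d$ is odd (using $\opto \cap P^{(i)} \subseteq P^{(i)}$). Since $P^{(i)} \cap X^{(i)}$ and $P^{(i)} \setminus X^{(i)}$ partition $P^{(i)}$, exactly as $P^{(i)} \cap \opto$ and $P^{(i)} \setminus \opto$ do, this simultaneously fixes $P^{(i)} \setminus X^{(i)}$: it equals $P^{(i)} \setminus \opto$ when $d$ is even and $P^{(i)} \cap \opto$ when $d$ is odd.

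It then remains to compute the parity of $d$. By \cref{lem:laminarity} again, no path $P^{(j)}$ with $j > i$ properly contains $P^{(i)}$, so $d$ counts \emph{all} proper super-intervals of $P^{(i)}$ among the selected paths $P^{(1)}, \dots, P^{(n)}$. By construction of $T$, these are precisely the proper ancestors of $i$ in $T$, with the sole exception of the root $0$ (which also satisfies $P^{(0)} = \LINE \supset P^{(i)}$ but does not lie in $[n]$). Reading $|T[i]|$ as the number of arcs of the root-to-$i$ path in $T$ — the convention under which $0 \in \LIGHT$, consistent with the restriction to $\LIGHT \setminus \{0\}$ in the statement — the number of proper ancestors of $i$ is $|T[i]|$, so $d = |T[i]| - 1$.

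Combining the two steps finishes the argument. If $i \in \HEAVY$ then $|T[i]|$ is odd, so $d = |T[i]| - 1$ is even, giving $P^{(i)} \cap X^{(i)} = P^{(i)} \cap \opto$ and $P^{(i)} \setminus X^{(i)} = P^{(i)} \setminus \opto$; substituting these into $c(P^{(i)} \cap X^{(i)}) \ge 2\,c(P^{(i)} \setminus X^{(i)})$, which holds because \algUnknownSto picks an $X^{(i)}$-heavy path, yields $c(P^{(i)} \cap \opto) \ge 2\,c(P^{(i)} \setminus \opto)$, i.e.\ $P^{(i)}$ is $\opto$-heavy. If $i \in \LIGHT \setminus \{0\}$ then $i \in [n]$ and $|T[i]|$ is even, so $d$ is odd, $P^{(i)} \cap X^{(i)} = P^{(i)} \setminus \opto$ and $P^{(i)} \setminus X^{(i)} = P^{(i)} \cap \opto$, and the same inequality becomes $c(P^{(i)} \setminus \opto) \ge 2\,c(P^{(i)} \cap \opto)$, i.e.\ $P^{(i)}$ is $\opto$-light. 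The one delicate point I expect is the parity bookkeeping in the penultimate step: one must be careful not to be off by one when excluding the root $0$ and to use the arc-count reading of $|T[i]|$; everything else (the distributive identity, the collapse of the iterated symmetric difference, and reading heaviness/lightness off the resulting equalities) is routine once \cref{lem:laminarity} is in hand.
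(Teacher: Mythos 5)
Your proof is correct and follows essentially the same route as the paper: both identify $P^{(i)} \cap X^{(i)}$ by intersecting the update formula $X^{(i)} = \opto \Delta P^{(1)} \Delta \dots \Delta P^{(i-1)}$ with $P^{(i)}$, use \cref{lem:laminarity} to reduce the computation to a parity count of the strict super-intervals of $P^{(i)}$ (the ancestors of $i$ in $T$ other than the root), and then read off $\opto$-heaviness or $\opto$-lightness from the $X^{(i)}$-heaviness of the selected path. Your explicit handling of the arc-count convention for $|T[i]|$ and the exclusion of the root matches the paper's bookkeeping, so no further changes are needed.
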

  
    \begin{proof}
Let $i \in W \setminus \{0\}$. From \cref{lem:laminarity} we know that for every iteration $i' \in W \setminus \{0\}$ with $i' < i$ it holds that either $P^{(i)} \cap P^{(i')} = \emptyset$ or $P^{(i)} \subset P^{(i')}$. In the first case it holds that $P^{(i)} \cap X^{(i'+1)} = P^{(i)} \cap X^{(i')}$, in the latter case it holds that $P^{(i)} \cap X^{(i'+1)} = P^{(i)} \cap (P^{(i)} \Delta X^{(i')})$. Moreover, it is easy to see that $i'<i$ and $P^{(i)} \subset P^{(i')}$ holds if and only if $i' \in V(T[i])\setminus \{0,i\}$. If $i \in \HEAVY$, this implies that there exist an even number of iterations $i'<i$ for which $P^{(i)} \subset P^{(i')}$ holds. Hence, we obtain 
\[P^{(i)} \cap X^{(i)} = P^{(i)} \cap (\underbrace{P^{(i)} \Delta \dots \Delta P^{(i)}}_{\text{evenly often}} \Delta \, \opto) = P^{(i)} \cap \opto.\] 
If $i \in \LIGHT$, this implies that there exist an odd number of iterations $i'<i$ for which $P^{(i)} \subset P^{(i')}$ holds. Hence, we can deduce that 
\[P^{(i)} \cap X^{(i)} = P^{(i)} \cap (\underbrace{P^{(i)} \Delta \dots \Delta P^{(i)}}_{\text{oddly often}} \Delta \, \opto) = P^{(i)} \setminus \opto.\qedhere\]\end{proof}
  
  The fact that nested paths are alternatingly $\opto$-heavy and $\opto$-light implies an exponential decay property. As a consequence we can bound the cost of $M_1$. 
  
  \begin{restatable}{lemma}{restateLemExpDecay} \label{lem:exp-decay}
    Let $i \in W \setminus \{0\}$. Then
    $\sum_{j \in \child{i}} c(P^{(j)}) \leq \frac{1}{2} \cdot c(P^{(i)})$.
  \end{restatable}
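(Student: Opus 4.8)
The plan is to exploit the alternation between $\opto$-heavy and $\opto$-light paths guaranteed by \cref{lem:heavy-light}, together with the fact that the children $P^{(j)}$, $j \in \child{i}$, are pairwise disjoint intervals contained in $P^{(i)}$ (by \cref{lem:laminarity}). The key quantitative input is that each child path is "the opposite type" of its parent: if $i \in \HEAVY$ then $P^{(i)}$ is $\opto$-heavy while every $P^{(j)}$, $j\in\child i$, is $\opto$-light, and symmetrically if $i \in \LIGHT$. I would first establish the auxiliary claim that for each $j \in \child{i}$, the portion of $P^{(j)}$ that is "costly" from the point of view of $\opto$ lies in a controlled part of $P^{(i)}$, so that summing these costly portions over all children cannot exceed the cheap part of $P^{(i)}$.

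More concretely, consider the case $i \in \HEAVY$ (the case $i \in \LIGHT$ is symmetric by swapping the roles of $P \cap \opto$ and $P \setminus \opto$). By \cref{lem:heavy-light}, $P^{(i)} \cap \opto$ is the majority (cost-wise), and each child $P^{(j)}$ is $\opto$-light, meaning $c(P^{(j)} \cap \opto) \le \tfrac12 c(P^{(j)} \setminus \opto)$, hence $c(P^{(j)} \setminus \opto) \ge \tfrac23 c(P^{(j)})$. Since the intervals $\PL{P^{(j)}}$ are disjoint and all contained in $\PL{P^{(i)}}$, and since $\opto \subseteq \LINE$ (so $\opto$-edges restricted to an interval are exactly the $\opto$-edges lying fully inside it, up to boundary effects), the sets $P^{(j)} \setminus \opto$ for distinct children are disjoint and each is a subset of $\LINE \setminus \opto$ restricted to $P^{(i)}$. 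Now I need to argue that $\bigcup_{j \in \child{i}}(P^{(j)} \setminus \opto)$, measured in cost, is at most $c(P^{(i)} \setminus \opto)$: intuitively the non-$\opto$ line segments used by the children all sit inside $P^{(i)}$ and avoid the $\opto$-edges of $P^{(i)}$, because a child path, being a maximal selected interval properly inside $P^{(i)}$, cannot "cover" an $\opto$-edge of $P^{(i)}$ that $P^{(i)}$ itself uses — here I would invoke the structure of $X^{(i)}$ versus $X^{(j)}$ from the proof of \cref{lem:heavy-light} and the laminarity to pin down exactly which $\LINE$-edges each child occupies. Combining, $\sum_{j} c(P^{(j)}) \le \tfrac32 \sum_j c(P^{(j)} \setminus \opto) \le \tfrac32 \, c(P^{(i)} \setminus \opto) \le \tfrac32 \cdot \tfrac13 c(P^{(i)}) = \tfrac12 c(P^{(i)})$, where the last inequality uses that $P^{(i)}$ is $\opto$-heavy so $c(P^{(i)} \setminus \opto) \le \tfrac13 c(P^{(i)})$.

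The main obstacle I anticipate is making the disjointness-and-containment step fully rigorous: one must be careful that the $\LINE$-segments counted in $c(P^{(j)} \setminus \opto)$ for the various children are genuinely disjoint as subsets of the line and genuinely avoid $P^{(i)} \cap \opto$. Laminarity (\cref{lem:laminarity}) gives disjointness of the intervals $\PL{P^{(j)}}$, but one still has to rule out that two children share a boundary edge of $\LINE$, or that a child's non-matching portion coincides with a matching edge used by the parent; this is where the precise description of $X^{(i)}$ (which edges of $\opto$ have been "flipped" on $P^{(i)}$ by the time iteration $i$ runs) from \cref{lem:heavy-light} and its proof does the real work. Once that bookkeeping is in place, the inequality reduces to the clean chain of constant factors $\tfrac32 \cdot \tfrac13 = \tfrac12$ displayed above, and the $\LIGHT$ case follows by the mirror-image argument with the factors $c(P^{(j)} \cap \opto) \ge \tfrac23 c(P^{(j)})$ and $c(P^{(i)} \cap \opto) \le \tfrac13 c(P^{(i)})$.
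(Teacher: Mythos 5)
Your proof is correct and arrives at the same $\tfrac32\cdot\tfrac13=\tfrac12$ chain as the paper, but it is routed through a different lemma, and your anticipated obstacle is not actually there. The paper's proof never passes to the $\opto$-heavy/$\opto$-light classification: it uses directly that each $P^{(j)}$ is $X^{(j)}$-heavy (the algorithm's selection invariant), that $P^{(j)} \cap X^{(j)} \subseteq P^{(i)} \setminus X^{(i)}$ for $j \in \child{i}$ together with edge-disjointness of the children, and that $P^{(i)}$ is $X^{(i)}$-heavy, yielding $\sum_{j \in \child{i}} c(P^{(j)}) \leq \tfrac{3}{2}\sum_{j \in \child{i}} c(P^{(j)} \cap X^{(j)}) \leq \tfrac{3}{2}\, c(P^{(i)} \setminus X^{(i)}) \leq \tfrac{1}{2}\, c(P^{(i)})$ with no case distinction. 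Your version instead invokes \cref{lem:heavy-light} and splits into $i \in \HEAVY$ and $i \in \LIGHT$; the cost is the case split and the dependence on \cref{lem:heavy-light} (hence again on laminarity), the benefit is that the containment step you flag as the ``main obstacle'' becomes vacuous: since both sides are measured against the single fixed matching $\opto$, the inclusion $P^{(j)} \setminus \opto \subseteq P^{(i)} \setminus \opto$ (resp.\ $P^{(j)} \cap \opto \subseteq P^{(i)} \cap \opto$ in the light case) is immediate from $P^{(j)} \subseteq P^{(i)}$, and distinct children are edge-disjoint by \cref{lem:laminarity}, so no bookkeeping about which $\opto$-edges were flipped in $X^{(i)}$ versus $X^{(j)}$ is needed; children sharing an endpoint vertex is harmless since costs are sums over edges. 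In other words, \cref{lem:heavy-light} already encapsulates exactly the $X$-versus-$\opto$ bookkeeping you were planning to redo by hand, so your write-up can simply cite it and state the two-line inequality chain per case. The $\opto$-based phrasing you propose is essentially what the paper uses in the appendix for the sharper heavy-node bound $\sum_{j \in \child{i}} c(P^{(j)} \cap \opto) \leq \tfrac14 c(P^{(i)} \cap \opto)$ (\cref{lem:expo-decay-opto}), while for the lemma at hand the paper's direct $X^{(i)}$/$X^{(j)}$ formulation is marginally leaner because the required heaviness facts are the algorithm's invariants verbatim.
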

  
    \begin{proof}
  Let $i \in W \setminus \{0\}$.
  Then
  \begin{align*}
    \textstyle
    \sum_{j \in \child{i}} \!\! c(P^{(j)}) \leq \frac{3}{2} \sum_{j \in \child{i}} c(P^{(j)} \cap X^{(j)}) \leq \frac{3}{2} c(P^{(i)} \setminus X^{(i)}) \leq \frac{1}{2} c(P^{(i)}),
  \end{align*}
  where the first inequality follows from the fact that $P^{(j)}$ is $X^{(j)}$-heavy;
  the second inequality follows from the fact that $P^{(j)} \cap X^{(j)} \subseteq P^{(i)} \setminus X^{(i)}$ for $j \in \child{i}$ and the fact that the intervals $P^{(j)}$ for all children are disjoint; 
  the last inequality follows from the fact that $P^{(i)}$ is $X^{(i)}$-heavy. 
  \end{proof}
 
  \begin{restatable}{lemma}{restateBoundedCostUnknown} 
    $c(\algo) \leq 3 c(\opto)$.
  \end{restatable}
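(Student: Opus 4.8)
The goal is to show $c(\algo) = \sum_{i=1}^n c(e^{(i)}) \le 3\,c(\opto)$. The first move is to pass from the added edges $e^{(i)}$ to the paths $P^{(i)}$: since $e^{(i)}$ joins the two endpoints of $P^{(i)}$, the triangle inequality (equivalently, $P^{(i)}\subseteq\LINE$ is an interval by \cref{lem:laminarity}) gives $c(e^{(i)})\le c(P^{(i)})$, so it suffices to prove $\sum_{i=1}^n c(P^{(i)})\le 3\,c(\opto)$. I would then exploit the tree $T$: partition $W\setminus\{0\}$ into the subtrees hanging off the children of the root and bound each subtree separately, using the exponential decay inside a subtree and the heavy structure at the top level.

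The key technical step is a bottom-up induction over $T$ showing that for every $i\in W\setminus\{0\}$, the sum $S(i):=\sum_j c(P^{(j)})$ over all $j$ in the subtree rooted at $i$ (including $i$) satisfies $S(i)\le 2\,c(P^{(i)})$. A leaf gives $S(i)=c(P^{(i)})$, and for an internal node $S(i)=c(P^{(i)})+\sum_{j\in\child{i}}S(j)\le c(P^{(i)})+2\sum_{j\in\child{i}}c(P^{(j)})\le c(P^{(i)})+2\cdot\tfrac12 c(P^{(i)})=2\,c(P^{(i)})$, using the induction hypothesis and the exponential-decay bound \cref{lem:exp-decay} (which applies since $i\neq 0$). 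Summing over the children of the root then yields $\sum_{i=1}^n c(P^{(i)})=\sum_{j\in\child{0}}S(j)\le 2\sum_{j\in\child{0}}c(P^{(j)})$.

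Finally I would bound $\sum_{j\in\child{0}}c(P^{(j)})$ against $c(\opto)$. Each $j\in\child{0}$ lies in $\HEAVY$, since $T[j]$ consists of the single arc $(0,j)$; hence $P^{(j)}$ is $\opto$-heavy by \cref{lem:heavy-light}, which gives $c(P^{(j)})\le\tfrac32\,c(P^{(j)}\cap\opto)$. By \cref{lem:laminarity} the intervals $P^{(j)}$ for $j\in\child{0}$ are pairwise disjoint (if one contained another, the inner one would be a descendant of the outer in $T$, not a sibling), so the sets $P^{(j)}\cap\opto$ are disjoint subsets of $\opto$ and therefore $\sum_{j\in\child{0}}c(P^{(j)}\cap\opto)\le c(\opto)$. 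Chaining the inequalities, $c(\algo)\le\sum_{i=1}^n c(P^{(i)})\le 2\sum_{j\in\child{0}}c(P^{(j)})\le 3\,c(\opto)$. The computation is essentially bookkeeping once \cref{lem:laminarity,lem:exp-decay,lem:heavy-light} are in hand; the only place that needs a moment's thought is converting the per-node decay bound into the global one, which the subtree induction handles cleanly, together with the observation that the top-level paths are disjoint and $\opto$-heavy so their $\opto$-parts neither overlap nor exceed $\opto$.
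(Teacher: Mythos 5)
Your proposal is correct and follows essentially the same route as the paper: the paper sums the exponential-decay bound of \cref{lem:exp-decay} level by level in $T$ (obtaining $\sum_{i \in W_\ell} c(P^{(i)}) \leq (1/2)^{\ell-1}\sum_{i\in W_1} c(P^{(i)})$ and a geometric series), whereas you package the same decay into a subtree induction $S(i)\le 2c(P^{(i)})$, and both arguments finish with the identical observation that the top-level paths are disjoint and $\opto$-heavy, giving the factor $\tfrac32$ and hence $2\cdot\tfrac32=3$. No gaps.
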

    
  \begin{proof}
    Note that $c(\algo) = \sum_{i=1}^{n} c(e^{(i)}) = \sum_{i \in W \setminus \{0\}} c(P^{(i)})$.
    For $\ell \in \mathbb{N}$, let \begin{align*} \textstyle W_{\ell} := \{i \in W : |T[i]| = \ell\}.\end{align*}
    Observe that \cref{lem:exp-decay} implies that
     $\sum_{i \in W_{\ell}} c(P^{(i)}) \leq \left(\frac{1}{2}\right)^{\ell - 1} \sum_{i \in W_1} c(P^{(i)})$
    for all $\ell \in \mathbb{N}$.
    Furthermore
    $\sum_{i \in W_1} c(P^{(i)}) \leq \frac{3}{2} c(\opto)$,
    because $W_1 \subseteq \HEAVY$. Hence
    \[c(M_1) = \sum_{\ell = 1}^{\infty} \sum_{i \in W_\ell} c(P^{(i)}) \leq \sum_{\ell = 1}^{\infty} \left(\frac{1}{2}\right)^{\ell - 1} \sum_{i \in W_1} c(P^{(i)}) = 2 \cdot \frac{3}{2} c(\opto).\qedhere\]
  \end{proof}

\section{Unknown Number of Arrivals -- Stage 2}\label{sec:unknown-stage2}

We now discuss how to react to the arrival of $2k$ additional vertices.
We let $\optt$ be the min-cost perfect matching in the resulting graph $\grt$ and define
\begin{align*} \textstyle \mathcal{R} := \{P : P \text{ is a maximal path in } (\opto \Delta \, \optt) \cap \LINE\}. \end{align*}
We call the elements of $\mathcal{R}$ \emph{requests}. An important consequence of our restriction to the metric space on the line is that $|\mathcal{R}| \leq k$ (in fact, each of the $k$ maximal paths of $\opto \Delta \optt$ is contained in $L$ after removing its first and last edge).

\begin{restatable}{lemma}{restateLemRequestStructure}
\label{lem:request-structure}
  $|\mathcal{R}| \leq k$ and each $R \in \mathcal{R}$ starts and ends with an edge of $\opto$.
\end{restatable}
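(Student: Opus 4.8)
The plan is to use \cref{lem:structure-opt-ontheline}, applied both to $\gro$ and (by the identical argument) to $\grt$: writing $L_2$ for the Hamiltonian path of $\grt$ that visits $V(\grt) = V_1 \cup \{2k \text{ new nodes}\}$ in sorted order, we have $\opto \subseteq \LINE$ and $\optt \subseteq L_2$. The key observation I would first record is that every edge of $\optt$ whose two endpoints both lie in $V_1$ is in fact an edge of $\LINE$: since $\optt \subseteq L_2$, such an edge joins two consecutive vertices of $V(\grt)$, so there is no vertex of $V_1 \subseteq V(\grt)$ strictly between its endpoints, hence the two endpoints are also consecutive on $\LINE$. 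Together with $\opto \subseteq \LINE$, this makes $(\opto \Delta \optt) \cap \LINE$ a subgraph of the path $\LINE$, so it is automatically a disjoint union of paths without isolated vertices; its maximal paths are exactly the elements of $\mathcal{R}$, each having two distinct endpoints, all of which lie in $V_1$ and all of which are distinct across requests (the requests are vertex-disjoint components of a forest).

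Next I would show that every $R \in \mathcal{R}$ starts and ends with an $\opto$-edge. Suppose for contradiction that $R$ terminates at a vertex $w$ via an edge $\{v,w\}$ lying in $\optt \setminus \opto$ (an edge of the symmetric difference is in exactly one of the two matchings). Since $w \in V_1$, it is covered by $\opto$, say by $\{w,x\}$; as $\{v,w\}$ is the unique $\optt$-edge at $w$ and $\{w,x\} \neq \{v,w\}$, we get $\{w,x\} \in \opto \setminus \optt \subseteq \opto \Delta \optt$, and $\{w,x\} \in \LINE$ because $\opto \subseteq \LINE$; moreover $x \neq v$ (else $\{v,w\} = \{w,x\} \in \opto$). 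Hence $\{w,x\}$ is a second edge of $(\opto \Delta \optt) \cap \LINE$ incident to $w$, contradicting that $w$ is an endpoint of $R$; the same argument applies at the other endpoint.

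Finally, for $|\mathcal{R}| \leq k$ I would construct an injection from the set of endpoints of requests into the set of $2k$ new nodes. Let $v$ be an endpoint of some $R \in \mathcal{R}$. Then $v \in V_1$, and by the previous step the unique edge of $(\opto \Delta \optt) \cap \LINE$ at $v$ is an $\opto$-edge. Let $e_T$ be the $\optt$-edge covering $v$; it differs from that $\opto$-edge, since otherwise it would not lie in $\opto \Delta \optt$, so $e_T \in \optt \setminus \opto \subseteq \opto \Delta \optt$. If both endpoints of $e_T$ were in $V_1$, then by the key observation $e_T \in \LINE$, giving a second edge of $(\opto \Delta \optt) \cap \LINE$ at $v$ — impossible. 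Hence $e_T$ is incident to a new node, and we may send $v$ to that new node; this map is injective because two vertices of $V_1$ sharing an $\optt$-neighbour must coincide. Since each request contributes exactly two (distinct) endpoints, $2|\mathcal{R}| \leq 2k$.

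The main obstacle, and the one point where being on the line is genuinely used, is the key observation of the first paragraph: without knowing that $\optt$-edges internal to $V_1$ are literally edges of $\LINE$, one cannot exclude that a single alternating path of $\opto \Delta \optt$ is cut into several pieces by intersecting with $\LINE$ (which would inflate $|\mathcal{R}|$), nor that a request could end with an $\optt$-edge. Everything else is a degree count in the subgraph $(\opto \Delta \optt) \cap \LINE$ of the path $\LINE$.
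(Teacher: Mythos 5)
Your proof is correct, and it rests on the same pivotal fact as the paper's proof --- that every $\optt$-edge with both endpoints in $V_1$ is an edge of $\LINE$ --- but you reach both this fact and the bound $|\mathcal{R}|\le k$ by a genuinely different route. The paper establishes the key fact by a direct uncrossing/exchange argument on $\optt$, whereas you obtain it by noting that \cref{lem:structure-opt-ontheline} applies verbatim to the second-stage instance, so that $\optt$ is the consecutive-pairs matching along the sorted order of $V(\grt)$; this is a clean reuse, as long as you state explicitly that the lemma (formulated only for $\gro$) carries over with an identical proof. For the counting, the paper analyses the connected components of $\opto\Delta\optt$: it excludes cycles, shows each component is a path whose first and last edges are the only ones leaving $\LINE$, and concludes that each of the at most $k$ components contains exactly one request, which then automatically starts and ends with $\opto$-edges. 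You instead argue locally: a degree count in the subgraph $(\opto\Delta\optt)\cap\LINE$ of the path $\LINE$ shows that requests terminate in $\opto$-edges (this step needs only $\opto\subseteq\LINE$), and the $\optt$-partner of each request endpoint is forced to be one of the $2k$ arrivals, giving an injection of the $2|\mathcal{R}|$ endpoints into the arrivals. Your version is somewhat more elementary --- you never have to rule out cycles or invoke the global component structure --- while the paper's component analysis additionally yields the correspondence between requests and components of $\opto\Delta\optt$ that the main text alludes to (each component, stripped of its first and last edge, is exactly one request). One cosmetic remark: your closing sentence credits the key observation with excluding requests that end in $\optt$-edges, but your own argument for that step uses only $\opto\subseteq\LINE$; the key observation is really needed only for the injection and to prevent a single component from shattering into several requests.
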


For simplification of the analysis we make the following assumptions on the structure of the request set. 

\begin{assumption}\label{ass:request-light-laminarity}
  For all $i \in \LIGHT$ and all $R \in \mathcal{R}$, either $P^{(i)} \cap R = \emptyset$ or $P^{(i)} \subseteq R$, or $R \subseteq P^{(i)}$.
\end{assumption}

\begin{assumption}\label{ass:request-heavy-prefixes}
  For all $j \in \HEAVY$, if $\bigcup_{R \in \mathcal{R}}R \cap P^{(j)} \neq \emptyset$, then the first and last edge of $P^{(j)}$ are in $\bigcup_{R \in \mathcal{R}}R$.
\end{assumption}

In \cref{apx:assumptions-wlog} we prove formally that these are without loss of generality. For intuition, we give a short sketch of the proof. Assume we are confronted with a set of requests $\mathcal{R}$ that violates at least one of the assumptions. Based on $\mathcal{R}$, we can construct a modified set of requests $\mathcal{R'}$ complying with the assumptions and fulfilling two properties: First, $\g[O_1]{\bigcup_{R \in \mathcal{R'}}R} > \g[O_1]{\bigcup_{R \in \mathcal{R}}R}$, i.e., $R'$ induces smaller second stage optimal costs than $\mathcal{R}$ and secondly, $\vert \mathcal{R'} \vert \leq \vert \mathcal{R} \vert$, i.e., $\mathcal{R'}$ allows for at most as many modifications as $\mathcal{R}$ does. As a consequence, if we run our proposed second stage algorithm for $\mathcal{R'}$ and construct $M_2$ accordingly, the analysis of the approximation and recourse factor carry over from the analysis of $\mathcal{R'}$ to the actual set of requests $\mathcal{R}$. Hence, we can assume w.l.o.g. that $\mathcal{R}$ fulfills the assumptions.

From the set of requests $\mathcal{R}$, we will determine a subset of at most $2k$ edges that we delete from $\algo$.
To this end, we assign each request to a light node in $\LIGHT$ as follows.
For $R \in \mathcal{R}$ we define $i_R := \max \{i \in \LIGHT : R \subseteq P^{(i)} \}$, i.e., $P^{(i_R)}$ is the inclusionwise minimal interval of a light node containing $R$.
For~$i \in \LIGHT$, let 
\begin{align*}
\textstyle \mathcal{R}(i) := \big\{R \in \mathcal{R} : i_R = i \big\}.
\end{align*}

Furthermore, we also keep track of the gaps between the requests in $\mathcal{R}(i)$ as follows. 
For $i \in \LIGHT$, let
\begin{align*}
  \bar{\mathcal{R}}(i) := \big\{\bar{R} \subseteq P^{(i)} : \ & \textstyle \bar{R} \text{ is a maximal path in } P^{(i)} \setminus \bigcup_{R \in \mathcal{R}(i)} R \text{ and } \\
  & \bar{R} \subseteq P^{(j)} \text{ for some } j \in \child{i} \big\}.
\end{align*}
Note that $R' \cap R'' = \emptyset$ for all $R', R'' \in \mathcal{R}(i) \cup \bar{\mathcal{R}}(i), i \in \LIGHT$.
However, $\bar{R} \in \bar{\mathcal{R}}(i)$ may contain a request $R \in \mathcal{R}(j)$ from descendants $j$ of $i$. See \cref{fig:tree} c) for an illustration of the assignment.

For $i \in \LIGHT$, let $\HEAVY(i) := \child{i}$ and $\LIGHT(i) := \{i' \in W : i' \in \child{j} \text{ for some } j \in \HEAVY(i)\}$. Note that $\HEAVY(i) \subseteq \HEAVY$ and $\LIGHT(i) \subseteq \LIGHT$.
Before we can state the algorithm for computing the second stage reply, we need one final lemma.

\begin{restatable}{lemma}{restateLemIandJexist}\label{lem:i-and-j-exist}
  Let $i \in \LIGHT$. For every $R \in \mathcal{R}(i)$, there is a $j \in \HEAVY(i)$ with $P^{(j)} \cap R \neq \emptyset$.
  For every $\bar{R} \in \bar{\mathcal{R}}(i)$, there is an $i' \in \LIGHT(i)$ with $P^{(i')} \cap R \neq \emptyset$.
\end{restatable}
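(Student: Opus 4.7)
The common strategy for both parts is to select a well-chosen edge of $R$ (resp. $\bar{R}$), trace how it is consumed by the algorithm, and use the laminarity (\cref{lem:laminarity}) together with the heavy/light structure (\cref{lem:heavy-light}) to locate the consuming iteration in the subtree rooted at the desired child (resp. grandchild) of $i$ in $T$.

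For part 1, \cref{lem:request-structure} provides an $\opto$-edge $e \in R \cap \opto \subseteq P^{(i)} \cap \opto$. If $i = 0$, then $e \in X^{(1)}$. Otherwise, \cref{lem:heavy-light} gives $P^{(i)} \cap X^{(i)} = P^{(i)} \setminus \opto$, so $e \notin X^{(i)}$ but $e \in X^{(i+1)} = X^{(i)} \Delta P^{(i)}$. Either way, $e \in X^{(i+1)}$, and since the algorithm terminates with $X = \emptyset$, there must be a first iteration $k > i$ at which $e \in P^{(k)}$; at this iteration $e \in X^{(k)}$, so in fact $e \in P^{(k)} \cap X^{(k)}$. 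Because $e \in P^{(k)} \cap P^{(i)}$ and $k > i$, \cref{lem:laminarity} forces $P^{(k)} \subseteq P^{(i)}$; hence $k$ lies in the subtree of some $j \in \child{i}$, giving $e \in P^{(k)} \subseteq P^{(j)}$ and thus $e \in P^{(j)} \cap R$.

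For part 2 (interpreting the second conclusion as $P^{(i')} \cap \bar{R} \neq \emptyset$), conditional on the subclaim that $\bar{R}$ contains an edge $e' \notin \opto$, the argument is parallel. Since $\bar{R} \subseteq P^{(j)}$ for some $j \in \child{i} \subseteq \HEAVY$, \cref{lem:heavy-light} yields $P^{(j)} \cap X^{(j)} = P^{(j)} \cap \opto$, so $e' \notin X^{(j)}$ and thus $e' \in X^{(j+1)}$. As before, some iteration $k > j$ satisfies $e' \in P^{(k)} \cap X^{(k)}$, and laminarity implies $P^{(k)} \subseteq P^{(j)}$. The first node $i' \in \child{j}$ on the tree-path from $j$ to $k$ therefore satisfies $i' \in \LIGHT(i)$ and $e' \in P^{(k)} \subseteq P^{(i')}$, so $P^{(i')} \cap \bar{R} \ni e'$.

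The main obstacle is the subclaim that $\bar{R}$ is not entirely contained in $\opto$. If $|\bar{R}| \geq 2$, this is immediate since consecutive edges of $L$ alternate between $\opto$ and its complement. The delicate case is $\bar{R} = \{e\}$ with $e \in \opto$, which I would rule out as follows. Both $L$-neighbors of $e$ lie outside $\opto$. By maximality of $\bar{R}$ in $P^{(i)} \setminus \bigcup_{R' \in \mathcal{R}(i)} R'$, every $L$-neighbor of $e$ lying in $P^{(i)}$ must belong to some $R' \in \mathcal{R}(i)$; but $R' \subseteq L$ is an alternating path in $\opto \Delta \optt$ beginning and ending with $\opto$-edges (\cref{lem:request-structure}), and the alternating structure forces $R'$ to extend from such a non-$\opto$ neighbor through the adjacent $\opto$-edge $e$ itself, contradicting $\bar{R} \cap R' = \emptyset$. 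The only remaining possibility is that $e$ sits at the boundary of $P^{(i)}$: for $i \in \LIGHT \setminus \{0\}$ this is incompatible with $e \in \opto$ because the boundary edges of $P^{(i)}$ lie in $P^{(i)} \cap X^{(i)} = P^{(i)} \setminus \opto$ (\cref{lem:heavy-light}); and for $i = 0$ the unique interior $L$-neighbor of $e$ triggers the same alternating-path contradiction.
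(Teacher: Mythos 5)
Correct, and essentially the same route as the paper's: extract an $\opto$-edge of $R$ (resp.\ a non-$\opto$ edge of $\bar R$), observe via \cref{lem:heavy-light} that it lies in $X^{(i+1)}$ (resp.\ $X^{(j+1)}$), and conclude from laminarity that the edge must eventually be swallowed by a descendant path, landing it in a child of $i$ (resp.\ grandchild of $i$). Your singleton case analysis to exhibit a non-$\opto$ edge in $\bar R$ is more laborious than the paper's shortcut---observing that $\bar R$ is not a prefix of $P^{(i)}$ and is hence flanked by two requests whose $\opto$-boundary edges force the boundary edges of $\bar R$ off $\opto$---but both arguments are sound.
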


We are now ready to state the algorithm. We first describe and discuss a simplified version, which yields an approximation guarantee of $19$. At the end of the paper, we discuss how to slightly adapt the algorithm so as to obtain the factor of $10$ given in \cref{thm:k-unknown}.

\begin{description}
\item[\algUnknownStt]{ works as follows:
\begin{romanenumerate}
\item{Create the matching $M'$ by removing the following edges from $\algo$ for each $i \in \LIGHT$:
\begin{enumerate}[1.]
  \item The edge $e^{(i)}$ if $i \neq 0$ and $\mathcal{R}(i) \neq \emptyset$.
  \item For each $R \in \mathcal{R}(i)$ the edge $e^{(j^*_R)}$ where $j^*_R := \min \{j \in \HEAVY(i) : P^{(j)} \cap R \neq \emptyset\}$.
  \item For each $\bar{R} \in \bar{\mathcal{R}}(i)$ the edge $e^{(i^*_{\bar{R}})}$ where $i^*_{\bar{R}} := \min \{i' \in \LIGHT(i) : P^{(i')} \cap R \neq \emptyset\}$.
\end{enumerate}}
\item Let $M''$ be a min-cost matching on all vertices not covered by $M'$ in $\grt$.
\item Return $\algt := M' \cup M''$.
\end{romanenumerate}}
\end{description}

Let $Z$ be indices of the edges removed in step (i). It is not hard to see that $|\bar{\mathcal{R}}(i)| \leq |\mathcal{R}(i)| - 1$ for each $i \in \LIGHT$ and therefore $|Z| \leq 2k$, bounding the recourse of {\algUnknownStt} as intended.

\begin{restatable}{lemma}{restateLemRecourseBoundedUnknown}
\label{lem:recourse-bound-unknown}
  $|Z| \leq 2 k$.
\end{restatable}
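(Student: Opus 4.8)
The plan is to bound $|Z|$ by partitioning the removed edges according to which light node $i \in \LIGHT$ caused their removal in step (i). For a fixed $i \in \LIGHT$, the removed edges are: possibly $e^{(i)}$ itself (at most one edge), one edge $e^{(j^*_R)}$ for each $R \in \mathcal{R}(i)$, and one edge $e^{(i^*_{\bar R})}$ for each $\bar R \in \bar{\mathcal{R}}(i)$. (\cref{lem:i-and-j-exist} guarantees these edges exist, so the description is well-defined.) Summing over all $i \in \LIGHT$ gives
\begin{align*}
|Z| \;\le\; \sum_{i \in \LIGHT \setminus \{0\}} 1 \;+\; \sum_{i \in \LIGHT} |\mathcal{R}(i)| \;+\; \sum_{i \in \LIGHT} |\bar{\mathcal{R}}(i)|.
\end{align*}
Since the sets $\mathcal{R}(i)$ partition $\mathcal{R}$, the middle sum equals $|\mathcal{R}| \le k$ by \cref{lem:request-structure}. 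So the task reduces to bounding the first and third sums together by $k$.

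The key combinatorial observation is that $|\bar{\mathcal{R}}(i)| \le |\mathcal{R}(i)| - 1$ whenever $\mathcal{R}(i) \neq \emptyset$: the elements of $\bar{\mathcal{R}}(i)$ are maximal subpaths of $P^{(i)}$ lying strictly between consecutive requests of $\mathcal{R}(i)$ (and moreover contained in some child interval), so there are at most $|\mathcal{R}(i)| - 1$ of them — one for each gap between the $|\mathcal{R}(i)|$ disjoint requests ordered along the line. I would justify disjointness and the linear ordering from \cref{ass:request-light-laminarity} (which forces requests and the intervals $P^{(j)}$ to be laminar) together with \cref{lem:request-structure}. When $\mathcal{R}(i) = \emptyset$ we have $\bar{\mathcal{R}}(i) = \emptyset$ as well, and no edge $e^{(i)}$ is removed for that $i$. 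Therefore, for each $i \in \LIGHT \setminus \{0\}$ with $\mathcal{R}(i) \neq \emptyset$, the contribution of step 1.(a) and step 1.(c) combined is at most $1 + (|\mathcal{R}(i)| - 1) = |\mathcal{R}(i)|$; for $i$ with $\mathcal{R}(i) = \emptyset$ (or $i = 0$) the contribution is $0 \le |\mathcal{R}(i)|$. Summing,
\begin{align*}
\sum_{i \in \LIGHT \setminus \{0\}} 1 \;+\; \sum_{i \in \LIGHT} |\bar{\mathcal{R}}(i)| \;\le\; \sum_{i \in \LIGHT} |\mathcal{R}(i)| \;=\; |\mathcal{R}| \;\le\; k.
\end{align*}

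Combining the two displayed bounds yields $|Z| \le k + k = 2k$, as claimed. I would also note for completeness that the edges removed in step (i) may not be distinct across different $i$ (an edge $e^{(j)}$ for $j \in \HEAVY$ could in principle be selected as $j^*_R$ for requests assigned to different light ancestors, or an $e^{(i')}$ could coincide with some $e^{(i)}$); but this only helps, since $|Z|$ counts the distinct indices, so the multiset bound above is an upper bound on $|Z|$.

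The main obstacle I anticipate is the clean bookkeeping of $|\bar{\mathcal{R}}(i)| \le |\mathcal{R}(i)| - 1$: one must argue carefully that the gaps counted in $\bar{\mathcal{R}}(i)$ genuinely lie between two requests of $\mathcal{R}(i)$ (rather than at the ends of $P^{(i)}$), which is exactly where the extra condition "$\bar R \subseteq P^{(j)}$ for some $j \in \child{i}$" in the definition of $\bar{\mathcal{R}}(i)$ is used — a gap abutting an endpoint of $P^{(i)}$ need not be contained in any single child interval and hence is excluded. Making this precise requires invoking \cref{ass:request-light-laminarity} to get a consistent left-to-right order on $\mathcal{R}(i)$ and the child intervals $P^{(j)}$, $j \in \child{i}$, and then a short case analysis on where a maximal gap can sit. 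Everything else is elementary counting.
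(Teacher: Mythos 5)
Your proof is correct and follows essentially the same route as the paper: the paper likewise reduces the claim to the counting bound $|\bar{\mathcal{R}}(i)| \leq |\mathcal{R}(i)| - 1$ for $\mathcal{R}(i)\neq\emptyset$ (its Lemma on non-request counting, proved exactly via the observation that the two maximal gaps that are prefixes of $P^{(i)}$ cannot lie in a child interval), and then sums $|Z^{(i)}|\le 2|\mathcal{R}(i)|$ over $i\in\LIGHT$ using $\sum_i|\mathcal{R}(i)|=|\mathcal{R}|\le k$.
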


Now let $Y := W \setminus (Z \cup \{0\})$ be the nodes corresponding to edges that have not been removed and \begin{align*} \textstyle \bar{Y} := \{i \in Y : T[i] \setminus \{0, i\} \subseteq Z\} \end{align*} the nodes that correspond to maximal intervals that have not been removed.

The following lemma is a consequence of the exponential decay property. It shows that in order to establish a bound on the cost of $\algt$, it is enough to bound the cost of all paths $P^{(i)}$ for $i \in \bar{Y}$.

\begin{restatable}{lemma}{restateLemGenericBound}
\label{lem:bound-against-barY}
  $c(\algt) \leq c(\optt) + 3 \sum_{i \in \bar{Y}} c(P^{(i)})$
\end{restatable}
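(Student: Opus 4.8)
The plan is to decompose the cost of $\algt = M' \cup M''$ into the cost retained from $\algo$ (the edges surviving in $M'$) and the cost of the repair matching $M''$, and to bound each piece in terms of $c(\optt)$ plus a ``correction'' term supported on the roots $\bar Y$ of the removed subtrees. First I would observe that $M'$ consists exactly of the edges $e^{(i)}$ with $i \in Y$, together with the edges of $\core$ (the ``core'' perfect matching $X$ at the end of \algUnknownSto which is empty — so actually $M' = \{e^{(i)} : i \in Y\}$ together with whatever of $\algo$ was not removed; but since $\algo = \{e^{(i)} : i \in W\setminus\{0\}\}$, we have $M' = \{e^{(i)} : i \in Y\}$). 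The vertices exposed by $M'$ are precisely the endpoints of the removed edges $e^{(i)}$, $i \in Z$, together with the $2k$ new arrivals. So $M''$ is a min-cost matching on this exposed set $U$, and since $\optt \Delta M'$ matches the same set $U$, we get $c(M'') \le c(\optt \Delta M') \le c(\optt) + c(M')$ — but this is too lossy. Instead I would exploit that $M'' $ is a min-cost $U$-join and route it cleverly.

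The key idea I would use: for each $i \in \bar Y$, the interval $P^{(i)}$ is a maximal surviving interval (all strict ancestors except the root are in $Z$). The edges of $M'$ lying ``below'' $i$ in the tree, i.e. $\{e^{(j)} : j \in Y, P^{(j)} \subseteq P^{(i)}\}$, together cover the same vertex set inside $P^{(i)}$ as some collection of $\opto$-edges does — and by the exponential decay property (\cref{lem:exp-decay}), their total cost is at most $\sum_{\ell \ge 1}(1/2)^{\ell-1} c(P^{(i)}) \cdot (\text{const})$, i.e. $O(c(P^{(i)}))$. So $c(M') = \sum_{i \in Y} c(P^{(i)}) \le \sum_{i \in \bar Y}\sum_{j : P^{(j)} \subseteq P^{(i)}} c(P^{(j)}) \le 2\sum_{i \in \bar Y} c(P^{(i)})$, using that the $P^{(i)}$, $i \in \bar Y$, are disjoint and that within each the geometric series sums to $2 c(P^{(i)})$. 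Then for the repair cost: I would argue $c(M'') \le c(\optt) + \sum_{i \in \bar Y} c(P^{(i)})$ by building a feasible $U$-join from $\optt$ restricted appropriately — outside the removed intervals $M'$ agrees (in covered-vertex terms) with $\opto$, hence with $\optt$ up to the requests, so the ``extra'' detour needed to reconnect is paid for by $\optt$ on the request intervals plus one traversal of each maximal removed interval $P^{(i)}$, $i \in \bar Y$; a careful accounting gives the claimed $c(\optt) + \sum_{i\in\bar Y} c(P^{(i)})$. Adding $c(M') \le 2\sum_{i \in \bar Y} c(P^{(i)})$ yields exactly $c(\algt) \le c(\optt) + 3\sum_{i \in \bar Y}c(P^{(i)})$.

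Concretely, the steps in order: (1) identify $M' = \{e^{(i)} : i \in Y\}$ and its exposed vertex set $U$; (2) use \cref{lem:exp-decay} and disjointness of the $\bar Y$-intervals to show $\sum_{i \in Y} c(P^{(i)}) \le 2 \sum_{i \in \bar Y} c(P^{(i)})$, noting that every $i \in Y$ lies in the subtree of a unique $i' \in \bar Y$ (because $\bar Y$ cuts every root-to-leaf path of surviving nodes) and summing the geometric series $\sum_{\ell\ge 0}(1/2)^\ell = 2$ hanging off each $i'$; (3) construct an explicit $U$-join $J$ with $c(J) \le c(\optt) + \sum_{i \in \bar Y} c(P^{(i)})$ — take $\optt$, and on each component/interval where $M'$ has been thinned, patch using the boundary edges of the $P^{(i)}$, $i\in\bar Y$ (each such interval contributes its length once, since its two endpoints are the only ``loose ends'' once we follow $\optt$ through it, using that $P^{(i)}$ starts and ends with $\opto$-edges by the structure lemmas); (4) conclude $c(M'') \le c(J)$ since $M''$ is min-cost, and combine.

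The main obstacle I expect is step (3): correctly showing that the repair matching costs only $c(\optt)$ plus \emph{one} traversal of each maximal removed interval, rather than two or more. This requires carefully pairing up the exposed vertices — the $2k$ arrivals and the $2|Z|$ endpoints of removed edges — and checking that, because $P^{(i)}$ for $i \in \bar Y$ begins and ends with $\opto$-edges (\cref{lem:heavy-light}, \cref{lem:request-structure}) and because of \cref{ass:request-heavy-prefixes}, the portion of $\optt$ outside the removed intervals already supplies a matching that leaves only the $2|\bar Y| + 2k$ interval-boundary and arrival vertices unmatched in a ``nested'' pattern, so that closing them up costs at most $\sum_{i\in\bar Y} c(P^{(i)})$ beyond $c(\optt)$. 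Handling the interaction with requests that straddle a removed interval boundary, and making the $U$-join genuinely feasible (even parity on each component), is the delicate bookkeeping; everything else is the exponential-decay summation, which is routine given \cref{lem:exp-decay}.
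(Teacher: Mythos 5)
Your overall decomposition $c(\algt) = c(M') + c(M'')$ and your step (2) --- bounding $c(M') = \sum_{i \in Y} c(P^{(i)}) \le 2\sum_{i \in \bar{Y}} c(P^{(i)})$ via \cref{lem:exp-decay} and the observation that every $i \in Y$ lies below a unique node of $\bar{Y}$ --- coincide with the paper's proof. The genuine gap is step (3), which is the actual content of the lemma and which you leave at the level of ``a careful accounting gives the claimed bound,'' while yourself flagging the feasibility/parity bookkeeping as unresolved. As sketched, the patching construction is not pinned down: the vertices exposed by $M'$ are the $2k$ arrivals together with the $2|Z|$ endpoints of the removed edges $e^{(i)}$, $i \in Z$ (not the ``$2|\bar{Y}| + 2k$ interval-boundary and arrival vertices'' you later invoke), and you never establish why repairing them on top of $\optt$ can be charged to a \emph{single} traversal of each $P^{(i)}$, $i \in \bar{Y}$. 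Your anticipated reliance on \cref{lem:request-structure} and \cref{ass:request-heavy-prefixes} is also a misdirection: the lemma holds for an arbitrary removal set $Z$ and its proof needs no information about requests at all.

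The missing ingredient is a short structural construction that makes all of this bookkeeping unnecessary. Let $\bar{M} := P^{(i_1)} \Delta \dots \Delta P^{(i_\ell)}$ be the symmetric difference of \emph{all} surviving intervals, $i_1,\dots,i_\ell$ an enumeration of $Y$. Since each $P^{(i)}$ is a join on the two endpoints of $e^{(i)}$ and these endpoint pairs are pairwise disjoint, $\bar{M}$ is a $(V(\grt) \setminus U)$-join, where $U$ is the set of vertices exposed by $M'$. By laminarity (\cref{lem:laminarity}), every $P^{(i)}$ with $i \in Y$ is contained in $P^{(i')}$ for the unique $i' \in \bar{Y}$ on its root path, so $\bar{M} \subseteq \bigcup_{i \in \bar{Y}} P^{(i)}$ and hence $c(\bar{M}) \le \sum_{i \in \bar{Y}} c(P^{(i)})$, since these intervals are disjoint. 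Now $\bar{M} \Delta \optt$ is a $U$-join, and as $M''$ is a min-cost matching on $U$ (and any $U$-join can be shortcut to such a matching without increasing cost), $c(M'') \le c(\bar{M}) + c(\optt) \le c(\optt) + \sum_{i \in \bar{Y}} c(P^{(i)})$. Combined with your step (2) this yields the lemma; without an argument of this kind (or a completed version of your own patching scheme, including the parity analysis you defer), the proof is incomplete precisely at its central step.
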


It remains to bound the cost of the paths associated with the tree nodes in $\bar{Y}$.
We establish a charging scheme by partitioning the line into three areas $A, B, C$:

\begin{enumerate}
  \item For $R \in \mathcal{R}$, let $A(R) := R \setminus P^{(j^*_R)}$. We define $A := \bigcup_{R \in \mathcal{R}} A(R)$.
  \item For $i \in \LIGHT$ and $\bar{R} \in \bar{\mathcal{R}}(i)$, let $B(\bar{R}) := \bar{R} \setminus \bigcup_{i' \in \LIGHT(i) \cap Z} P^{(i')}$.\\
  We define $B := \bigcup_{\bar{R} \in \bar{\mathcal{R}}} B(\bar{R})$.
  \item We define $C := \LINE \setminus (A \cup B)$.
\end{enumerate}

Consider a set $A(R)$ for some $R \in \mathcal{R}$. 
Recall that $i_R$ is the index of the smallest light interval constructed by \algUnknownSto containing $R$ and that $j^*_R$ is the first child interval of $i_R$ created by \algUnknownSto that intersects $R$.
From the choice of $j^*_R$ and the greedy construction of $P^{(j^*_R)}$ as a path of maximum $X^{(j^*_R)}$-gain we can conclude that $A(R)$ is not $\opto$-heavy; see \cref{fig:intuition-stage2} for an illustration.
Therefore $c(A(R) \setminus \opto) > \frac{1}{3} c(A(R))$.
Note that $\optt \cap A(R) = A(R) \setminus \opto$, because $A(R) \subseteq R$. 
Hence we obtain the following lemma.

\begin{restatable}{lemma}{restateLemHeavyIntervalsBudget}
\label{lem:heavy-intervals-budget}
  Let $R \in \mathcal{R}$. Then $\frac{1}{3} c(A(R)) \leq c(\optt \cap A(R))$.
\end{restatable}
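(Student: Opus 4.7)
The plan is to argue by contradiction that $A(R)$ is not $\opto$-heavy. Since $A(R) \subseteq R \subseteq \opto \Delta \optt$, we have $\optt \cap A(R) = A(R) \setminus \opto$, so the failure of $\opto$-heaviness immediately delivers $\frac{1}{3} c(A(R)) \leq c(\optt \cap A(R))$. The case $A(R) = \emptyset$ (which occurs precisely when $R \subseteq P^{(j^*_R)}$) is trivial; otherwise, $R$ and $P^{(j^*_R)}$ are overlapping intervals on $\LINE$, so $I := A(R) \cup P^{(j^*_R)} = R \cup P^{(j^*_R)}$ is again a single interval on $\LINE$.

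The first and most delicate step is to establish that $X^{(j^*_R)}$ coincides with $\opto$ on $A(R)$, i.e., $X^{(j^*_R)} \cap A(R) = \opto \cap A(R)$. Since $X^{(j^*_R)} = \opto \,\Delta\, P^{(1)} \,\Delta\, \cdots \,\Delta\, P^{(j^*_R - 1)}$, this amounts to showing that every edge of $A(R)$ lies in an even number of the paths $P^{(j')}$ with $j' < j^*_R$. Using the laminarity from \cref{lem:laminarity}, I will partition these iterations into three groups: (a) those with $P^{(j')}$ disjoint from $P^{(i_R)}$, hence from $A(R) \subseteq P^{(i_R)}$; (b) the tree-ancestors of $i_R$ (excluding the root $0$) together with $i_R$ itself, each of which satisfies $P^{(j')} \supseteq P^{(i_R)} \supseteq A(R)$; (c) proper tree-descendants of $i_R$ whose iteration index lies below $j^*_R$. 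Group (b) consists of exactly $|T[i_R]|$ iterations, an even number because $i_R \in \LIGHT$, each contributing a full copy of $A(R)$ to the XOR, so these cancel. For group (c), laminarity forces every tree-descendant to carry a strictly larger iteration index than any of its tree-ancestors, so such a $P^{(j')}$ lies inside some direct child $P^{(j'')} \in \child{i_R}$ of $i_R$ with $j'' < j^*_R$; by the minimality defining $j^*_R$ in $\HEAVY(i_R) = \child{i_R}$, that child is disjoint from $R$, and hence so is $P^{(j')}$.

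Once $X^{(j^*_R)}$ is known to agree with $\opto$ on $A(R)$, the contradiction is quick. Suppose $A(R)$ were $\opto$-heavy; we may further assume $c(A(R)) > 0$, the zero-cost case being trivial. Then $A(R)$ is also $X^{(j^*_R)}$-heavy, and since $P^{(j^*_R)}$ is $X^{(j^*_R)}$-heavy by construction and disjoint from $A(R)$, \cref{lem:heavy-union} gives that $I = A(R) \cup P^{(j^*_R)}$ is $X^{(j^*_R)}$-heavy. Applying \cref{lem:interval-to-path} to the interval $I$ produces an $X^{(j^*_R)}$-heavy $X^{(j^*_R)}$-alternating path $P$ of gain $\g[X^{(j^*_R)}]{I} = \g[X^{(j^*_R)}]{A(R)} + \g[X^{(j^*_R)}]{P^{(j^*_R)}}$. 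The $\opto$-heaviness of $A(R)$ combined with $c(A(R)) > 0$ forces $\g[X^{(j^*_R)}]{A(R)} > 0$, so $P$ has strictly greater $X^{(j^*_R)}$-gain than $P^{(j^*_R)}$, contradicting the gain-maximal choice of $P^{(j^*_R)}$ in \algUnknownSto. The hard part throughout will be the first step, particularly the parity observation that leverages $i_R \in \LIGHT$ to make the ancestor contributions vanish exactly.
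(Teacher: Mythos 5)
Your argument is correct and follows essentially the same route as the paper: show $A(R)$ is not $\opto$-heavy by establishing that $X^{(j^*_R)}$ agrees with $\opto$ on $A(R)$, then derive a contradiction to the gain-maximality of $P^{(j^*_R)}$ via \cref{lem:heavy-union,lem:interval-to-path}. The only difference is that you re-derive the ancestor-parity fact inline, whereas the paper packages it into \cref{lem:heavy-light} (so that \cref{lem:greedy-removal} only has to handle the contribution of siblings of $j^*_R$); both reach the identical key equation $X^{(j^*_R)} \cap R = \opto \cap R$.
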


A similar argument implies the same bound for all sets of the type $B(\bar{R})$ for some $\bar{R} \in \bar{\mathcal{R}}(i)$ and $i \in \LIGHT$.

\begin{restatable}{lemma}{restateLemLightIntervalsBudget}\label{lem:light-intervals-budget}
\label{lem:light-intervals-budget}
  Let $\bar{R} \in \bar{\mathcal{R}}$. Then $\frac{1}{3} c(B(\bar{R})) \leq c(\optt \cap B(\bar{R}))$.
\end{restatable}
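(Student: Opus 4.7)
The plan is to follow the same template as the proof of \cref{lem:heavy-intervals-budget}. Fix $i \in \LIGHT$ with $\bar{R} \in \bar{\mathcal{R}}(i)$ and let $j \in \HEAVY(i)$ be the heavy child of $i$ with $\bar{R} \subseteq P^{(j)}$. The key structural claim I will establish is that $B(\bar{R})$ is not $\opto$-light, which yields $c(\opto \cap B(\bar{R})) > \tfrac{1}{3} c(B(\bar{R}))$. From there the lemma will follow by showing $c(\optt \cap B(\bar{R})) \ge c(\opto \cap B(\bar{R}))$, using that $B(\bar{R})$ sits in a gap between the requests of $\mathcal{R}(i)$ and that the removed grandchildren swallow the descendant requests that could otherwise cause $\opto$ and $\optt$ to disagree inside $B(\bar{R})$.

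To prove the claim I will argue by contradiction using the greedy choice in \algUnknownSto. First I will verify that $X^{(i^*_{\bar{R}})} \cap \bar{R} = \bar{R} \setminus \opto$: by \cref{lem:heavy-light}, $X^{(j+1)} \cap P^{(j)} = P^{(j)} \setminus \opto$, and by the minimality of $i^*_{\bar{R}}$ in $\LIGHT(i)$ among grandchildren intersecting $\bar{R}$, no iteration strictly between $j$ and $i^*_{\bar{R}}$ alters $X$ on $\bar{R}$. Hence on subsets of $\bar{R}$ the properties ``$\opto$-light'' and ``$X^{(i^*_{\bar{R}})}$-heavy'' coincide.

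Now assume for contradiction that $B(\bar{R})$ is $\opto$-light. The part of $\bar{R}$ outside $B(\bar{R}) \cup P^{(i^*_{\bar{R}})}$ is the disjoint union of intervals $P^{(i')}$ for $i' \in (\LIGHT(i) \cap Z)\setminus\{i^*_{\bar{R}}\}$ with $P^{(i')} \subseteq \bar{R}$, each of which is $\opto$-light by \cref{lem:heavy-light}. Then \cref{lem:heavy-union} gives that $\bar{R} \setminus P^{(i^*_{\bar{R}})}$ is $\opto$-light, hence $X^{(i^*_{\bar{R}})}$-heavy with strictly positive $X^{(i^*_{\bar{R}})}$-gain. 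Together with the $X^{(i^*_{\bar{R}})}$-heavy path $P^{(i^*_{\bar{R}})}$, the whole interval $\bar{R}$ is $X^{(i^*_{\bar{R}})}$-heavy and satisfies $\g[X^{(i^*_{\bar{R}})}]{\bar{R}} > \g[X^{(i^*_{\bar{R}})}]{P^{(i^*_{\bar{R}})}}$. Applying \cref{lem:interval-to-path} converts $\bar{R}$ into an $X^{(i^*_{\bar{R}})}$-heavy $X^{(i^*_{\bar{R}})}$-alternating path of strictly larger gain, contradicting the greedy selection of $P^{(i^*_{\bar{R}})}$.

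The step I expect to be most delicate is the passage from $c(\opto \cap B(\bar{R}))$ to $c(\optt \cap B(\bar{R}))$: descendant requests can survive inside $B(\bar{R})$ precisely when their assigned light interval is a non-removed grandchild of $i$, and controlling the $\opto$/$\optt$ discrepancy they cause requires a careful invocation of \cref{ass:request-light-laminarity} and \cref{ass:request-heavy-prefixes} together with the definition of $B(\bar{R})$. The situation where $P^{(i^*_{\bar{R}})}$ does not lie entirely inside $\bar{R}$ should be handled by a short case distinction that follows the same pattern as above.
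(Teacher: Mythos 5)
Your proposal follows essentially the same route as the paper: establish that $B(\bar{R})$ is not $\opto$-light by exploiting the greedy construction of $P^{(i^*_{\bar{R}})}$, then transfer ``not light'' from $\opto$ to $\optt$. The contradiction argument in the middle faithfully reconstructs what the paper packages into \cref{lem:greedy-removal}, and it is correct as far as it goes. Two pieces are left unproven, though. First, when you assemble $\bar{R}\setminus P^{(i^*_{\bar{R}})}$ as the disjoint union of $B(\bar{R})$ and the intervals $P^{(i')}$ for $i' \in (\LIGHT(i)\cap Z)\setminus\{i^*_{\bar{R}}\}$ meeting $\bar{R}$, you need each such $P^{(i')}$ to lie entirely inside $\bar{R}$; otherwise the removed pieces are fragments $P^{(i')}\cap\bar{R}$ of unknown heaviness, and \cref{lem:heavy-union}/\cref{lem:heavy-light} do not apply to them. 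That containment is exactly where \cref{ass:request-light-laminarity} is needed: a light interval cannot straddle the boundary of a request in $\mathcal{R}(i)$ (if $P^{(i')}$ met such a request $R$, then either $P^{(i')}\subseteq R$, hence disjoint from $\bar{R}$, or $R\subseteq P^{(i')}$, contradicting $i_R=i$).

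Second, and more substantially, you flag but do not carry out the passage from $\opto$ to $\optt$, and this is the step the paper actually has to work for. The paper shows the stronger statement $\opto\cap B(\bar{R}) = \optt\cap B(\bar{R})$ by arguing that $B(\bar{R})$ meets no request at all: requests of ancestors of $i$ are ruled out because, by \cref{ass:request-light-laminarity} and minimality of $i_R$, such a request would have to contain all of $P^{(i)}$, forcing $\mathcal{R}(i)=\emptyset$ and hence $\bar{\mathcal{R}}(i)=\emptyset$; requests of descendants of $i$ are confined, via \cref{lem:no-requests-in-children}, to grandchildren $i'\in\LIGHT(i)$ with $\mathcal{R}(i')\neq\emptyset$, which are in $Z$ and therefore excised from $B(\bar{R})$ by definition. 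Once agreement of $\opto$ and $\optt$ on $B(\bar{R})$ is in hand, the transfer of ``not light'' is immediate. Finally, the case distinction you anticipate for $P^{(i^*_{\bar{R}})}\not\subseteq\bar{R}$ is vacuous: the same application of \cref{ass:request-light-laminarity} as above forces $P^{(i^*_{\bar{R}})}\subseteq\bar{R}$.
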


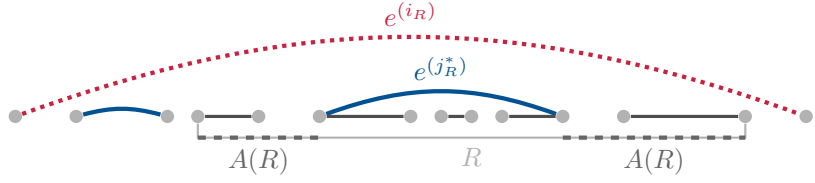
\begin{figure}[t]
\centering
      \begin{tikzpicture}[scale=.4]
      \foreach \pos/\name/\lab in { 
      {(0,1)/v0}, 
    {(3,1)/v1}, 
    {(4,1)/v2}, 
    {(6,1)/v3}, 
    {(8,1)/v4}, 
    {(16,1)/v5}, 
    {(18,1)/v6}, 
    {(22,1)/v9}, 
    {(-2,1)/vp1}, 
    {(24,1)/vp2}}
         \node[vertex,minimum size=5, color=black!30](\name) at \pos {};
    \foreach \source/ \dest  in {vp1/vp2}    
        \draw [TUMRot, ultra thick, dotted, above] (\source) to[bend left=20] node {$e^{(i_R)}$} (\dest);        
        \foreach \source/ \dest  in {{v0/v1}}
        \draw[TUMBlau, ultra thick, bend left=15] (\source) to node{} (\dest);
        
                \foreach \source/ \dest  in {{v4/v5}}    
        \draw[TUMBlau, ultra thick, bend left=20, above] (\source) to node{$e^{(j^*_R)}$} (\dest);
        
      \foreach \pos/\name/\lab in { 
      {(11,1)/c0}, 
    {(12,1)/c1}, 
    {(13,1)/c2}, 
    {(14,1)/c3}}
         \node[vertex,minimum size=5, color=black!30](\name) at \pos {};        
       
        \foreach \source/ \dest  in {{v2/v3}, {v4/c0}, {c1/c2}, {c3/v5}, {v6/v9}}
        \draw [color=black!70, very thick] (\source) to node {} (\dest); 
           
        \draw [black!30, thick,below] (v2) -- ++(0,-2em) -- node[below] {$R$} ($(v9) - (0,2em)$) -- (v9); 
        
               \foreach \source/ \dest  in {{v2/v4},{v5/v9}}
        \draw [black!60, ultra thick, dashed, below] ($(\source) - (0,2em)$) to node {$A(R)$} ($(\dest) - (0,2em)$); 
    \end{tikzpicture}
    \vspace{-0.4cm}
\caption{
Illustration of the proof of \cref{lem:heavy-intervals-budget}.
The solid dark grey lines depict edges in $\opto \cap R$.
At the time when \algUnknownSto constructed $P^{(j^*_R)}$, no other child interval of $i_R$ intersecting with $R$ was present.
Thus $X^{(j^*_R)} \cap R = X^{(i_R + 1)} \cap R$ = $\opto \cap R$.
If $A(R)$ was $\opto$-heavy, then $P^{(j^*_R)} \cup A(R)$ would be an $\opto$-heavy path of higher $\opto$-gain than~$P^{(j^*_R)}$, contradicting the greedy construction.} \label{fig:intuition-stage2}
\end{figure}

 Furthermore, one can show that the sets of the form $A(R)$ and $B(\bar{R})$ and the set $C$ form a partition of $L$ (see \cref{lem:line-partition} in the appendix). 
 We define $x : \LINE \rightarrow \mathbb{R}_+$ by 
 \begin{align*}
   x(e) := \begin{cases}
\frac{1}{3} & \text{if } e \in A \cup B\\
1 & \text{if } e \in C \cap \optt\\
0 & \text{if } e \in C \setminus \optt
\end{cases}
 \end{align*}
and obtain the following lemma as a consequence of \cref{lem:light-intervals-budget,lem:heavy-intervals-budget}.
\begin{restatable}{lemma}{restateLemXBoundedOptt}
\label{lem:x-bounded-optt}
$\sum_{e \in \LINE} c(e)x(e) \leq c(\optt \cap L)$.
\end{restatable}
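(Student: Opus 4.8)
The plan is to split the sum on the left-hand side along the partition of the line furnished by \cref{lem:line-partition} and then estimate the resulting pieces using the two budget lemmas that immediately precede the statement.

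Concretely, \cref{lem:line-partition} tells us that the edge set of $\LINE$ is partitioned into the sets $A(R)$ for $R \in \mathcal{R}$, the sets $B(\bar{R})$ for $\bar{R} \in \bar{\mathcal{R}}$, and the set $C$. Inserting the definition of $x$, each edge lying in some $A(R)$ or $B(\bar{R})$ contributes weight $\frac{1}{3}$, each edge in $C \cap \optt$ contributes weight $1$, and each edge in $C \setminus \optt$ contributes $0$; hence $\sum_{e \in \LINE} c(e) x(e) = \sum_{R \in \mathcal{R}} \frac{1}{3} c(A(R)) + \sum_{\bar{R} \in \bar{\mathcal{R}}} \frac{1}{3} c(B(\bar{R})) + c(\optt \cap C)$. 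Next I would apply \cref{lem:heavy-intervals-budget} to bound $\frac{1}{3} c(A(R)) \le c(\optt \cap A(R))$ for each request $R$, and \cref{lem:light-intervals-budget} to bound $\frac{1}{3} c(B(\bar{R})) \le c(\optt \cap B(\bar{R}))$ for each gap $\bar{R}$. Summing these bounds together with the unchanged term $c(\optt \cap C)$ gives $\sum_{e \in \LINE} c(e) x(e) \le \sum_{R \in \mathcal{R}} c(\optt \cap A(R)) + \sum_{\bar{R} \in \bar{\mathcal{R}}} c(\optt \cap B(\bar{R})) + c(\optt \cap C)$. Finally, since the sets $A(R)$, $B(\bar{R})$, $C$ partition $\LINE$, their intersections with $\optt$ partition $\optt \cap \LINE$, so the right-hand side collapses to $c(\optt \cap \LINE)$, which is exactly the claimed bound.

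I do not expect any genuine difficulty: the statement is a bookkeeping step that assembles \cref{lem:heavy-intervals-budget} and \cref{lem:light-intervals-budget} over the partition of \cref{lem:line-partition}. The only point requiring care is that the pieces $\{A(R)\}$, $\{B(\bar{R})\}$, $\{C\}$ are pairwise \emph{disjoint} and not merely covering, so that no edge of $\optt \cap \LINE$ is double-counted on the right-hand side; but this disjointness is precisely what \cref{lem:line-partition} provides, so I would simply invoke it. (Were that appendix lemma unavailable, one could instead derive the required disjointness from the pairwise disjointness of the requests in $\mathcal{R}$, the disjointness of the gaps within each $\bar{\mathcal{R}}(i)$, and the laminar structure of \cref{lem:laminarity}, but citing \cref{lem:line-partition} is cleaner.)
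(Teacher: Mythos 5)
Your proposal is correct and follows essentially the same route as the paper's proof: decompose $\LINE$ via the partition of \cref{lem:line-partition}, apply \cref{lem:heavy-intervals-budget} and \cref{lem:light-intervals-budget} to the $A(R)$ and $B(\bar{R})$ pieces, and keep the $C \cap \optt$ term as is. The disjointness point you flag is exactly the one the paper also relies on, so there is nothing to add.
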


We are now able to bound the cost of each path $P^{(i)}$ for $i \in \bar{Y}$ against its local budget $\sum_{e \in P^{(i)}} c(e)x(e)$. We consider the cases where $i \in \bar{Y}$ corresponds to a heavy and light edge in \cref{lem:x-beneath-heavy} and \cref{lem:x-beneath-light}, respectively.

  \begin{lemma}
\label{lem:x-beneath-heavy}
Let $j \in \bar{Y} \cap \HEAVY$. Then $c(P^{(j)}) \leq 6 \sum_{e \in P^{(j)}} c(e)x(e)$.     
\end{lemma}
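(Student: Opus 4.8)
Fix $j \in \bar{Y} \cap \HEAVY$. The goal is to charge $c(P^{(j)})$ to the local budget $\sum_{e \in P^{(j)}} c(e)x(e)$ up to a factor $6$. Since $j$ is heavy, \cref{lem:heavy-light} tells us $P^{(j)} \cap X^{(j)} = P^{(j)} \cap \opto$ and $P^{(j)}$ is $\opto$-heavy, i.e.\ $c(P^{(j)} \cap \opto) \geq 2 c(P^{(j)} \setminus \opto)$, hence $c(P^{(j)}) \leq \frac{3}{2} c(P^{(j)} \cap \opto)$. So it suffices to bound $c(P^{(j)} \cap \opto)$ by $4 \sum_{e \in P^{(j)}} c(e) x(e)$. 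The natural way to get a budget inside $P^{(j)}$ is via the edges of $\optt$ lying in $P^{(j)}$, which on the parts of the line contained in some request are exactly the non-$\opto$ edges of that request. I would first argue that $P^{(j)} \cap \opto$ is covered by requests: since $j \in \bar{Y}$, the edge $e^{(j)}$ was \emph{not} removed by \algUnknownStt, yet $j$ itself is not in $Z$; I'd like to conclude from \cref{ass:request-heavy-prefixes} and the structure of the removal rules that either $P^{(j)}$ is disjoint from $\bigcup_{R\in\mathcal R} R$ (in which case $P^{(j)} \cap \opto$ would contribute to $c(\optt)$ directly, but then $e^{(j)}$ should have been kept for a different reason), or $P^{(j)} \subseteq \bigcup_{R\in\mathcal R} R$. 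The key point I expect is that $j \in \bar Y$ forces $P^{(j)}$ to lie entirely inside a single request $R$: otherwise $e^{(j)}$ would be the edge $e^{(j^*_{R'})}$ removed in step (i).2 for some $R'$ it meets, or some ancestor in $T[j]$ would not be in $Z$, contradicting $j \in \bar Y$.

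\textbf{Main steps.} (1) Show $P^{(j)} \subseteq R$ for a single request $R \in \mathcal{R}$ with $i_R = i$ for the light parent-side node $i$ relevant to $j$; use \cref{ass:request-heavy-prefixes} to know the first and last edge of $P^{(j)}$ are in $\bigcup_R R$, and use that $P^{(j)}$ is an interval intersecting some request together with \cref{ass:request-light-laminarity} applied along the tree to pin it inside one $R$. (2) Identify how $P^{(j)}$ sits relative to $A(R) = R \setminus P^{(j^*_R)}$: if $j = j^*_R$ then $P^{(j)} \cap A(R) = \emptyset$; otherwise, since $j^*_R = \min\{j' \in \HEAVY(i_R): P^{(j')}\cap R \neq\emptyset\}$ and $j \in \bar Y$ was not removed, I'd argue $P^{(j)} \subseteq A(R)$ (it lies in $R$ but "after" the removed child $P^{(j^*_R)}$, i.e.\ outside it, hence inside $A(R)$). (3) In the second case, $P^{(j)} \subseteq A(R) \subseteq A$, so $x(e) = \frac13$ on all of $P^{(j)}$, giving $\sum_{e\in P^{(j)}} c(e)x(e) = \frac13 c(P^{(j)})$ and the bound $c(P^{(j)}) = 3\sum \leq 6\sum$ follows trivially. (4) In the first case $j = j^*_R$: here $P^{(j)}$ need not be in $A$, so I charge against $\optt$-edges. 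Since $P^{(j)} \cap \opto$ is large and $P^{(j)} \subseteq R$ implies $\optt \cap P^{(j)} = P^{(j)} \setminus \opto$, I need another relation between $c(P^{(j)} \cap \opto)$ and $c(\optt \cap P^{(j)})$; the greedy/heavy structure gives $c(P^{(j)}\setminus\opto) \geq \frac12 c(P^{(j)} \cap \opto)$? No — $\opto$-heavy says the reverse. So instead I charge $P^{(j)} \cap \opto$ to $A(R)$: because $j^*_R$ is the \emph{first} child of $i_R$ meeting $R$, at the moment $P^{(j)}$ was built $X^{(j)} \cap R = \opto \cap R$, and if $A(R)$ were $\opto$-heavy then $P^{(j)} \cup A(R)$ would beat $P^{(j)}$ in gain (as in \cref{fig:intuition-stage2}); more relevantly $A(R)$ carries budget $\frac13 c(A(R))$ and by \cref{lem:heavy-intervals-budget} $\frac13 c(A(R)) \le c(\optt \cap A(R))$. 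I would compare $c(P^{(j)})$ with $c(A(R))$: since $P^{(j)} = P^{(j^*_R)}$ and $P^{(j)}$ together with $A(R)$ makes up $R$ minus the part of $R$ outside $P^{(j)}$ on the $i_R$-side — actually $R = A(R) \cup (R \cap P^{(j^*_R)})$, and $R \cap P^{(j^*_R)}$ is a prefix/suffix of $P^{(j)}$. So $c(P^{(j)} \cap \opto) \le c(R \cap \opto)$, and I want $c(R\cap\opto) \le 2 c(A(R)\setminus\opto) + (\text{budget inside }P^{(j)})$... The cleanest route: use that $P^{(j)}$ is $\opto$-heavy so its $\opto$-part dominates, combined with the fact that the \emph{non-prefix} part $A(R)$ is not $\opto$-heavy, to conclude $A(R)$ is a constant fraction of $R$, hence of $P^{(j)}$, and its budget $\frac13 c(A(R))$ dominates $\frac16 c(P^{(j)})$.

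\textbf{Expected obstacle.} The delicate part is case (4), establishing quantitatively that $c(A(R))$ is at least a constant fraction of $c(P^{(j)})$ when $j = j^*_R$. This is exactly where the greedy maximality of $P^{(j)} = P^{(j^*_R)}$ among $X^{(j)}$-heavy paths must be invoked: if $A(R)$ were much shorter than $P^{(j)}$ one still needs to rule out that extending $P^{(j)}$ along $A(R)$ (which is $\opto$-light since not $\opto$-heavy, hence has positive $X^{(j)}$-gain on $R$) would strictly increase the gain — contradicting maximality of $P^{(j)}$. Turning "$A(R)$ is not $\opto$-heavy" into a usable lower bound on $c(A(R))$ relative to $c(P^{(j)})$, and then matching constants so the final factor is exactly $6$ (rather than something worse), is the main technical hurdle; I anticipate it requires combining \cref{lem:heavy-union}, \cref{lem:heavy-setminus}, \cref{lem:heavy-intervals-budget}, and the $\opto$-heaviness of $P^{(j)}$ in a single chain of inequalities, being careful that the budget $x$ only gives $\frac13$ on $A \cup B$ and $0$ on $C \setminus \optt$.
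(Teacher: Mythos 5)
There is a genuine gap, and it starts with your key structural claim. You assert that $j \in \bar{Y}$ forces $P^{(j)}$ to lie inside a single request, and you then split into the cases $j = j^*_R$ and $j \neq j^*_R$. Both halves of this are wrong. First, $\bar{Y} \subseteq Y = W \setminus (Z \cup \{0\})$, so $j \in \bar{Y}$ means $e^{(j)}$ was \emph{not} removed; in particular $j \neq j^*_R$ for every $R$, so your case (4) — the one you spend most of your effort on and admit you cannot close quantitatively — is vacuous. Second, $P^{(j)}$ need not be contained in any request at all (e.g.\ its parent $i$ may be in $Z$ because of rule 1 or 3 applied to requests elsewhere in $P^{(i)}$), it may meet several requests, and \cref{ass:request-light-laminarity} gives no laminarity between requests and \emph{heavy} intervals, so a request can overlap $P^{(j)}$ only partially. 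The situation "$P^{(j)} \subseteq R$ and $j \neq j^*_R$", where indeed $P^{(j)} \subseteq A(R)$ and the factor $3$ falls out, is exactly Case 1 of \cref{lem:x-beneath-heavy-no-children}; it is not the general situation of \cref{lem:x-beneath-heavy}.

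What your plan misses entirely is the actual source of the factor $6$: requests that are hidden \emph{inside child intervals} of $j$. For such a child $i' \in \child{j}$ with $R \subseteq P^{(i')}$, the $\opto$-edges of $P^{(j)}$ lying in $P^{(i')}$ may carry zero budget ($x(e)=0$), so one cannot charge them locally at all. The paper's proof instead shows (via a claim using \cref{ass:request-light-laminarity}, $i_R = $ parent of $j$, and $j \neq j^*_R$) that every $\opto$-edge of $P^{(j)}$ \emph{outside} $Q := \bigcup_{i' \in \child{j}} P^{(i')}$ lies in $A \cup B \cup (C \cap \optt)$ and hence has $x(e) \geq \tfrac13$, and then discards the edges inside $Q$ using the exponential-decay bound $\sum_{i' \in \child{j}} c(P^{(i')} \cap \opto) \leq \tfrac14 c(P^{(j)} \cap \opto)$ (\cref{lem:expo-decay-opto}), so that $c(P^{(j)} \cap \opto) \leq \tfrac43 c((P^{(j)} \cap \opto) \setminus Q)$. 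Chaining $c(P^{(j)}) \leq \tfrac32 c(P^{(j)} \cap \opto) \leq 2\, c((P^{(j)} \cap \opto)\setminus Q) \leq 6 \sum_{e \in P^{(j)}} c(e)x(e)$ gives the constant. Your proposal contains neither the decay argument nor any substitute for it, and its replacement — charging against $A(R)$ via a comparison of $c(A(R))$ with $c(P^{(j)})$ in the impossible case $j = j^*_R$ — does not lead anywhere; so the proof as proposed does not go through.
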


\begin{proof}
  We first show that each request intersecting with $P^{(j)}$ is either contained in a child interval of $j$ or the edges in $P^{(j)}$ intersecting with the request are covered by $A$.
  \begin{claim*}
    Let $R \in \mathcal{R}$ with $R \cap P^{(j)} \neq \emptyset$. Then $R \cap P^{(j)} \subseteq A(R)$ or $R \subseteq P^{(i')}$ for some $i' \in \child{j}$.
  \end{claim*}
  \begin{claimproof}
    Assume $R \not\subseteq P^{(i')}$ for any $i' \in \child{j}$. 
    In particular, this implies that $i_R \notin \descendant{j}$.
    Let $i$ be the parent node of $j$ in $T$.
    We first exclude the possibility that $i_{R}$ is an ancestor of $i$. Indeed, if this was the case then $R \not\subseteq P^{(i)}$ and thus by \cref{ass:request-light-laminarity}, $P^{(i)} \subseteq R$.
    But then $P^{(i)}$ can neither contain other request intervals, nor can it intersect any non-request intervals.
    Therefore $\mathcal{R}(i) = \emptyset$ and $i \neq i^*_{\bar{R}}$ for all $\bar{R} \in \bigcup_{i' \in \LIGHT}\bar{\mathcal{R}}(i')$.
    Therefore, $i \notin Z$, a contradiction to $j \in \bar{Y}$.
    This implies that $i_R = i$.
    Further note that $j \in \bar{Y}$ implies $j \neq j^*_R$ for all $R \in \mathcal{R}$, as otherwise, $j$ would have been tagged for removal.
    We conclude that $j^*_R \in \child{i} \setminus \{j\}$ and thus $R \cap P^{(j)} \subseteq R \setminus P^{(j^*_R)} = A(R)$, as $P^{(j)}$ and $P^{(j^*_R)}$ are disjoint. 
  \end{claimproof}
   
  Let $Q := \bigcup_{i' \in \child{j}} P^{(i')}$.
  Consider $e \in P^{(j)} \cap \opto$. Note that the claim implies that if $e \notin A \cup Q$, then $e \in \optt$.
 We conclude that $(P^{(j)} \cap \opto) \setminus Q \subseteq A \cup B \cup (C \cap \optt)$.
  Further note that $c(P^{(j)} \cap \opto) \leq \frac{4}{3} c((P^{(j)} \cap \opto) \setminus Q)$ by \cref{lem:expo-decay-opto}. We obtain
 \begin{align*}
    \textstyle c(P^{(j)})  \leq \frac{3}{2} c(P^{(j)} \cap \opto) \leq 2 c((P^{(j)} \cap \opto) \setminus Q) \leq 6 \sum_{e \in P^{(j)}} c(e) x(e),
  \end{align*}
   where the first inequality follow from the fact that $P^{(j)}$ is $\opto$-heavy and the last inequality follows from the fact that $x(e) \geq 1/3$ for every $e \in A \cup B \cup (C \cap \optt)$. This proves the lemma.
\end{proof}

\begin{restatable}{lemma}{restateLemXBeneathLight}
\label{lem:x-beneath-light}
Let $i \in \bar{Y} \cap \LIGHT$. Then $c(P^{(i)}) \leq 3 \sum_{e \in P^{(i)}} c(e)x(e)$.
\end{restatable}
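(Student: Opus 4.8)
Fix $i \in \bar{Y} \cap \LIGHT$. The plan is to mirror the structure of the proof of \cref{lem:x-beneath-heavy}, but now exploiting that $P^{(i)}$ is $\opto$-light (by \cref{lem:heavy-light}) so that $P^{(i)} \setminus \opto$ carries most of the cost, and that $P^{(i)}\setminus\opto = P^{(i)} \cap X^{(i)}$ consists exactly of the edges that \algUnknownSto added inside $P^{(i)}$. Since $i \in \bar{Y}$, we have $i \neq i^*_{\bar R}$ for every gap $\bar R$ and $\mathcal{R}(i) = \emptyset$ (otherwise $e^{(i)}$ would be tagged for removal in step~1 of \algUnknownStt, contradicting $i \in \bar Y$). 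I would first record the consequence of $\mathcal{R}(i)=\emptyset$: by the definition of $i_R$ and \cref{ass:request-light-laminarity}, every request $R$ that meets $P^{(i)}$ is either contained in some child-of-a-child interval $P^{(i')}$ with $i' \in \LIGHT(i)$, or else $P^{(i)} \subseteq R$; the latter forces $\mathcal{R}(i)\neq\emptyset$ unless $P^{(i)}$ is itself disjoint from all requests, so in fact every request meeting $P^{(i)}$ lies inside some $P^{(i')}$, $i' \in \LIGHT(i)$, and these sit inside the gaps $\bar R \in \bar{\mathcal{R}}(i)$.

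The core step is a charging claim analogous to the one in \cref{lem:x-beneath-heavy}, but for the $X^{(i)}$-edges rather than the $\opto$-edges. Let $Q' := \bigcup_{i' \in \LIGHT(i)} P^{(i')}$ be the union of grandchild intervals of $i$. I claim that every edge $e \in (P^{(i)} \cap X^{(i)}) \setminus Q'$ lies in $B \cup (C \cap \optt)$: indeed such an $e$ lies in some gap $\bar R \in \bar{\mathcal{R}}(i)$ (it is not inside a request by the previous paragraph, and it is inside $P^{(i)}$ but outside the descendant-light intervals, hence in a gap of $\bar{\mathcal{R}}(i)$), and $B(\bar R) = \bar R \setminus \bigcup_{i' \in \LIGHT(i)\cap Z} P^{(i')} \supseteq \bar R \setminus Q'$; if $e$ is not even in such a gap then $e \notin \bigcup_{R}R$, so $e \in \optt$, and since $e \notin A \cup B$ (it is outside all requests and all gaps) we get $e \in C \cap \optt$. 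Either way $x(e) \ge 1/3$. Then the exponential decay for light nodes — the analogue of \cref{lem:expo-decay-opto} applied to the $X^{(i)}$-edges through the two-level descent $i \to \child{i} \to \LIGHT(i)$, which gives $\sum_{i' \in \LIGHT(i)} c(P^{(i')} \cap X^{(i')}) \le \tfrac14 c(P^{(i)} \cap X^{(i)})$, hence $c(P^{(i)} \cap X^{(i)}) \le \tfrac{4}{3} c((P^{(i)} \cap X^{(i)}) \setminus Q')$ — combined with $P^{(i)}$ being $\opto$-light (so $c(P^{(i)}) \le \tfrac32 c(P^{(i)} \cap X^{(i)})$, using $c(P^{(i)}) = c(P^{(i)}\cap\opto) + c(P^{(i)}\setminus\opto)$ and $c(P^{(i)}\cap\opto) \le \tfrac12 c(P^{(i)}\setminus\opto)$) yields
\begin{align*}
  \textstyle c(P^{(i)}) \le \tfrac32 c(P^{(i)} \cap X^{(i)}) \le 2\, c\big((P^{(i)} \cap X^{(i)}) \setminus Q'\big) \le 6 \sum_{e \in P^{(i)}} c(e) x(e).
\end{align*}

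Wait — the target constant is $3$, not $6$, which is where this lemma is genuinely tighter than \cref{lem:x-beneath-heavy}. The gain of a factor $2$ must come from the fact that for a light interval $P^{(i)}$ the dominant edges $P^{(i)}\setminus\opto$ belong to $\optt$ whenever they are not shadowed by a removed grandchild, i.e. already $x(e) = 1$ rather than $1/3$ on the $C$-part: more precisely the edges in $(P^{(i)}\cap X^{(i)}) \setminus Q'$ that are in $C$ lie in $C \cap \optt$ and get $x(e) = 1$, while only the portion inside genuine gaps $\bar R$ gets $1/3$ — and by \cref{ass:request-heavy-prefixes} together with the greedy/light structure one can show the $1/3$-charged portion of each $\bar R$ is compensated by the decay factor, so the $\tfrac32 \cdot \tfrac43 = 2$ above should really read $\tfrac32 \cdot (\text{something} \le 1) = \tfrac32$, then times the worst-case $x$-reciprocal $\le 2$ on the $B$-part. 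I expect the main obstacle to be bookkeeping this improvement correctly: carefully splitting $(P^{(i)}\cap X^{(i)})\setminus Q'$ into its $C\cap\optt$ part (cost charged at rate $1$) and its $B$ part (rate $1/3$), and using the decay of \cref{lem:exp-decay} on the $B$-part only, rather than on all of $(P^{(i)}\cap X^{(i)})\setminus Q'$, to land exactly on $3$. I would carry out that split explicitly and invoke \cref{ass:request-heavy-prefixes} to ensure the grandchild intervals align with request boundaries so the decay bound applies to the gap contributions.
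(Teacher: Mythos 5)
Your proposal does not go through as written; it misses the case analysis that actually drives the paper's proof and its replacement charging argument has two incorrect steps. First, the structural consequence you draw from $\mathcal{R}(i)=\emptyset$ is wrong in both directions. The case $P^{(i)}\subseteq R$ does \emph{not} force $\mathcal{R}(i)\neq\emptyset$: such a request satisfies $R\not\subseteq P^{(i)}$ and is assigned to the strict ancestor $i_R$, so it is fully compatible with $i\in\bar{Y}$ — and it is one of the two essential cases (there $P^{(i)}$ lies either in $A(R)$, or in $R\cap P^{(j^*_R)}\subseteq C$, where the bound comes from $P^{(i)}\cap\optt=P^{(i)}\setminus\opto$ and $P^{(i)}$ being $\opto$-light by \cref{lem:heavy-light}). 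Conversely, no request meeting $P^{(i)}$ can be contained in a grandchild interval $P^{(i')}$, $i'\in\LIGHT(i)$: that would give $\mathcal{R}(i')\neq\emptyset$ for a light descendant, contradicting \cref{lem:no-requests-in-children} since $\mathcal{R}(i)=\emptyset$. So the correct dichotomy, via \cref{ass:request-light-laminarity}, is ``$P^{(i)}$ disjoint from all requests'' or ``$P^{(i)}\subseteq R$'' — exactly the two cases your plan argues away.

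Second, in the disjoint case your charging claim is not justified. For a light node, $P^{(i)}\cap X^{(i)} = P^{(i)}\setminus\opto$, and on edges of $\LINE$ outside all requests $\opto$ and $\optt$ coincide; hence such an edge is \emph{not} in $\optt$, and unless you show it lies in $B$ it gets $x(e)=0$ (your step ``so $e\in\optt$'' is the heavy-case argument, where $X^{(j)}$-edges are $\opto$-edges, and it does not transfer). Moreover the gaps you charge against, $\bar{\mathcal{R}}(i)$, are empty whenever $\mathcal{R}(i)=\emptyset$; the gap that actually covers $P^{(i)}$ lives two levels up: since $i\in\bar{Y}$ its parent $j'$ and grandparent $i'$ are in $Z$, heavy edges are removed only because of requests, and \cref{ass:request-heavy-prefixes} together with $i\notin Z$ yields $P^{(i)}\subseteq B(\bar{R})$ for some $\bar{R}\in\bar{\mathcal{R}}(i')$ — an argument your proposal never makes. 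Finally, even granting your claim, your computation only delivers $c(P^{(i)})\le 6\sum_{e\in P^{(i)}}c(e)x(e)$, as you acknowledge; the promised improvement to $3$ is left as a sketch whose accounting is not coherent as stated. Note that no decay estimate is needed here at all: the paper shows $P^{(i)}$ is entirely contained in $A(R)$, in some $B(\bar{R})$, or in $C$, and in the last case $\opto$-lightness alone gives the factor $3$ directly; the factor-$6$ phenomenon you ran into is specific to \cref{lem:x-beneath-heavy}, where descendants with requests really can eat into the interval.
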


The proof of \cref{lem:x-beneath-light} is given in the appendix. We state a short proof sketch for intuition. Let $i \in \bar{Y} \cap \LIGHT$. Due to the removal of light edges with associated requests and \cref{ass:request-light-laminarity}, we know that $P^{(i)}$ either does not intersect with any request or $P^{(i)}$ is completely covered by a request. In the former case one can show that $P^{(i)}$ is included in $B$ and hence the lemma holds. In the latter case, $P^{(i)}$ is either included in $A$ and hence the lemma holds or it is included in $C$ in which case the lemma holds since $P^{(i)}$ is $O_1$-light. 

As a consequence, we can now show a first constant factor bound. Because the paths $P^{(i)}$ for $i \in \bar{Y}$ are pairwise disjoint, \cref{lem:x-bounded-optt,lem:x-beneath-heavy,lem:x-beneath-light} imply $\sum_{i \in \bar{Y}} c(P^{(i)}) \leq 6 c(\optt)$.
Plugging this into \cref{lem:bound-against-barY} we obtain 
\begin{align*}
\textstyle c(\algt) \leq c(\optt) + 3 \sum_{i \in \bar{Y}} c(P^{(i)}) \leq 19c(\optt).
\end{align*}

\subparagraph*{Improvement of approximation factor}

We can improve the approximation factor from $19$ to $10$ by a slight modification of the set of edges removed in Stage~2.
To this end, note that the factor for the bound given in \cref{lem:x-beneath-heavy} is greater than the one given in \cref{lem:x-beneath-light}. Indeed, the only reason for the weaker bound is that edges in $\bar{Y} \cap \HEAVY$ can have descendants with associated requests. Excluding this case improves the factor within the bound of the lemma from $6$ to $3$. We formalize this in the following lemma.

\begin{restatable}{lemma}{restateLemXBeneathHeavyNoChildren}
\label{lem:x-beneath-heavy-no-children}
Let $j \in \bar{Y} \cap \HEAVY$ such that $\child{j} \cap Z = \emptyset$. Then $c(P^{(j)}) \leq 3 \sum_{e \in P^{(j)}} c(e)x(e)$.
\end{restatable}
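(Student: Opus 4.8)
The plan is to mimic the proof of \cref{lem:x-beneath-heavy} but exploit the extra hypothesis $\child{j}\cap Z=\emptyset$ to avoid losing the factor of~$2$ that comes from the exponential decay argument (\cref{lem:expo-decay-opto}). First I would re-examine the \textbf{Claim} from the proof of \cref{lem:x-beneath-heavy}: for every request $R\in\mathcal{R}$ with $R\cap P^{(j)}\neq\emptyset$, either $R\cap P^{(j)}\subseteq A(R)$ or $R\subseteq P^{(i')}$ for some $i'\in\child{j}$. This claim did not use the new hypothesis, so it still holds. The point of the new hypothesis is to rule out the second alternative: if some $R$ were contained in a child interval $P^{(i')}$ with $i'\in\child{j}$, then, since $i_R$ would then be a descendant of $j$ (or $i'$ itself), the construction of $M'$ in \algUnknownStt~step (i) would have tagged some edge $e^{(i^*_{\bar R})}$ or $e^{(j^*_R)}$ inside $P^{(i')}$ for removal, forcing $i'\in Z$ — but $\child{j}\cap Z=\emptyset$. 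Hence in fact every request $R$ intersecting $P^{(j)}$ satisfies $R\cap P^{(j)}\subseteq A(R)$.

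Next I would use this to show directly that $P^{(j)}\cap\opto\subseteq A\cup B\cup(C\cap\optt)$, i.e.\ that \emph{every} edge of $P^{(j)}\cap\opto$ receives $x$-value at least $1/3$ — without needing to first pass to the subset $(P^{(j)}\cap\opto)\setminus Q$. Indeed, fix $e\in P^{(j)}\cap\opto$. If $e$ lies in some request $R$, then by the refined claim $e\in R\cap P^{(j)}\subseteq A(R)\subseteq A$, so $x(e)\ge 1/3$. If $e$ lies in no request, then $e\in\opto$ and $e$ is not in the symmetric difference $\opto\Delta\optt$, hence $e\in\optt$; since $e\notin A\cup B$ (being in no request, and $B\subseteq\bigcup_{\bar R}\bar R$), we get $e\in C\cap\optt$, so again $x(e)\ge 1/3$. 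This gives
\begin{align*}
\textstyle c(P^{(j)}\cap\opto) \;=\; \sum_{e\in P^{(j)}\cap\opto} c(e) \;\le\; 3\sum_{e\in P^{(j)}\cap\opto} c(e)x(e) \;\le\; 3\sum_{e\in P^{(j)}} c(e)x(e).
\end{align*}
Finally, since $j\in\HEAVY$, \cref{lem:heavy-light} tells us $P^{(j)}$ is $\opto$-heavy, so $c(P^{(j)})\le\frac{3}{2}c(P^{(j)}\cap\opto)$. Combining, $c(P^{(j)})\le\frac{3}{2}\cdot 3\sum_{e\in P^{(j)}}c(e)x(e)$; but that only gives $9/2$, not $3$.

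So I should be slightly more careful and keep the $\opto$-heaviness together with the $x$-bound in one shot rather than chaining two lossy inequalities. The clean way: $c(P^{(j)})=c(P^{(j)}\cap\opto)+c(P^{(j)}\setminus\opto)$, and $\opto$-heaviness gives $c(P^{(j)}\setminus\opto)\le\frac12 c(P^{(j)}\cap\opto)$... that still yields $\tfrac32 c(P^{(j)}\cap\opto)\le\tfrac92\sum c(e)x(e)$. The missing factor must come from the fact that the edges of $A(R)$ are not merely not-$\opto$-heavy but actually $A(R)\setminus\opto=\optt\cap A(R)$ carries more than a third — i.e.\ I should charge against $x$ on $P^{(j)}$ more tightly by also using the $C\setminus\optt$ edges of $P^{(j)}$, or equivalently re-run the estimate of \cref{lem:x-beneath-heavy} observing that with $Q=\emptyset$ the factor $\tfrac43$ from \cref{lem:expo-decay-opto} becomes $1$: $c(P^{(j)})\le\tfrac32 c(P^{(j)}\cap\opto)=\tfrac32 c((P^{(j)}\cap\opto)\setminus Q)\le\tfrac32\cdot 2\sum_{e\in P^{(j)}}c(e)x(e)$? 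That is still $3$ only if the constant in front of the sum is $2$; in \cref{lem:x-beneath-heavy} it was $2$, coming from $\tfrac32\cdot\tfrac43=2$ and then $x\ge\tfrac13$ gives another $3$, total $6$. With $Q=\emptyset$ we drop the $\tfrac43$ and get $\tfrac32\cdot 1 \cdot 3=\tfrac92$. The honest fix, which I expect to be the main obstacle, is to observe that on $A(R)$ we have the sharper bound from \cref{lem:heavy-intervals-budget}, $\tfrac13 c(A(R))\le c(\optt\cap A(R))$, which already accounts for the non-$\opto$ part; concretely $c(P^{(j)})=c(P^{(j)}\cap\opto)+c(P^{(j)}\setminus\opto)$ where every edge of $P^{(j)}\setminus\opto$ lies in some $A(R)$ and thus in $\optt\cap A\subseteq\{e:x(e)=\tfrac13\}$, while every edge of $P^{(j)}\cap\opto$ lies in $A\cup(C\cap\optt)$ as shown. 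So actually \emph{all} edges of $P^{(j)}$ have $x(e)\ge\tfrac13$ except those in $C\setminus\optt$, and the $\opto$-heaviness bounds the $\opto$-part; I would set $s:=\sum_{e\in P^{(j)}}c(e)x(e)\ge\tfrac13 c(P^{(j)}\cap\opto)+\tfrac13 c(P^{(j)}\setminus\opto\cap A)$ and argue, using $\opto$-heaviness plus the fact that on $A$ the exposed edges balance, that $c(P^{(j)})\le 3s$. I would write this out by splitting $P^{(j)}$ into its maximal subpaths contained in $A$ and the rest: on the rest, $P^{(j)}$ restricted there is $\opto$-heavy with $\opto$-edges in $\optt$, giving $x$-value $1$ on those $\opto$-edges, so $c\le\tfrac32 c(\cdot\cap\opto)\le\tfrac32\sum x$... this local analysis is where the factor genuinely becomes $3$, and getting the bookkeeping of the partition exactly right is the crux. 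I would conclude by summing the local bounds over the pieces and invoking $x(e)\ge\tfrac13$ uniformly on $A$.
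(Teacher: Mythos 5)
There is a genuine gap, and you have in fact located it yourself: your main estimate only yields $c(P^{(j)})\le\frac{9}{2}\sum_{e\in P^{(j)}}c(e)x(e)$, and the ``bookkeeping of the partition'' that you defer at the end is precisely the missing idea. The paper's proof does not argue edge-by-edge on $P^{(j)}\cap\opto$ at all. It establishes a trichotomy: under the hypotheses $j\in\bar{Y}$ and $\child{j}\cap Z=\emptyset$, the \emph{entire} interval $P^{(j)}$ lies in a single $A(R)$, or in a single $B(\bar{R})$, or in $C$. In the first two cases every edge of $P^{(j)}$ --- including the non-$\opto$ edges --- has $x(e)\ge 1/3$, so the factor $3$ is immediate without ever invoking heaviness; in the third case $P^{(j)}\cap\opto\subseteq\optt\cap C$ carries $x$-value $1$, and $\opto$-heaviness alone gives the factor $3/2$. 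The trichotomy is the real content: if $P^{(j)}$ meets a request but is not contained in one, \cref{ass:request-heavy-prefixes} forces the first and last edges of $P^{(j)}$ to lie in two \emph{distinct} requests $R',R''\in\mathcal{R}(i)$ (with $i$ the parent of $j$, via \cref{ass:request-light-laminarity}), so a gap $\bar{R}\in\bar{\mathcal{R}}(i)$ sits inside $P^{(j)}$ and $i^*_{\bar{R}}\in\child{j}\cap Z$, a contradiction. Your sketch proves nothing of this kind, and your fallback plan --- splitting $P^{(j)}$ into its $A$-pieces ``and the rest'' and asserting the rest is $\opto$-heavy --- is unjustified: sub-intervals of an $\opto$-heavy interval need not be $\opto$-heavy, so the mixed case cannot be dispatched by a local heaviness argument.

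A secondary issue: your justification for excluding the alternative ``$R\subseteq P^{(i')}$ for some $i'\in\child{j}$'' is wrong as stated. Removing an edge $e^{(j^*_R)}$ or $e^{(i^*_{\bar{R}})}$ whose interval lies \emph{inside} $P^{(i')}$ does not place $i'$ itself in $Z$. The correct route is that $R\subseteq P^{(i')}$ makes $i_R$ equal to $i'$ or a light descendant of $i'$, whence $\mathcal{R}(i')\neq\emptyset$ by \cref{lem:no-requests-in-children}, so the first removal rule of \algUnknownStt puts $e^{(i')}$ into $Z$. (The paper sidesteps this entirely by first proving $P^{(j)}\subseteq R$, after which $R\subseteq P^{(i')}\subset P^{(j)}$ is absurd.) That slip is repairable; the missing trichotomy is not a repair but the proof itself.
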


We now modify \algUnknownStt as follows: Compute the set $Z$ of edges tagged for removal by \algUnknownStt. Now construct the set $Z'$ by defining 
$$\textstyle \bar{H} := \{j \in \HEAVY \setminus Z \,:\, \child{j} \cap Z \neq \emptyset\} \qquad \text{and} \qquad
Z' := (Z \cup \bar{H}) \setminus \bigcup_{j \in \bar{H}} \child{j}.$$
Now execute step 2 of \algUnknownStt with $Z'$ instead of $Z$, i.e., remove the edges with indices in $Z'$ and connect the unmatched vertices by a min-cost matching.
It is easy to see that $|Z'| \leq |Z|$ and therefore the recourse factor is still bounded by $2$. For analyzing the approximation factor, we define $Y' := W \setminus (Z' \cup \{0\})$ the nodes corresponding to edges that have not been removed and $\bar{Y}' := \{i \in Y : T[i] \setminus \{0, i\} \subseteq Z'\}$ in analogy to the original analysis. It is easy to see that $\child{j} \cap Z = \emptyset$ for every $j \in \bar{Y}' \cap \HEAVY = \bar{Y} \setminus \bar{H}$ and hence $c(P^{(j)}) \leq 3 \sum_{e \in P^{(j)}} c(e)x(e)$ by \cref{lem:x-beneath-heavy-no-children}. Furthermore, if $i \in \bar{Y}' \cap \LIGHT$ then either $i \in \bar{Y}$ and $c(P^{(i)}) \leq 3 \sum_{e \in P^{(i)}} c(e)x(e)$ by \cref{lem:x-beneath-light}, or $i \in \child{j}$ for some $j \in \bar{Y} \setminus \bar{Y}' = \bar{H}$.
Note that \cref{lem:exp-decay,lem:x-beneath-heavy} imply
\begin{align*}
\textstyle \sum_{j \in \bar{H}} \sum_{i \in \child{j}} c(P^{(i)}) \leq \frac{1}{2} \sum_{j \in \bar{H}} c(P^{(j)}) \leq 3 \sum_{j \in \bar{H}} \sum_{e \in P^{(j)}} c(e)x(e).
\end{align*}
We thus obtain $\sum_{i \in \bar{Y}'} c(P^{(i)}) \leq 3 c(\optt)$ and hence for the modified algorithm it holds that
\begin{align*}
\textstyle c(\algt) \leq c(\optt) + 3 \sum_{i \in \bar{Y}'} c(P^{(i)}) \leq 10c(\optt).
\end{align*}
\bibliography{literature}

\begin{thebibliography}{10}

\bibitem{1983:Ball}
Michael Ball, Lawrence Bodin, and Robert Dial.
\newblock A {Matching} {Based} {Heuristic} for {Scheduling} {Mass} {Transit}
  {Crews} and {Vehicles}.
\newblock {\em Transportation Science}, 17:4--31, 1983.

\bibitem{1994:Bell}
Colin~E. Bell.
\newblock Weighted matching with vertex weights: {An} application to scheduling
  training sessions in {NASA} space shuttle cockpit simulators.
\newblock {\em European Journal of Operational Research}, 73:443--449, 1994.

\bibitem{2018:Bernstein}
Aaron Bernstein, Jacob Holm, and Eva Rotenberg.
\newblock Online bipartite matching with amortized {$O(log^2 n)$} replacements.
\newblock In {\em Proceedings of the Twenty-ninth Annual ACM SIAM Symposium on
  Discrete Algorithms (SODA '18)}, pages 947--959, 2018.

\bibitem{2009:Chaudhuri}
Kamalika Chaudhuri, Constantinos Daskalakis, Robert~D. Kleinberg, and Henry
  Lin.
\newblock Online bipartite perfect matching with augmentations.
\newblock In {\em Proceedings of the Twenty-eight IEEE Conference on Computer
  Communications (INFOCOM '09)}, pages 1044--1052, 2009.

\bibitem{1976:Christofides}
Nicos Christofides.
\newblock Worst-case analysis of a new heuristic for the traveling salesman
  problem.
\newblock Technical Report 388, Graduate School of Industrial Administration,
  Carnegie Mellon University, 1976.

\bibitem{1992:Derigs}
U.~Derigs and A.~Metz.
\newblock A matching-based approach for solving a delivery/pick-up vehicle
  routing problem with time constraints.
\newblock {\em Operations-Research-Spektrum}, 14:91--106, 1992.

\bibitem{2015:Dourado}
Mitre~Costa Dourado, Dirk Meierling, Lucia~D. Penso, Dieter Rautenbach, Fabio
  Protti, and Aline~Ribeiro de~Almeida.
\newblock Robust recoverable perfect matchings.
\newblock {\em Networks}, 66:210--213, 2015.

\bibitem{1965:Edmonds}
Jack {Edmonds}.
\newblock {Maximum matching and a polyhedron with 0,1-vertices.}
\newblock {\em {Journal of Research of the National Bureau of Standards,
  Section B}}, 69:125--130, 1965.

\bibitem{2005:Fuchs}
Bernhard Fuchs, Winfried Hochst\"{a}ttler, and Walter Kern.
\newblock Online matching on a line.
\newblock {\em Theoretical Computer Science}, 332:251--264, 2005.

\bibitem{1977:Geoffrion}
A.~M. Geoffrion and R.~Nauss.
\newblock Parametric and {Postoptimality} {Analysis} in {Integer} {Linear}
  {Programming}.
\newblock {\em Management Science}, 23:453--466, 1977.

\bibitem{1995:Grove}
Edward~F. Grove, Ming-Yang Kao, P.~Krishnan, and Jeffrey~Scott Vitter.
\newblock Online perfect matching and mobile computing.
\newblock In {\em Algorithms and Data Structures (WADS '95)}, volume 955 of
  {\em Lecture Notes in Computer Science}, pages 194--205. Springer, 1995.

\bibitem{2013:Gu}
A.~Gu, A.~Gupta, and A.~Kumar.
\newblock The power of deferral: maintaining a constant-competitive {Steiner}
  tree online.
\newblock In {\em Proceedings of the Forty-fifth {Annual} {ACM} {Symposium} on
  {Symposium} on {Theory} of {Computing} ({STOC} '13)}, pages 525--534, 2013.

\bibitem{2014:Guptab}
A.~Gupta and A.~Kumar.
\newblock Online {Steiner} {Tree} with {Deletions}.
\newblock In {\em Proceedings of the Twenty-fifth {Annual} {ACM}-{SIAM}
  {Symposium} on {Discrete} {Algorithms} ({SODA}~'14)}, pages 455--467, 2014.

\bibitem{2014:Gupta}
Anupam Gupta, Amit Kumar, and Cliff Stein.
\newblock Maintaining assignments online: Matching, scheduling, and flows.
\newblock In {\em Proceedings of the Twenty-fifth Annual ACM-SIAM Symposium on
  Discrete Algorithms (SODA '14)}, pages 468--479, 2014.

\bibitem{2012:Gupta}
Anupam Gupta and Kevin Lewi.
\newblock The online metric matching problem for doubling metrics.
\newblock In {\em Proceedings of the Thirty-ninth International Colloquium on
  Automata, Languages and Programming (ICALP '12)}, pages 424--435, 2012.

\bibitem{1993:Kalyanasundaram}
Bala Kalyanasundaram and Kirk Pruhs.
\newblock Online weighted matching.
\newblock {\em Journal of Algorithms}, 14:478 -- 488, 1993.

\bibitem{1990:Karp}
Richard~M. Karp, Umesh~V. Vazirani, and Vijay~V. Vazirani.
\newblock An optimal algorithm for on-line bipartite matching.
\newblock In {\em Proceedings of the Twenty-second Annual ACM Symposium on
  Theory of Computing (STOC '90)}, pages 352--358, 1990.

\bibitem{1994:Khuller}
Samir Khuller, Stephen~G. Mitchell, and Vijay~V. Vazirani.
\newblock On-line algorithms for weighted bipartite matching and stable
  marriages.
\newblock {\em Theoretical Computer Science}, 127:255--267, 1994.

\bibitem{2009:Liebchen}
Christian Liebchen, Marco L{\"u}bbecke, Rolf M{\"o}hring, and Sebastian
  Stiller.
\newblock {\em The Concept of Recoverable Robustness, Linear Programming
  Recovery, and Railway Applications}, pages 1--27.
\newblock Springer Berlin Heidelberg, 2009.

\bibitem{2016:Megow}
Nicole {Megow}, Martin {Skutella}, Jos\'e {Verschae}, and Andreas {Wiese}.
\newblock {The power of recourse for online MST and TSP.}
\newblock {\em {SIAM Journal on Computing}}, 45:859--880, 2016.

\bibitem{2017:Nayyar}
K.~Nayyar and S.~Raghvendra.
\newblock An input sensitive online algorithm for the metric bipartite matching
  problem.
\newblock In {\em Proceedings of the Fifty-eight Annual IEEE Symposium on
  Foundations of Computer Science (FOCS '17)}, pages 505--515, 2017.

\bibitem{1990:Olafsson}
Snj{\'o}lfur {\'O}lafsson.
\newblock Weighted {Matching} in {Chess} {Tournaments}.
\newblock {\em The Journal of the Operational Research Society}, 41:17--24,
  1990.

\bibitem{2012:Pulleyblank}
William~R. Pulleyblank.
\newblock {\em Edmonds, Matching and the Birth of Polyhedral Combinatorics},
  pages 181--197.
\newblock Springer Berlin Heidelberg, 2012.

\bibitem{2016:Raghvendra}
Sharath Raghvendra.
\newblock {A Robust and Optimal Online Algorithm for Minimum Metric Bipartite
  Matching}.
\newblock In {\em Approximation, Randomization, and Combinatorial Optimization.
  Algorithms and Techniques (APPROX/RANDOM '16)}, volume~60 of {\em Leibniz
  International Proceedings in Informatics}, pages 18:1--18:16. Schloss
  Dagstuhl--Leibniz-Zentrum fuer Informatik, 2016.

\bibitem{2018:Raghvendra}
Sharath Raghvendra.
\newblock Optimal analysis of an online algorithm for the bipartite matching
  problem on a line.
\newblock Technical Report arXiv:1803.07206, 2018.

\bibitem{1981:Reingold}
E.~Reingold and R.~Tarjan.
\newblock On a {Greedy} {Heuristic} for {Complete} {Matching}.
\newblock {\em SIAM Journal on Computing}, 10:676--681, 1981.

\end{thebibliography}

\newpage

\appendix

\section{Omitted proofs from \cref{sec:unknown-stage1}}

\subsection{Proof of \cref{lem:structure-opt-ontheline}}

  \restateLemOptOntheLine*
  \begin{proof}
    Enumerate the edges of $\opto$ by $e_1, \dots, e_n$ in an arbitrary order. 
    Then let $Q := \PL{e_1} \Delta \dots \Delta \PL{e_n}$, i.e., the symmetric difference of the projections of the edges to the line.
    Note that $Q \subseteq \LINE$ by construction and that $Q$ is a $V(\gro)$-join because $\PL{\{v, w\}}$ is a $\{v, w\}$-join for each edge $\{v, w\} \in \opto$.
    Because $\LINE$ is a path, it contains only a unique $V(\gro)$-join, namely the matching consisting of every odd edge of $\LINE$.
    Furthermore $Q \subseteq \bigcup_{i=1}^{n} \PL{e_i}$.
    Hence $c(Q) \leq \sum_{i=1}^{n} c(\PL{e_i}) = c(\opto)$, with strict inequality if $\PL{e_i} \cap \PL{e_j} \neq \emptyset$ for some $i \neq j$.
    We conclude that $Q = \opto$. 
  \end{proof}

\subsection{Proof of \cref{lem:heavy-union,lem:gain-maximizers}}
\label{apx:laminarity}

\restateLemHeavyUnion*

\begin{proof}
\begin{enumerate}
\item{
Let $X \subseteq E(\gro)$ and $A,B \subseteq E(\gro)$ be $X$-heavy sets with $A \cap B = \emptyset$. It follows that
\begin{align*}
c((A \cup B) \cap X) & = c(A \cap X) + c(B \cap X) \\ 
& \geq 2 \cdot c(A \setminus X) + 2 \cdot c(B \setminus X) = 2 \cdot c((A \cup B) \setminus X).
\end{align*}
The proof for two $X$-light sets follows analogously. 
}
\item{
Let $X \subseteq E(\gro)$ and $A,B \subseteq E(\gro)$ such that $B \subseteq A$, $A$ is $X$-heavy and $\g[X]{B} < 0$. Assume for contradiction that $A \setminus B$ is not $X$-heavy, that is, $c((A \setminus B) \cap X) < 2 \cdot c((A \setminus B) \setminus X)$. We derive a contradiction to $A$ being $X$-heavy, that is, 
\begin{align*}
c(A \cap X) & = c((A \setminus B) \cap X) + c(B \cap X) < c((A \setminus B) \cap X) + c(B \setminus X) \\ 
& < 2c((A \setminus B) \setminus X) + c(B \setminus X) \leq 2 c(A \setminus X).
\end{align*}
The first inequality follows from $\g[X]{B} < 0$ and the second from the assumption that $A \setminus B$ is not heavy. The proof of the second part of the lemma proceeds analogously. \qedhere
}
\end{enumerate}
\end{proof}

\restateLemGainMaximizers*
\begin{proof}
\begin{enumerate}
  \item Assume for contradiction that $X \subseteq L$, $X$ covers all vertices in $V(\PL{P})$ but $P \neq \PL{P}$. We show that $\PL{P}$ is then a $X$-alternating $X$-heavy path with higher gain than $P$.  
Note that $\PL{P}$ is $X$-alternating since $X$ covers all vertices in $V(\PL{P})$. Moreover, it holds that $c(P \cap X) \leq c(\PL{P} \cap X)$ since $X \subseteq L$ and it holds that $c(P \setminus X) > c(\PL{P} \setminus X)$. Hence, $\PL{P}$ is $X$-heavy and $\g[X]{P} < \g[X]{\PL{P}}$, a contradiction to the maximality of $P$.
  \item Let $X \subseteq \LINE$ be a matching and $I \subseteq L$ be an interval. Consider $P: = (I \cap X) \cup \{e(P') : P' \text{ maximal path in } I \setminus X\}$. Then, $P$ is $X$-alternating, $c(P \cap X) = c(I \cap X)$ and $c(P \setminus X) = c(\{e(P') : P \text{ maximal path in } I \setminus X\}) = c(I \setminus X)$. \qedhere
\end{enumerate}
\end{proof}

\subsection{Prefixes of gain-maximizing paths have positive gain}
In the proof of \cref{lem:laminarity} we used the fact that prefixes of gain-maximizing paths have positive gain, for which we give a formal proof here.
A \textit{prefix} of a path $P \subseteq E(\gro)$ is a non-empty subset $Q \subseteq P$ such that $Q$ is a path and $P \setminus Q$ is a path.

\begin{apxlemma} \label{lem:prefix}
Let $X \subseteq E(\gro)$ be a matching and $P \subseteq E(\gro)$ be a $X$-heavy, $X$-alternating path that is a maximizer of $\g[X]{P}$. Let $Q$ be a prefix of $P$. Then, $\g[X]{Q} \geq 0$.
\end{apxlemma}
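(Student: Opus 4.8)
The plan is to prove the contrapositive: if some prefix $Q$ of $P$ has $\g[X]{Q} < 0$, then we can construct an $X$-heavy $X$-alternating path with strictly larger $X$-gain than $P$, contradicting maximality. First I would fix notation: write $P = Q \cup (P \setminus Q)$, where both $Q$ and $P \setminus Q$ are paths since $Q$ is a prefix. Since $P$ is $X$-alternating and a gain maximizer, it starts and ends with edges of $X$; together with $Q$ being a prefix, this means $P\setminus Q$ is again an $X$-alternating path that starts and ends with an $X$-edge (removing the prefix $Q$ leaves a suffix that still begins with an $X$-edge because edges alternate and $Q$ itself ends appropriately).

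The key computation is the additivity of gain over the disjoint decomposition: $\g[X]{P} = \g[X]{Q} + \g[X]{P \setminus Q}$, which holds because $Q$ and $P\setminus Q$ partition $P$ and gain is a sum of signed edge costs. Hence if $\g[X]{Q} < 0$ then $\g[X]{P \setminus Q} = \g[X]{P} - \g[X]{Q} > \g[X]{P}$. It remains to argue that $P \setminus Q$ is $X$-heavy, so that it is a legitimate competitor in the maximization. For this I would invoke \cref{lem:heavy-setminus} (the second part of \cref{lem:heavy-union}): $P$ is $X$-heavy, $Q \subseteq P$, and $\g[X]{Q} < 0$, so $P \setminus Q$ is $X$-heavy. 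Thus $P \setminus Q$ is an $X$-heavy, $X$-alternating path with $\g[X]{P \setminus Q} > \g[X]{P}$, contradicting that $P$ maximizes $\g[X]{\cdot}$ among such paths. Therefore $\g[X]{Q} \geq 0$.

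One subtlety to handle carefully is the degenerate case $Q = P$, in which $P \setminus Q = \emptyset$; but then $\g[X]{Q} = \g[X]{P} \geq 0$ already, since $P$ is $X$-heavy (an $X$-heavy set has $c(P\cap X) \geq 2c(P\setminus X) \geq 0$, so $\g[X]{P} = c(P\cap X) - c(P\setminus X) \geq c(P\setminus X) \geq 0$), so the claim holds trivially and we may assume $Q \subsetneq P$ for the main argument. I would also note explicitly that $P \setminus Q$ being a path follows from the definition of prefix, and that it is nonempty in the non-degenerate case, so it is a valid path to feed into the maximization. The main obstacle — really the only thing that needs care — is verifying that $P \setminus Q$ is still $X$-alternating and starts/ends with $X$-edges so that it is admissible; but since $P$ is gain-maximizing it must start and end with $X$-edges, and removing a prefix of an alternating path that itself is bounded by the alternating structure preserves this, so \cref{lem:heavy-setminus} delivers the heaviness and the rest is the one-line additivity argument.
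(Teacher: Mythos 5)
Your proof is correct and follows essentially the same route as the paper's: assume $\g[X]{Q}<0$, note $P\setminus Q$ is an $X$-alternating path, apply \cref{lem:heavy-setminus} to get that it is $X$-heavy, and use additivity of gain to contradict the maximality of $P$. The extra care about the degenerate case $Q=P$ and about the endpoints of $P\setminus Q$ is harmless but not needed beyond what the paper's one-line argument already covers.
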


\begin{proof}
Assume by contradiction that $\g[X]{Q} < 0$. Since $Q$ is a prefix of the $X$-alternating path $P$, $P' := P \setminus Q$ is a $X$-alternating path. Moreover, with \cref{lem:heavy-setminus} we know that $P'$ is $X$-heavy. Lastly, we obtain $\g[X]{P'} = \g[X]{P} - \g[X]{Q} > \g[X]{P}$. This yields a contradiction to the gain-maximality of $P$. 
\end{proof}

  \section{Omitted Proofs from \cref{sec:unknown-stage2}}
  
  \subsection{Proof of \cref{lem:request-structure}}
  
  \restateLemRequestStructure*
  
  \begin{proof}
    We first show that for every $\{u, v\} \in \optt$, either $\{u, v\} \in \LINE$ or $\{u, v\} \not\subseteq V(\gro)$.
    By contradiction assume there is an edge $\{u, v\} \in \optt \setminus \LINE$ but $u, v \in V(\gro)$.
    Because $\{u, v\} \notin \LINE$, there is $v' \in V(\PL{\{u,v\}}) \setminus \{u, v\}$.
    Let $u'$ be the matching partner of $v'$ in $\optt$.
    Let $v_1, \dots, v_4$ be an ordering of $\{u, v, u', v\}$ such that $v_1 < \dots < v_4$.
    Note that $[u, v] \cap [u', v'] \neq \emptyset$ and hence $c(u, v) + c(u', v') > c(v_1, v_2) + c(v_3, v_4)$.
    Thus $\optt \setminus \{\{u, v\}, \{u', v'\}\} \cup \{\{v_1, v_2\}, \{v_3, v_4\}\}$ is a matching of lower cost than $\optt$, a contradiction.
    
    Now let $P$ be a connected component of $\opto \Delta \optt$.
    Note that $P$ cannot be a cycle, because then $V(P) \subseteq V(\gro)$ but $P \not\subseteq L$, contradicting the observation above.
    Thus $P$ is a path starting and ending with an edge of $\optt$.
    Because it is $\opto$-alternating, every internal vertex of $P$ is in $V(\gro)$.
    Hence $P \cap \LINE$ contains all of $P$ except for its first and last edge.
    Therefore, $P \cap \LINE$ is the only request intersecting with $P$ and it starts and ends with an edge of $\opto$.
    This proves the lemma, because every request has to intersect with a connected component of $\opto \Delta \optt$ and there are only $k$ of them.
    \end{proof}

  \subsection{Proof of \cref{lem:i-and-j-exist}}
  
  \restateLemIandJexist*
   
    \begin{proof}
    Let $R \in \mathcal{R}(i)$. By \cref{lem:request-structure}, there must be an edge $e \in R \cap \opto$.
    Because $P^{(i)} \cap X^{(i)} = P^{(i)} \setminus \opto$ by \cref{lem:heavy-light}, the path $P^{(i)}$ starts and ends with an edge of $L \setminus \opto$.
    Therefore $P^{(i)} \cap \opto \subseteq X^{(i+1)}$ and hence $P^{(i)} \cap \opto \subseteq \bigcup_{j \in \child{i}} P^{(j)}$ by construction through \algUnknownSto.
    Hence $e \in P^{(j)} \cap R$ for some $j \in \child{i} \subseteq \HEAVY(i)$.
    
    Let $\bar{R} \in \bar{\mathcal{R}}(i)$. Since $\bar{R}$ is a maximal path in $P^{(i)} \setminus \bigcup_{R \in \mathcal{R}(i)} R$ and not a prefix of $P^{(i)}$, there are two requests $R', R'' \in \mathcal{R}(i)$ neighboring $\bar{R}$ (i.e., $R'$ and $R''$ each have exactly one endpoint with $\bar{R}$ in common). 
    By \cref{lem:request-structure}, $R'$ and $R''$ both start and end with edges of $\opto$.
    Therefore $\bar{R}$ starts and ends with edges in $\LINE \setminus \opto$, so in particular there is an $e \in \bar{R} \cap (\LINE \setminus \opto)$.
    Let $j \in \child{i}$ be the unique child such that $\bar{R} \subseteq P^{(j)}$.
    Analogously to the argument for $P^{(i)} \cap \opto$ given above, $P^{(j)} \setminus \opto \subseteq \bigcup_{i' \in \child{j}} P^{(i')}$. Hence $e \in P^{(i')}$ for some $i' \in \child{j} \subseteq \LIGHT(i)$. 
  \end{proof}

  \subsection{Proof of \cref{lem:recourse-bound-unknown}}  

\begin{apxlemma}\label{lem:non-request-counting}
  Let $i \in \LIGHT$. If $\mathcal{R}(i) \neq \emptyset$ then $|\bar{\mathcal{R}}(i)| \leq |\mathcal{R}(i)| - 1$.
\end{apxlemma}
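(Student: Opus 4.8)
The plan is to exhibit two particular maximal subpaths of $P^{(i)}\setminus\bigcup_{R\in\mathcal{R}(i)}R$ that can never belong to $\bar{\mathcal{R}}(i)$: the part of $P^{(i)}$ lying to the left of the leftmost request, and the part lying to the right of the rightmost request. Writing $m:=|\mathcal{R}(i)|\ge 1$ and ordering the requests as $R_1<\dots<R_m$ along $P^{(i)}$, these $m$ pairwise edge-disjoint intervals cut $P^{(i)}$ into at most $m+1$ maximal ``gap'' subpaths $G_0,\dots,G_m$, where $G_0$ lies left of $R_1$, $G_m$ lies right of $R_m$, and $G_1,\dots,G_{m-1}$ are the (possibly empty) internal gaps. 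Since $\bar{\mathcal{R}}(i)$ is by definition a subset of $\{G_0,\dots,G_m\}$, once $G_0$ and $G_m$ are shown to be disqualified we obtain $|\bar{\mathcal{R}}(i)|\le (m+1)-2=m-1$, which is the claim.

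First I would collect the boundary-edge structure. Since $i\in\LIGHT$, \cref{lem:heavy-light} gives $P^{(i)}\cap X^{(i)}=P^{(i)}\setminus\opto$; as the gain-maximizing path $P^{(i)}$ starts and ends with edges of $X^{(i)}$, both its first and its last edge lie in $L\setminus\opto$. Dually, every child $j\in\child{i}$ is heavy, so \cref{lem:heavy-light} gives $P^{(j)}\cap X^{(j)}=P^{(j)}\cap\opto$, hence $P^{(j)}$ starts and ends with edges of $\opto$. Finally, \cref{lem:request-structure} says every request starts and ends with an edge of $\opto$. In particular no request in $\mathcal{R}(i)$ can contain the first or the last edge of $P^{(i)}$, so $G_0$ and $G_m$ are nonempty, distinct, and contain, respectively, the first and the last edge of $P^{(i)}$.

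Next comes the interval-geometry step. Suppose $G_0\subseteq P^{(j)}$ for some $j\in\child{i}$. By \cref{lem:laminarity}, $P^{(i)}$ and $P^{(j)}$ are intervals of $L$, hence so is $G_0$, and $G_0\subseteq P^{(j)}\subseteq P^{(i)}$. Since $G_0$ contains the leftmost edge of $P^{(i)}$ and $P^{(j)}$ is a subinterval of $P^{(i)}$ containing $G_0$, the leftmost edge of $P^{(j)}$ must coincide with the leftmost edge of $P^{(i)}$; but the former lies in $\opto$ and the latter does not, a contradiction. Therefore $G_0$ is contained in no child interval, so $G_0\notin\bar{\mathcal{R}}(i)$, and the symmetric argument gives $G_m\notin\bar{\mathcal{R}}(i)$. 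Consequently $\bar{\mathcal{R}}(i)\subseteq\{G_1,\dots,G_{m-1}\}$ and $|\bar{\mathcal{R}}(i)|\le m-1$.

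I expect the only delicate point to be this interval-geometry step, i.e.\ being precise that a subinterval of $P^{(i)}$ which contains the extreme portion $G_0$ must inherit the extreme edge of $P^{(i)}$ as its own extreme edge; this is exactly where the laminarity of \cref{lem:laminarity} and the line structure are used. Everything else is routine bookkeeping about how $m$ disjoint intervals partition a segment and about which of the resulting pieces can be nested inside a child interval.
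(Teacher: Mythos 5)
Your argument is correct and is essentially the paper's own proof: both identify the two extreme gaps of $P^{(i)}\setminus\bigcup_{R\in\mathcal{R}(i)}R$ as prefixes of $P^{(i)}$ containing its first/last (non-$\opto$) edge and rule out their containment in any child interval, whose extreme edges lie in $\opto$; your interval-geometry justification of that last step is in fact spelled out more carefully than in the paper. The only caveat is that your appeal to \cref{lem:heavy-light} silently excludes $i=0$ (where $P^{(0)}=L$ begins and ends with $\opto$-edges, so a request may be a prefix of $P^{(0)}$ and the extreme-gap argument needs \cref{ass:request-heavy-prefixes} plus the fact that $\opto$ is a matching), but the paper's own proof makes the same implicit restriction, so this is not a deviation from its approach.
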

\begin{proof}
Note that there are exactly $\vert \mathcal{R}(i) \vert + 1$ maximal paths in $P^{(i)} \setminus \bigcup_{R \in \mathcal{R}(i)} R$. This is because requests $R \in \mathcal{R}(i)$ start and end with edges from $\opto$ and are in particular not prefixes of the interval $P^{(i)}$. Hence, there are two maximal paths in $P^{(i)} \setminus \bigcup_{R \in \mathcal{R}(i)} R$ that are prefixes of $P^{(i)}$. In particular these cannot be subsets of $P^{(j)}$ for any $j \in ch(i)$. Summarizing, there are at most $\vert \mathcal{R}(i) \vert -1$ intervals that fulfill both conditions in the definition of $\bar{\mathcal{R}}(i)$.
\end{proof}

  \restateLemRecourseBoundedUnknown*
  
  \begin{proof}
  Let $Z^{(i)}$ be the set of edge indices tagged for removal when processing node $i \in L$.
  We show $|Z^{(i)}| \leq 2|\mathcal{R}(i)|$, proving the lemma.
  If $\mathcal{R}(i) = \emptyset$, then $Z^{(i)} = \emptyset$.
  If $\mathcal{R}(i) \neq \emptyset$, then $|Z^{(i)}| \leq |\mathcal{R}(i)| + |\bar{\mathcal{R}}(i)| + 1 \leq 2|\mathcal{R}(i)|$, where the last inequality is due to \cref{lem:non-request-counting}.  
\end{proof}

  \subsection{Proof of \cref{lem:bound-against-barY}}
  
 For proving the following lemmas, we formalize the notion of the set of descendants of a tree-node $i \in W$ by defining $\descendant{i} := \{j \in W : i \in V(T[j]), j \neq i\}$.  
  
  \restateLemGenericBound*
  \begin{proof}
  Recall that $\algt = M' \cup M''$, where $M' = \{e^{(i)} : i \in Y)\}$ and $M''$ is a minimum cost perfect matching on the set of vertices $$U := \{v \in V(\grt) : v \text{ is not covered by } M'\}.$$
  
  Let $i_1, \dots, i_\ell$ be an arbitrary ordering of the indices in $Y$ and consider the symmetric difference of paths $P^{(i)}$ for $i \in Y$, i.e., \mbox{$\bar{M} := P^{(i_1)} \Delta \dots \Delta P^{(i_\ell)}$}.
  Note that $\bar{M}$ is a $(V(\grt) \setminus U)$-join (as each path in $P^{(i)}$ for $i \in Y$ corresponds to an edge in $M'$) and that $\bar{M} \subseteq \bigcup_{i \in \bar{Y}} P^{(i)}$ by construction of $\bar{Y}$.
  Thus $c(\bar{M}) \leq \sum_{i \in \bar{Y}} c(P^{(i)})$.
  
  Because $\optt$ is a $V(\grt)$-join, $\bar{M} \Delta \optt$ is a $U$-join.
  Therefore $c(M'') \leq c(\bar{M} \Delta \optt) \leq c(\bar{M}) + c(\optt)$.
  Furthermore, 
  $$c(M') = \sum_{i \in Y} c(P^{(i)}) \leq \sum_{i \in \bar{Y}} \left(c(P^{(i)}) + \sum_{j \in \descendant{i}} c(P^{(j)})\right) \leq 2 \sum_{i \in \bar{Y}} c(P^{(i)}).$$
  The last inequality follows from the exponential decay property established in \cref{lem:exp-decay}.
  Putting this together, we obtain $c(\algt) = c(M') + c(M'') \leq c(\optt) + c(M') + c(\bar{M}) \leq c(\optt) + 3 \sum_{i \in \bar{Y}} c(P^{(i)})$.  
\end{proof}

\subsection{Proofs of \cref{lem:heavy-intervals-budget,lem:light-intervals-budget}}

\begin{apxlemma}
\label{lem:greedy-removal}
  Let $i \in W$ and $R \subseteq P^{(i)}$ be an interval. Let $Q := R \setminus P^{(j^*)}$ where
  $j^* := \min \, \{j \in \child{i} : P^{(j)} \cap R \neq \emptyset\}$.
  If $i \in \LIGHT$, then $Q$ is not $\opto$-heavy.
  If $i \in \HEAVY$, then $Q$ is not $\opto$-light.
\end{apxlemma}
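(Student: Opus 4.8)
The plan is to argue by contradiction from the greedy rule of \algUnknownSto at iteration $j^*$. I will carry out the case $i \in \LIGHT$, where we assume that $Q$ is $\opto$-heavy and must derive a contradiction; the case $i \in \HEAVY$, where $Q$ is assumed $\opto$-light, is completely symmetric, with the roles of $\opto$ and $\LINE \setminus \opto$ interchanged. Throughout I may assume that the set defining $j^*$ is nonempty and that $Q$ is nonempty, since otherwise there is nothing to argue; both are the case in every application of the lemma.

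The core of the argument, and the step I expect to require the most care, is to pin down $X^{(j^*)}$ on the interval $R$. First I would show that $X^{(i+1)}$ coincides with $\opto$ on all of $P^{(i)}$: for $i = 0$ this holds because $X^{(1)} = \opto$, and for $i \in \LIGHT \setminus \{0\}$ it follows from \cref{lem:heavy-light} (which gives $P^{(i)} \cap X^{(i)} = P^{(i)} \setminus \opto$) together with the update rule $X^{(i+1)} = X^{(i)} \Delta P^{(i)}$. Next I would show that no iteration strictly between $i$ and $j^*$ touches $R$, i.e., $P^{(i')} \cap R = \emptyset$ for all $i'$ with $i < i' < j^*$: if $P^{(i')}$ met $R \subseteq P^{(i)}$, then it would meet $P^{(i)}$, so \cref{lem:laminarity} would force $P^{(i')} \subset P^{(i)}$; letting $j \in \child{i}$ be the child of $i$ on the tree path towards $i'$, we get $P^{(j)} \supseteq P^{(i')}$, hence $P^{(j)} \cap R \neq \emptyset$ while $j \leq i' < j^*$, contradicting the minimality of $j^*$. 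Combining these two facts yields $X^{(j^*)} \cap R = X^{(i+1)} \cap R = \opto \cap R$. Restricting to $Q \subseteq R$, this gives $\g[X^{(j^*)}]{Q} = \g[\opto]{Q}$; in particular, since $Q$ is $\opto$-heavy and nonempty, $Q$ is $X^{(j^*)}$-heavy and $\g[X^{(j^*)}]{Q} > 0$.

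Now the contradiction is immediate. The path $P^{(j^*)}$ is $X^{(j^*)}$-heavy (being the path selected in iteration $j^*$) and is edge-disjoint from $Q$, so by \cref{lem:heavy-union} the interval $I' := P^{(j^*)} \cup R$ --- a subpath of $\LINE$, as it is the union of two overlapping intervals --- is $X^{(j^*)}$-heavy. By \cref{lem:interval-to-path} there is an $X^{(j^*)}$-alternating path $P'$ with $c(P' \cap X^{(j^*)}) = c(X^{(j^*)} \cap I')$ and $c(P' \setminus X^{(j^*)}) = c(I' \setminus X^{(j^*)})$; hence $P'$ is $X^{(j^*)}$-heavy and, by additivity of $c(\,\cdot\, \cap X^{(j^*)})$ and $c(\,\cdot\, \setminus X^{(j^*)})$ over the disjoint union $I' = P^{(j^*)} \cup Q$,
\[
\g[X^{(j^*)}]{P'} \;=\; \g[X^{(j^*)}]{P^{(j^*)}} + \g[X^{(j^*)}]{Q} \;>\; \g[X^{(j^*)}]{P^{(j^*)}},
\]
contradicting the maximality of $\g[X^{(j^*)}]{P^{(j^*)}}$ among $X^{(j^*)}$-heavy $X^{(j^*)}$-alternating paths. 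For $i \in \HEAVY$ the same computation gives $X^{(j^*)} \cap R = R \setminus \opto$, so an $\opto$-light $Q$ is again $X^{(j^*)}$-heavy with positive $X^{(j^*)}$-gain, and the identical argument yields a contradiction.
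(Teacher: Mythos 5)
Your proof is correct and follows essentially the same route as the paper's: pin down $X^{(j^*)} \cap R = \opto \cap R$ (resp.\ $R \setminus \opto$) via \cref{lem:heavy-light} and the fact that no iteration between $i$ and $j^*$ touches $R$, conclude $Q$ is $X^{(j^*)}$-heavy, and contradict the greedy choice of $P^{(j^*)}$ via \cref{lem:heavy-union} and gain additivity. The only difference is that you spell out two steps the paper leaves implicit (the laminarity argument for intermediate iterations and the use of \cref{lem:interval-to-path} to turn the interval $P^{(j^*)} \cup Q$ into an alternating path), which is fine.
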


\begin{proof}
  By contradiction assume that $i \in \LIGHT$ and $Q$ is $\opto$-heavy.
  Note that $P^{(j)} \cap R = \emptyset$ for all $j \in \child{i}$ with $j < j^*$ by choice of $j^*$. 
  This implies that $X^{(j^*)} \cap R = (X^{(i)} \Delta P^{(i)}) \cap R = \opto \cap R$, since $i \in \LIGHT$.
  In particular, $Q$ is $X^{(j^*)}$-heavy.
  Since $Q$ is $X^{(j^*)}$-heavy, $P^{(j^*)} \cup Q$ is an $X^{(j^*)}$-heavy $X^{(j^*)}$-alternating path by \cref{lem:heavy-union}.
  Furthermore, $\g[X^{(j^*)}]{Q \cup P^{(j^*)}} = \g[X^{(j^*)}]{Q} + \g[X^{(j^*)}]{P^{(j^*)}}$, contradicting the construction of $P^{(j^*)}$ by \algUnknownSto. 
  The proof for the case that $i \in \HEAVY$ and $Q$ is $\opto$-light follows analogously.
\end{proof}

  In the proof of \cref{lem:light-intervals-budget}, we make use of the following consequence of \cref{ass:request-light-laminarity,ass:request-heavy-prefixes}.
  
\begin{apxlemma}\label{lem:no-requests-in-children}
  Let $i \in \LIGHT$. If $\mathcal{R}(i) = \emptyset$, then $\mathcal{R}(i') = \emptyset$ for all $i' \in \descendant{i} \cap \LIGHT$.
\end{apxlemma}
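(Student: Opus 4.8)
The plan is to prove the contrapositive in a localized way: if some light descendant $i'$ of $i$ has a request assigned to it, then $i$ itself must have a request assigned to it, i.e.\ $\mathcal{R}(i)\neq\emptyset$. So suppose $i'\in\descendant{i}\cap\LIGHT$ with $\mathcal{R}(i')\neq\emptyset$, and pick $R\in\mathcal{R}(i')$. By definition $i_R=i'$, so in particular $R\subseteq P^{(i')}\subseteq P^{(i)}$, hence $R$ is a candidate for being assigned to $i$; the only obstruction to $i_R=i$ being impossible is that some \emph{larger} light interval than $P^{(i')}$ but still inside (or equal to) $P^{(i)}$ also contains $R$ — but that is fine, it just means $i_R$ is somewhere on the path from $i$ to $i'$ in the tree. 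The real content is to show that $i_R$ actually lies in $\descendant{i}\cup\{i\}$ and, more precisely, to force $i_R = i$ is too strong; instead I should argue directly that $\mathcal{R}(i)\neq\emptyset$ by exhibiting \emph{some} request assigned to $i$.

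Here is the cleaner route. Consider the set $\mathcal{R}_i := \{R\in\mathcal{R} : R\subseteq P^{(i)}\}$; by the above this set is nonempty (it contains the request witnessing $\mathcal{R}(i')\neq\emptyset$). Among all elements of $\mathcal{R}_i$, the assignment rule sends each $R$ to $i_R := \max\{\ell\in\LIGHT : R\subseteq P^{(\ell)}\}$, and $i_R$ is one of the light nodes on $V(T[i'])$ between $i$ and $i'$, so $i_R\in\{i\}\cup(\descendant{i}\cap\LIGHT)$. If some $R\in\mathcal{R}_i$ has $i_R=i$, we are done. Otherwise every $R\in\mathcal{R}_i$ is assigned to a strict light descendant of $i$; I then need to derive a contradiction with \cref{ass:request-light-laminarity} and \cref{ass:request-heavy-prefixes}. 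The point is: by \cref{ass:request-light-laminarity}, for every $R\in\mathcal{R}_i$ and every light node $\ell$ on the path from $i$ to $i_R$, either $P^{(\ell)}\subseteq R$ or $R\subseteq P^{(\ell)}$ or they are disjoint; since $R\subseteq P^{(i_R)}\subseteq P^{(\ell)}$ we always get $R\subseteq P^{(\ell)}$, in particular $R\subseteq P^{(i)}$ consistently. Now look at $P^{(i)}$ itself: it is covered (up to the gaps) by its children's intervals, and \cref{lem:i-and-j-exist}/\cref{lem:heavy-light} tell us $P^{(i)}\cap\opto\subseteq\bigcup_{j\in\child{i}}P^{(j)}$. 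Each request $R$ with $R\subseteq P^{(i)}$ contributes an $\opto$-edge (\cref{lem:request-structure}) which lies inside some child $P^{(j)}$, and by \cref{ass:request-heavy-prefixes} applied to that heavy child $j$ (whose intersection with $\bigcup_{R}R$ is then nonempty), the first and last edges of $P^{(j)}$ are in $\bigcup_R R$, forcing $P^{(j)}$ to be swallowed by requests rather than cutting across them; iterating this down the tree from $i$ to the level of $i_R$, one shows the maximality of $i_R$ is violated unless $i_R=i$ — the request $R$ would actually be contained in $P^{(i)}$ in a way that makes $i$ the inclusionwise-maximal light node, not a strict descendant.

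Concretely, the argument I expect to need: fix the request $R$ witnessing $\mathcal{R}(i')\neq\emptyset$, let $i_R$ be its assignment. Suppose $i_R\neq i$, so $i_R\in\descendant{i}$, and let $j$ be the child of $i$ on $T[i_R]$ (so $j\in\HEAVY$ since $i\in\LIGHT$, and $i_R\in\descendant{j}\cup\{j\}$, but $i_R$ is light so $i_R\neq j$, hence $i_R\in\descendant{j}$). Then $R\subseteq P^{(i_R)}\subseteq P^{(j)}$, so $\bigcup_{R'\in\mathcal{R}}R'\cap P^{(j)}\supseteq R\neq\emptyset$, and \cref{ass:request-heavy-prefixes} gives that the first and last edges of $P^{(j)}$ lie in $\bigcup_{R'}R'$. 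Combined with \cref{ass:request-light-laminarity} (every request is either disjoint from $P^{(j)}$, or contains it, or is contained in it), the requests touching $P^{(j)}$ actually tile $P^{(j)}$ from end to end — there is no room for a light child of $j$ that contains $R$ but is not contained in a request. Pushing the same reasoning one level further down, $i_R$ cannot be a proper light descendant: the maximal light interval containing $R$ must be $P^{(j)}$'s... wait, $j$ is heavy; so the maximal light interval containing $R$ is forced all the way up to $i$. That contradicts $i_R\neq i$, so $i_R=i$ and hence $\mathcal{R}(i)\neq\emptyset$, contradicting $\mathcal{R}(i)=\emptyset$. The main obstacle I anticipate is getting the bookkeeping with the two assumptions exactly right — in particular making precise the step "the requests tile $P^{(j)}$ so there is no intermediate light interval", which requires carefully tracking how \cref{ass:request-heavy-prefixes} on a heavy node interacts with \cref{ass:request-light-laminarity} on the light nodes beneath it; everything else is straightforward use of \cref{lem:request-structure}, \cref{lem:heavy-light}, and the tree structure.
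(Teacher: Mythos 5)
Your overall plan (prove the contrapositive; locate the child $j \in \child{i}$ whose interval contains the offending request; invoke \cref{ass:request-heavy-prefixes} on $j$) starts on the same track as the paper, but the step that is supposed to close the argument is wrong, and wrong in a way that matters. You fix $R \in \mathcal{R}(i')$, so by definition $i_R = i'$, and then argue that ``the maximal light interval containing $R$ is forced all the way up to $i$,'' concluding $i_R = i$. This cannot be right: $i_R = i'$ is part of the hypothesis, and if your derivation were sound it would show that no light descendant can ever have a request assigned to it, i.e.\ it would prove $\mathcal{R}(i') = \emptyset$ unconditionally, which is false (the whole point of the recursive assignment is that a request strictly contained in $P^{(i')}$ is assigned to $i'$, not to $i$). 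Two specific errors feed this: (a) you apply \cref{ass:request-light-laminarity} to the heavy node $j$ (``every request is either disjoint from $P^{(j)}$, or contains it, or is contained in it''), but that assumption is stated only for light nodes, and requests genuinely can partially overlap heavy intervals --- this is exactly what the sets $A(R) = R \setminus P^{(j^*_R)}$ are built from; and (b) the claimed ``tiling'' of $P^{(j)}$ by requests does not follow from \cref{ass:request-heavy-prefixes}, which constrains only the first and last edge of $P^{(j)}$ and leaves room both for gaps (the sets in $\bar{\mathcal{R}}$) and for requests properly contained in light children of $j$.

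The missing idea is that the witness for $\mathcal{R}(i) \neq \emptyset$ is in general a \emph{different} request from $R$. The paper's proof takes the first edge $e$ of $P^{(j)}$; \cref{ass:request-heavy-prefixes} supplies a request $R'$ with $e \in R'$. Since the intervals produced by \algUnknownSto have pairwise distinct endpoints, $e$ lies in no child interval of $j$ and in no child of $i$ other than $j$, so no light interval strictly contained in $P^{(i)}$ contains $R'$. Hence either $R' \subseteq P^{(i)}$, giving $i_{R'} = i$ and thus $\mathcal{R}(i) \neq \emptyset$, or $R' \not\subseteq P^{(i)}$, in which case \cref{ass:request-light-laminarity} (now legitimately applied to the light node $i$) forces $P^{(i)} \subseteq R'$, so $R \subseteq P^{(i')} \subseteq R'$, contradicting the disjointness of distinct requests. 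Your write-up never isolates this $R'$ or computes $i_{R'}$, so the proof does not go through as written.
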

\begin{proof}
  We show the contrapositive, i.e., if $\mathcal{R}(i') \neq \emptyset$ for some $i' \in \descendant{i} \cap \LIGHT$, then $\mathcal{R}(i) \neq \emptyset$. 
  Let $i' \in \descendant{i} \cap \LIGHT$ and $R \in \mathcal{R}(i')$. 
  Note that because $i' \in \LIGHT$, there must be $j \in \child{i}$ with $i' \in \descendant{j}$. 
  Hence $R \cap P^{(j)} \supseteq R \cap P^{(i')} \neq \emptyset$. 
  By \cref{ass:request-heavy-prefixes}, there is a $R' \in \mathcal{R}$ containing the first edge $e$ of $P^{(j)}$.
  Note that $e \in P^{(i)}$ but $e$ is not contained in any child interval of $j$, because all intervals created by \algUnknownSto have disjoint endpoints. 
  This implies that $i_{R'} \leq i$, because $e \in R' \subseteq P^{(i_{R'})}$.
  If $i_{R'} = i$, then $\mathcal{R}(i) \neq \emptyset$ and we are done.
  If $i_{R'} < i$, then $R' \not\subseteq P^{(i)}$ and hence, by \cref{ass:request-light-laminarity}, $P^{(i)} \subseteq R'$.
  But then $P^{(i')} \subseteq R'$, which implies $\mathcal{R}(i') = \emptyset$ as request intervals are disjoint, a contradiction.  
\end{proof}

\restateLemHeavyIntervalsBudget*
\begin{proof}
  \cref{lem:greedy-removal} implies that $A(R) = R \setminus P^{(j^*_R)}$ is not $\opto$-heavy.
  Note that $A(R) \setminus \optt = A(R) \cap \opto$ because $A(R) \subseteq R \subseteq \opto \Delta \optt$.
  Thus $c(A(R) \setminus \optt) = c(A(R) \cap \opto) < 2 c(A(R) \setminus \opto) = 2 c(A(R) \cap \optt)$.
\end{proof}

\restateLemLightIntervalsBudget*

\begin{proof}
Let $\bar{R} \in \bar{\mathcal{R}}$ and $i$ be the unique index such that $\bar{R} \in \mathcal{\bar{R}}(i)$. 
Let $i^*_{\bar{R}} := \min \{i' \in \LIGHT(i) : P^{(i')} \cap \bar{R} \neq \emptyset \}$, i.e., the index of the light edge that was removed because of $\bar{R}$. Then, because of \cref{lem:greedy-removal}, $\bar{R} \setminus P^{(i^*_{\bar{R}})}$ is not $\opto$-light. We obtain $B(\bar{R})$ from $\bar{R} \setminus P^{(i^*_{\bar{R}})}$ by deleting edges from $\bigcup_{i' \in \LIGHT(i) \cap Z} P^{(i')}$ that intersect with $\bar{R}$. By \cref{ass:request-light-laminarity} we know that $(\bar{R} \setminus P^{(i^*_{\bar{R}})}) \cap \bigcup_{i' \in \LIGHT(i) \cap Z} P^{(i')}$ is the union of $\opto$-light intervals and hence $\opto$-light (\cref{lem:heavy-union}). We conclude with \cref{lem:heavy-setminus} that $B(\bar{R})$ is not $\opto$-light. Note that by construction $B(\bar{R})$ does not intersect with requests corresponding to ancestors of $i$ or with children of $i$ that contain requests. Lastly, with help of \cref{lem:no-requests-in-children} we can deduce that $B(\bar{R})$ does not intersect with the request of any descendant of $i$. We obtain that $\opto \cap B(\bar{R}) = \optt \cap B(\bar{R})$ and from $B(\bar{R})$ being not $\opto$-light we get that $B(\bar{R})$ is not $\optt$-light. This concludes the proof.
\end{proof}

\subsection{Proof of \cref{lem:x-bounded-optt}}

\begin{apxlemma}
\label{lem:line-partition}
  $\LINE = \dot{\bigcup}_{R \in \mathcal{R}} A(R) \, \dot\cup \, \dot{\bigcup}_{\bar{R} \in \mathcal{\bar{R}}} B(\bar{R}) \, \dot\cup \, C$
\end{apxlemma}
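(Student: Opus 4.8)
The lemma asserts that the listed sets partition $\LINE$. Since $A=\bigcup_{R\in\mathcal{R}}A(R)$ and $B=\bigcup_{\bar R\in\bar{\mathcal{R}}}B(\bar R)$ by definition and $C:=\LINE\setminus(A\cup B)$, we automatically have $A\cup B\cup C=\LINE$ with $C$ disjoint from $A$ and from $B$. Hence the only things to prove are: (i) the sets $A(R)$, $R\in\mathcal{R}$, are pairwise disjoint; (ii) the sets $B(\bar R)$, $\bar R\in\bar{\mathcal{R}}$, are pairwise disjoint; and (iii) $A(R)\cap B(\bar R)=\emptyset$ for all $R\in\mathcal{R}$ and $\bar R\in\bar{\mathcal{R}}$. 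Claim (i) is immediate: $A(R)\subseteq R$, and distinct requests are disjoint because they are the maximal paths of $(\opto\Delta\optt)\cap\LINE$ (cf.\ \cref{lem:request-structure}). I expect the real work to be claim (iii); (ii) will follow by the same mechanism and (i) is trivial.

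The common tool is the following auxiliary fact. First, if $i\in\LIGHT$ with $\mathcal{R}(i)=\emptyset$, then $\bar{\mathcal{R}}(i)=\emptyset$ as well, since then $P^{(i)}\setminus\bigcup_{R\in\mathcal{R}(i)}R=P^{(i)}$, whose only maximal subpath is $P^{(i)}$ itself, and this is not contained in any child interval. Combining this with step~(i).1 of \algUnknownStt — which puts $i$ into $Z$ whenever $i\in\LIGHT$ (note $0\notin\LIGHT$) and $\mathcal{R}(i)\neq\emptyset$ — and with \cref{lem:no-requests-in-children}, I get: \emph{if $i''\in\LIGHT(i)\setminus Z$ for some $i\in\LIGHT$, then $\mathcal{R}(j)=\emptyset$ (hence $\bar{\mathcal{R}}(j)=\emptyset$) for every light node $j$ with $P^{(j)}\subseteq P^{(i'')}$}. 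Indeed $i''\notin Z$ and $i''\in\LIGHT$ force $\mathcal{R}(i'')=\emptyset$, and then \cref{lem:no-requests-in-children} propagates emptiness to every light descendant of $i''$, which together with $i''$ itself covers all such $j$.

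To prove (iii), suppose $e\in A(R)\cap B(\bar R)$ with $\bar R\in\bar{\mathcal{R}}(i)$. Then $e\in R$ (as $A(R)\subseteq R$) and $e\in\bar R\subseteq P^{(i)}$; moreover $e$ lies in no request of $\mathcal{R}(i)$, since $\bar R$ avoids those, so $R\notin\mathcal{R}(i)$, i.e.\ $i_R\neq i$. From $R\cap P^{(i)}\neq\emptyset$ and \cref{ass:request-light-laminarity} (with $i\in\LIGHT$) we get $R\subseteq P^{(i)}$ or $P^{(i)}\subseteq R$; the latter would force $\mathcal{R}(i)=\emptyset$ (any $R'\in\mathcal{R}(i)$ satisfies $R'\subseteq P^{(i)}\subseteq R$ and $R'\neq R$, violating disjointness of requests), hence $\bar{\mathcal{R}}(i)=\emptyset$, impossible. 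So $R\subseteq P^{(i)}$, and since $i_R\neq i$, the node $i_R$ is a proper light descendant of $i$; the path from $i$ to $i_R$ in the tree passes through some $j\in\HEAVY(i)=\child{i}$ and then through some $i''\in\LIGHT(i)$ with $P^{(i_R)}\subseteq P^{(i'')}$. Thus $e\in R\subseteq P^{(i_R)}\subseteq P^{(i'')}$ with $i''\in\LIGHT(i)$, while $e\in B(\bar R)$ means $e\notin\bigcup_{i'\in\LIGHT(i)\cap Z}P^{(i')}$, so $i''\notin Z$. The auxiliary fact, applied with $j=i_R$, then gives $\mathcal{R}(i_R)=\emptyset$, contradicting $R\in\mathcal{R}(i_R)$.

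Claim (ii) follows the same pattern: if $e\in B(\bar R)\cap B(\bar R')$ with $\bar R\in\bar{\mathcal{R}}(i)$, $\bar R'\in\bar{\mathcal{R}}(i')$ and $i\neq i'$, then by \cref{lem:laminarity} the intervals $P^{(i)},P^{(i')}$ are disjoint (impossible, since $e$ lies in both) or nested, say $P^{(i')}\subsetneq P^{(i)}$ after relabeling; then $i'$ is a proper light descendant of $i$, so $P^{(i')}\subseteq P^{(i'')}$ for some $i''\in\LIGHT(i)$, $e\in P^{(i'')}$, and $e\in B(\bar R)$ forces $i''\notin Z$, whence the auxiliary fact with $j=i'$ yields $\bar{\mathcal{R}}(i')=\emptyset$, contradicting $\bar R'\in\bar{\mathcal{R}}(i')$. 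The main obstacle is the removal-set bookkeeping in (iii): the argument closes only because \cref{ass:request-light-laminarity} forces every request to sit entirely inside one light interval, which is exactly what makes \cref{lem:no-requests-in-children} propagate emptiness of $\mathcal{R}(\cdot)$ down the tree; everything else is laminarity bookkeeping via \cref{lem:laminarity}.
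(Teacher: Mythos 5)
Your proof is correct and takes essentially the same route as the paper's: after reducing to pairwise disjointness of the $A(R)$'s, of the $B(\bar{R})$'s, and of $A(R)$ with $B(\bar{R})$, both arguments combine disjointness of requests, \cref{ass:request-light-laminarity}, laminarity, and \cref{lem:no-requests-in-children} with the fact that a light node with nonempty request set is tagged for removal; your ``auxiliary fact'' is just the contrapositive of the exclusion step the paper uses. Two harmless slips: in (ii) you omit the trivial case $i=i'$ (distinct maximal paths of $P^{(i)}\setminus\bigcup_{R\in\mathcal{R}(i)}R$ are disjoint and $B(\bar{R})\subseteq\bar{R}$), and in fact $0\in\LIGHT$ since $|T[0]|=0$ is even, but this does not affect your argument because every node to which you apply step~(i).1 lies in $\LIGHT(i)$ and is therefore a grandchild, never the root.
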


\begin{proof}
We show $A(R) \cap B(\bar{R}) = A(R) \cap A(R') = B(\bar{R}) \cap B(\bar{R}') = \emptyset$  for all $R,R' \in \mathcal{R}, \bar{R}, \bar{R}' \in \mathcal{\bar{R}}$, then the lemma follows directly from the definition of~$C$.

First, let $R \in \mathcal{R}$ and $\bar{R} \in \bar{\mathcal{R}}$.
By contradiction assume $R \cap B(\bar{R}) \neq \emptyset$.
Let~$i$ be such that $\bar{R} \in \bar{\mathcal{R}}(i)$.
Note that $R \cap B(\bar{R}) \subseteq R \cap P^{(i)}$ and hence $R$ intersects~$P^{(i)}$.
If $i_R \in \ancestor{i}$, then $R \not\subseteq P^{(i)}$ and hence $P^{(i)} \subseteq R$ by \cref{ass:request-light-laminarity}. This implies $\mathcal{R}(i) = \emptyset$ and hence $\bar{\mathcal{R}}(i) = \emptyset$, which contradicts $\bar{R} \in \bar{\mathcal{R}}(i)$.
If $i_R = i$, then $R \cap \bar{R} = \emptyset$ because $\bar{R} \subseteq P^{(i)} \setminus \bigcup_{R' \in \mathcal{R}(i)} R'$.
Thus $i_R \in  \descendant{i}$ and more specifically $i_R \in \descendant{j_{\bar{R}}}$, where $j_{\bar{R}}$ is the unique child of $i$ that intersects with $\bar{R}$.
But this implies $R \subseteq P^{(i')}$ for some $i' \in \child{j_{\bar{R}}}$.
Thus $\mathcal{R}(i') \neq \emptyset$ by \cref{lem:no-requests-in-children}, and thus $B(\bar{R}) \subseteq \bar{R} \setminus P^{(i')}$ by construction of $B(\bar{R})$.

Second, let $R,R' \in \mathcal{R}$. Then, $R \cap R' \neq \emptyset$ since requests are disjoint and hence $A(R) \cap A(R') \neq \emptyset$. 

Finally, let $\bar{R},\bar{R}' \in \mathcal{\bar{R}}$ and $i, i'$ such that $\bar{R} \in \mathcal{\bar{R}}(i)$ and $\bar{R}' \in \mathcal{\bar{R}}(i')$. It is easy to see that if $i'$ is not a descendant of $i$ or vice versa, then $\bar{R} \cap \bar{R}' = \emptyset$. Hence, assume w.l.o.g. assume that $i' \in \descendant{i}$. Then there exists a child of $i$, say $i'' \in \child{i}$, such that $P^{(i')} \subseteq P^{(i'')}$. Morever, from $\mathcal{\bar{R}}(i') \neq \emptyset$ we deduce that $\mathcal{R}(i') \neq \emptyset$ and due to \cref{lem:no-requests-in-children} it follows that $\mathcal{R}(i'') \neq \emptyset$. Therefore, $ B(\bar{R}) \cap P^{(i')}  = \emptyset$ which implies $B(\bar{R}) \cap B(\bar{R}') = \emptyset$. 
\end{proof}

\restateLemXBoundedOptt*
\begin{proof}
  An implication of \cref{lem:line-partition} is that $A, B, C$ are pairwise disjoint and hence
  \begin{align*}
    \sum_{e \in \optt} c(e) & \geq \sum_{e \in A \cap \optt} c(e) + \sum_{e \in B \cap \optt} c(e) + \sum_{e \in C \cap \optt} c(e)
  \end{align*}
  Moreover, \cref{lem:line-partition} states that all sets $A(R)$ for $R \in \mathcal{R}$ and all sets
  $B(\bar{R})$ for $\bar{R} \in \bar{\mathcal{R}}$ are pairwise disjoint and therefore \cref{lem:heavy-intervals-budget,lem:light-intervals-budget} imply that $\sum_{e \in (A \cup B) \cap \optt} c(e) \geq \frac{1}{3} \sum_{e \in A \cup B} c(e)$.
  The lemma follows from plugging in the definition of $x$.
\end{proof}

  \subsection{Consequence of exponential decay (used in proof of \cref{lem:x-beneath-heavy})}
  \label{apx:heavy-bound}

  \begin{apxlemma}\label{lem:expo-decay-opto}
    Let $i \in \HEAVY$. Then $\sum_{j \in \child{i}} c(P^{(j)} \cap \opto) \leq \frac{1}{4} \cdot c(P^{(i)} \cap \opto)$.
  \end{apxlemma}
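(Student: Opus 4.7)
The plan is to chain two heaviness inequalities, using the parity-based characterization from \cref{lem:heavy-light} to convert the $X^{(k)}$-heaviness built into \algUnknownSto into $\opto$-heaviness at $i$ and $\opto$-lightness at each child.

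The first ingredient I would invoke is that since $i \in \HEAVY$, \cref{lem:heavy-light} identifies $P^{(i)} \cap X^{(i)}$ with $P^{(i)} \cap \opto$. Combined with the $X^{(i)}$-heaviness of $P^{(i)}$ guaranteed by the greedy choice in \algUnknownSto, this yields $c(P^{(i)} \setminus \opto) \leq \tfrac{1}{2} c(P^{(i)} \cap \opto)$, i.e., $P^{(i)}$ is $\opto$-heavy.

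The second ingredient is that every $j \in \child{i}$ lies in $\LIGHT$ by a parity flip ($|T[j]| = |T[i]|+1$), so \cref{lem:heavy-light} now equates $P^{(j)} \cap X^{(j)}$ with $P^{(j)} \setminus \opto$. Together with the $X^{(j)}$-heaviness of $P^{(j)}$ this gives the reverse inequality $c(P^{(j)} \cap \opto) \leq \tfrac{1}{2} c(P^{(j)} \setminus \opto)$ for every child $j$.

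To close the argument I would use laminarity (\cref{lem:laminarity}) to note that the children's intervals are pairwise disjoint subintervals of $P^{(i)}$, so that $\bigcup_{j \in \child{i}} (P^{(j)} \setminus \opto) \subseteq P^{(i)} \setminus \opto$ and hence $\sum_{j \in \child{i}} c(P^{(j)} \setminus \opto) \leq c(P^{(i)} \setminus \opto)$. Chaining the three bounds in the order
\[
\sum_{j \in \child{i}} c(P^{(j)} \cap \opto) \;\leq\; \tfrac{1}{2} \sum_{j \in \child{i}} c(P^{(j)} \setminus \opto) \;\leq\; \tfrac{1}{2}\, c(P^{(i)} \setminus \opto) \;\leq\; \tfrac{1}{4}\, c(P^{(i)} \cap \opto)
\]
produces the desired factor $\tfrac{1}{4}$. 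I do not expect any real obstacle; the factor of $\tfrac{1}{4}$ is precisely the product of the two halves coming from the heaviness of the parent $P^{(i)}$ and the inverted heaviness of each child $P^{(j)}$. The only mild case to keep in mind is $i = 0$, but the lemma is only applied downstream for indices in $\bar{Y} \subseteq W \setminus \{0\}$, so this does not arise.
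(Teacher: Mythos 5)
Your proof is correct and takes essentially the same route as the paper's: both chain the $\opto$-lightness of the children (from \cref{lem:heavy-light}), the disjointness of the children's subintervals of $P^{(i)}$, and the $\opto$-heaviness of $P^{(i)}$ to obtain the factor $\tfrac14$. The only superfluous bit is the remark about $i=0$: since $|T[0]|=0$ is even, $0\in\LIGHT$, so the hypothesis $i\in\HEAVY$ already rules out $i=0$.
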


\begin{proof}
  Let $i \in \HEAVY$. Then
  \begin{align*}
    \sum_{j \in \child{i}} \!\! c(P^{(j)} \cap \opto) \leq \frac{1}{2} \sum_{j \in \child{i}} c(P^{(j)} \setminus \opto) \leq \frac{1}{2} c(P^{(i)} \setminus \opto) \leq \frac{1}{4} c(P^{(i)} \cap \opto),
  \end{align*}
  where the first inequality follows from the fact that $P^{(j)}$ is $\opto$-light for all \mbox{$j \in \child{i} \subseteq \LIGHT$}, the second follows from the fact that the intervals $P^{(j)}$ of the children are disjoint and all contained in $P^{(i)}$, and the last inequality follows from the fact that $P^{(i)}$ is $\opto$-heavy.
\end{proof}
  
  \subsection{Proof of \cref{lem:x-beneath-light}}
\restateLemXBeneathLight*
\begin{proof}
  We distinguish two cases:
  \begin{description}
    \item[Case 1 ($P^{(i)} \cap \bigcup_{R \in \mathcal{R}} R  = \emptyset$):]
    Let $j' \in \HEAVY$ be the parent node of $i$ in $T$ and let $i' \in \LIGHT$ be the parent node of $j'$ in $T$.
    Note that $i \in \bar{Y}$ implies that $i', j' \in Z$.
    Since heavy nodes are only tagged for removal when their intervals intersect with requests, we conclude that $P^{(j')} \cap  \bigcup_{R \in \mathcal{R}(i')} R \neq \emptyset$.
    Because $P^{(i)} \subseteq P^{(j')} \setminus \bigcup_{R \in \mathcal{R}(i')} R$ and because $P^{(j')}$ starts and ends with a request edge by \cref{ass:request-heavy-prefixes}, there must be a non-request interval $\bar{R} \in \bar{\mathcal{R}}(i')$ with $P^{(i')} \subseteq \bar{R}$.
    Then $i \notin Z$ implies $P^{(i)} \subseteq B(\bar{R})$.
    Thus $P^{(i)} \subseteq B$ implies the statement of the lemma.
    
    \item[Case 2 ($P^{(i)} \cap R \neq \emptyset$ for some $R \in \mathcal{R}$):]  
    Note that $i \in Y$ implies $\mathcal{R}(i) = \emptyset$. 
    By \cref{lem:no-requests-in-children}, $i_R$ must be an ancestor of $i$ in $T$ and hence $R \not\subseteq P^{(i)}$.
    Thus $P^{(i)} \subseteq R$ by \cref{ass:request-light-laminarity}.
    \begin{itemize}
      \item  If $j^*_R$ is not an ancestor of $i$, then $P^{(i)} \subseteq R \setminus P^{(j^*_R)} = A(R) \subseteq A$ and we are done because $x(e) = 1/3$ for all $e \in A$.
    \item If $j^*_R$ is an ancestor of $i$, then $P^{(i)} \subseteq R \cap P^{(j^*_R)}$.
    Note that $R \cap P^{(j^*_R)} \subseteq C$, as $R \cap A(R') = \emptyset$ for all $R' \in \mathcal{R} \setminus \{R\}$ and $R \cap B(\bar{R}) = \emptyset$ for all $\bar{R} \in \bigcup_{i' \in \LIGHT} \bar{\mathcal{R}}(i')$.
    Thus $P^{(i)} \subseteq C$ in this latter case. 
    Then
    \begin{align*}
      \sum_{e \in P^{(i)}} c(e)x(e) & = c(P^{(i)} \cap \optt) = c(P^{(i)} \setminus \opto) \geq \frac{1}{3} c(P^{(i)}), 
    \end{align*}
    where the first equality follows from $P^{(i)} \subseteq C$, the second equality follows from $P^{(i)} \subseteq R$ and the final inequality follows from the fact that $P^{(i)}$ is $\opto$-light.\qedhere
    \end{itemize}
  \end{description}
\end{proof}

\subsection{Proof of \cref{lem:x-beneath-heavy-no-children}}\label{apx:subsec:noChildren}

\restateLemXBeneathHeavyNoChildren*

\begin{proof}

We distinguish two cases.

\begin{description}

\item[Case 1 (There is $R \in \mathcal{R}$ with $P^{(j)} \cap R \neq \emptyset$):]
We show that in this case $P^{(j)} \subseteq A(R)$.
First assume that $P^{(j)} \not\subseteq R$. Then by \cref{ass:request-heavy-prefixes} there are two distinct requests $R', R'' \in \mathcal{R}$ such that the first edges of $P^{(j)}$ is in $R'$ and its last edge is in $R''$. Further, by \cref{ass:request-light-laminarity}, $R', R'' \in \mathcal{R}(i)$, where $i$ is the parent node of $j$ in $T$.
Since $R' \neq R''$, there is a non-request interval $\bar{R} \in \bar{\mathcal{R}}(i)$ with $\bar{R} \subseteq P^{(j)}$. 
But then $i^*_{\bar{R}} \in \child{j} \cap Z$, a contradiction. We conclude that $P^{(j)} \subseteq R$.

Now recall the claim that we showed in the proof of \cref{lem:x-beneath-heavy}. This claim of course also holds true in context of the present lemma, whose premise is a special case of that of \cref{lem:x-beneath-heavy}. 
  \begin{claim*}
    Let $R \in \mathcal{R}$ with $R \cap P^{(j)} \neq \emptyset$. Then $R \cap P^{(j)} \subseteq A(R)$ or $R \subseteq P^{(i')}$ for some $i' \in \child{j}$.
  \end{claim*}
  Since $P^{(j)} \subseteq R$, the claim implies $P^{(j)} \subseteq A(R)$ and hence $x(e) \geq 1/3$ for all $e \in P^{(j)}$ by definition of $x$. This concludes the first case.
  
  \item[Case 2 ($P^{(j)} \cap R = \emptyset$ for all $R \in \mathcal{R}$):]
   Note that this implies that $P^{(j)} \subseteq L \setminus A = B \cup C$.
   We show that either $P^{(j)} \subseteq B$ or $P^{(j)} \subseteq C$.
   
   First assume $B \cap P^{(j)} \neq \emptyset$. Then there is $\bar{R} \in \bar{\mathcal{R}}$ with $P^{(j)} \cap B(\bar{R}) \neq \emptyset$. 
   Let $i'$ be such that $\bar{R} \in \bar{\mathcal{R}}(i')$.
   Since $P^{(j)}$ does not intersect with any request, $i$ must be an ancestor of $j$ in $T$ and $P^{(j)} \subseteq \bar{R}$. 
   Recall that $B(\bar{R}) = \bar{R} \setminus \bigcup_{i' \in \LIGHT(i) \cap Z} P^{(i')}$. Because $i' \notin Z$ for all $i' \in \child{j}$, either $P^{(j)} \subseteq B(\bar{R})$ or there is $i' \in \LIGHT(i) \cap Z$ such that $P^{(j)} \subseteq P^{(i')}$, implying $\cap B(\bar{R}) = \emptyset$. The latter yields a contradiction and hence  $P^{(j)} \subseteq B(\bar{R}) \subset B$.
   Therefore $x(e) \geq 1/3$ for all $e \in P^{(j)}$ by definition of $x$.
   
   Now assume $P^{(j)} \cap B = \emptyset$, which implies $P^{(j)} \subseteq C$. In this case
   $$c(P^{(j)}) \geq \frac{3}{2} c(P^{(j)} \cap \opto) = \frac{3}{2} c(P^{(j)} \cap \optt \cap C) = \frac{3}{2} \sum_{e \in P^{(j)}} c(e)x(e)$$
   where the first inequality follows from the fact that $P^{(j)}$ is $\opto$-heavy, the first equality follows from $R \cap P^{(j)} = \emptyset$ for all $R \in \mathcal{R}$, and the second equality follows from the definition of $x$.
   \end{description}
\end{proof}

\subsection{\cref{ass:request-light-laminarity,ass:request-heavy-prefixes} are without loss of generality}
\label{apx:assumptions-wlog}

If $\mathcal{R}$ violates \cref{ass:request-light-laminarity} or \cref{ass:request-heavy-prefixes}, we construct a modified set of requests as follows to obtain a modified request set $\mathcal{R}'$ which fulfills both assumptions and corresponds to second stage arrivals with lower cost than the actual $\optt$. We then run \algUnknownStt for this modified request set.

\medskip
\noindent\textsf{\textbf{Algorithm for Modifying $\mathcal{R}$}}\\
\textbf{(i)} Let $\mathcal{R}' := \mathcal{R}$.
\textbf{(ii)} While \cref{ass:request-light-laminarity} or \cref{ass:request-heavy-prefixes} is violated, apply one of the following rules:
\begin{enumerate}
  \item For $R \in \mathcal{R}'$ and $i \in \LIGHT$ with $R \setminus P^{(i)} = \emptyset$ and $P^{(i)} \setminus R \neq \emptyset$: Replace $R$ by $R' := R \setminus P^{(i)}$ (if $R' = \emptyset$, then remove $R$ entirely).
  \item For $j \in \HEAVY$ such that $P^{(j)} \cap \bigcup_{R \in \mathcal{R}'} R \neq \emptyset$ but  $P^{(j)}$ does not start and end with an edge of $\bigcup_{R \in \mathcal{R}'} R$: 
  Let $Q$ be the maximal prefix of $P^{(j)}$ not contained in $\bigcup_{R \in \mathcal{R}'} R$. Let $R$ be the request with $R \cap P^{(j)} \neq \emptyset$ that shares an endpoint with $Q$.
  Replace $R$ by $R' := R \cup Q$.
\end{enumerate}

It is easy to see that $|\mathcal{R}'| \leq |\mathcal{R}|$ and that $R_1 \cap R_2 \neq \emptyset$ for all $R_1, R_2 \in \mathcal{R}'$. 
We thus obtain a new set of requests $\mathcal{R}$' fulfilling \cref{ass:request-light-laminarity,ass:request-heavy-prefixes}.
We let $O' := \opto \Delta \bigcup_{R \in \mathcal{R}'} R$.

\begin{apxlemma}
$c(O') \leq c(\optt \cap \LINE)$.
\end{apxlemma}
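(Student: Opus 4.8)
The plan is to recast the inequality in terms of the $\opto$-gain of the union of all requests and then track that single quantity through the two modification rules. Write $U := \bigcup_{R \in \mathcal{R}} R$ and $U' := \bigcup_{R \in \mathcal{R}'} R$. First I would record the identity $c(\opto \Delta S) = c(\opto) - \g[\opto]{S}$, valid for every $S \subseteq E(\gro)$, which is immediate from $\opto \Delta S = (\opto \setminus S) \cup (S \setminus \opto)$ and the definition of gain. Next, using $\opto \subseteq \LINE$ (\cref{lem:structure-opt-ontheline}) together with $U = (\opto \Delta \optt) \cap \LINE$, a short set computation gives $\opto \Delta U = \optt \cap \LINE$, hence $c(\optt \cap \LINE) = c(\opto) - \g[\opto]{U}$. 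Since also $c(O') = c(\opto \Delta U') = c(\opto) - \g[\opto]{U'}$, the lemma is equivalent to the inequality $\g[\opto]{U'} \ge \g[\opto]{U}$.

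To prove this I would induct on the number of rule applications performed by the modification procedure, showing that one application never decreases the $\opto$-gain of the current union of requests. Throughout the procedure the requests stay pairwise disjoint, so the gain of the union splits additively and each rule changes the union by exactly one disjoint block. A rule of type~1 shrinks a request $R$ to $R \setminus P^{(i)}$ for a node $i \in \LIGHT$ whose interval $P^{(i)}$ forms an \cref{ass:request-light-laminarity} violation with $R$, i.e.\ crosses $R$; since two crossing intervals on the line meet in an interval that contains exactly one endpoint of $P^{(i)}$, the removed set $Q := R \cap P^{(i)}$ is a \emph{prefix} of $P^{(i)}$ in the sense of \cref{lem:prefix}. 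Because $P^{(i)}$ is, by construction of \algUnknownSto, an $X^{(i)}$-heavy $X^{(i)}$-alternating path maximizing $\g[X^{(i)}]{\cdot}$, \cref{lem:prefix} gives $\g[X^{(i)}]{Q} \ge 0$; and since $i \in \LIGHT \setminus \{0\}$ (a crossing pair forces $i \neq 0$), \cref{lem:heavy-light} yields $P^{(i)} \cap X^{(i)} = P^{(i)} \setminus \opto$, so on $P^{(i)}$ we have $\g[X^{(i)}]{Q} = -\g[\opto]{Q}$ for every $Q \subseteq P^{(i)}$. Hence $\g[\opto]{Q} \le 0$ and removing $Q$ does not decrease the $\opto$-gain of the union. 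A rule of type~2 is symmetric: it enlarges a request by a prefix $Q$ of some $P^{(j)}$ with $j \in \HEAVY$ (note $0 \in \LIGHT$, so $j \neq 0$) consisting of edges not in any current request; \cref{lem:prefix} gives $\g[X^{(j)}]{Q} \ge 0$, and \cref{lem:heavy-light} ($P^{(j)} \cap X^{(j)} = P^{(j)} \cap \opto$) gives $\g[X^{(j)}]{Q} = \g[\opto]{Q}$, so $\g[\opto]{Q} \ge 0$ and adding $Q$ does not decrease the $\opto$-gain either.

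Putting the pieces together, after the whole procedure $\g[\opto]{U'} \ge \g[\opto]{U}$, so $c(O') = c(\opto) - \g[\opto]{U'} \le c(\opto) - \g[\opto]{U} = c(\optt \cap \LINE)$, which is the claim. The substantive content is entirely supplied by \cref{lem:prefix,lem:heavy-light}: prefixes of the greedily chosen paths have non-negative gain against the matching they were selected against, and this sign flips to an $\opto$-gain of opposite sign exactly on light intervals. I expect the only delicate points to be bookkeeping ones: verifying the set identity $\opto \Delta U = \optt \cap \LINE$, confirming that each rule alters $U$ by exactly one block disjoint from the rest of $U$ (so the gain really changes by $\pm\g[\opto]{Q}$), and checking that the geometry triggering each rule makes the block a genuine prefix of the relevant $P^{(i)}$ respectively $P^{(j)}$.
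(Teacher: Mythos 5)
Your proof is correct and follows essentially the same route as the paper's: both arguments rest on \cref{lem:prefix} (the modified block is a prefix of the relevant $P^{(i)}$ or $P^{(j)}$, hence has nonnegative gain against $X^{(i)}$ resp.\ $X^{(j)}$) combined with \cref{lem:heavy-light} to translate that sign into an $\opto$-gain of the right sign. The only difference is packaging: you track the global quantity $\g[\opto]{\bigcup_R R}$ monotonically through the rule applications, whereas the paper makes the equivalent blockwise cost comparison $c(Q \cap O') \le c(Q \cap \optt)$ on each modified piece.
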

\begin{proof}
  Consider an application of Rule~1 to $i \in \LIGHT$ and $R \in \mathcal{R}$.
  Let $\bar{R} := R \cap P^{(i)}$.
  Observe that $\bar{R}$ is a prefix of $P^{(i)}$ and therefore $\g[X^{(i)}]{\bar{R}} > 0$. 
  Note that $X^{(i)} \cap P^{(i)} = P^{(i)} \setminus \opto$ because $i \in \LIGHT$. 
  We conclude $c(\bar{R} \cap O') = c(\bar{R} \cap \opto) < c(\bar{R} \setminus \opto) = c(\bar{R} \cap \optt)$.
  
  Consider an application of Rule~2 to $j \in \HEAVY$ and $R \in \mathcal{R}$.
  Note that $\g[X^{(j)}]{Q} > 0$ because $Q$ is a prefix of $P^{(j)}$.
  Note further that $X^{(j)} \cap P^{(j)} = P^{(j)} \cap \opto$ because $j \in \HEAVY$.
  We conclude $c(Q \cap O') = c(Q \setminus \opto) < c(Q \cap \opto) = c(Q \cap \optt)$. 
  
  Hence, applications of Rule~1 or~2 do not increase the cost of $O'$ compared to $\optt$.
\end{proof}

We run \algUnknownStt for this modified set of requests, i.e., we use $\mathcal{R}'$ instead of $\mathcal{R}$ to determine the matching $M'$. As before, let $Z$ be the set of indices of the removed edges, let $Y := W \setminus (Z \cup \{0\})$, and $\bar{Y} := \{i \in Y : T[i] \setminus \{0, i\} \subseteq Z\}$.
Then $c(\algt) \leq c(\optt) + 3 \sum_{i \in \bar{Y}} c(P^{(i)})$ by \cref{lem:bound-against-barY} (note that the proof of the lemma does not make any assumptions about $Z$).
Furthermore, note that the proofs of \cref{lem:heavy-intervals-budget,lem:light-intervals-budget,lem:x-bounded-optt,lem:x-beneath-heavy,lem:x-beneath-light,lem:x-beneath-heavy-no-children} work without alteration when replacing $\mathcal{R}$ by $\mathcal{R}'$ and $\optt$ by $O'$.
We thus obtain
$\sum_{i \in \bar{Y}} c(P^{(i)}) \leq 6 c(O') \leq 6c(\optt \cap L)$ (or $\sum_{i \in \bar{Y}'} c(P^{(i)}) \leq 3c(\optt \cap L)$ in case of the improved version of \algUnknownStt).
Hence $c(\algt) \leq 19 c(\optt)$ (or $c(\algt) \leq 10 c(\optt)$ in case of the improved version of \algUnknownStt).

\end{document}